\documentclass{amsart}

%%%%%%%%%%%%%%%%%%%%%%%%%%% setlength %%%%%%%%%%%%%%%%%%%%%%%%%%%
\setlength{\textwidth}{170mm} \setlength{\textheight}{8.0in} \setlength{\oddsidemargin}{-0.8cm} \setlength{\evensidemargin}{-0.8cm} \setlength{\footskip}{30pt} \addtolength{\textheight}{.695in} \addtolength{\voffset}{-.55in} % delete this line for 10pt
\setlength{\parindent}{15pt} %no indent for paragraph
%%%%%%%%%%%%%%%%%%%%%%%%%%% setlength %%%%%%%%%%%%%%%%%%%%%%%%%%%

%%%%%%%%%%%%%%%%%%%%%%%%%%% usepackage %%%%%%%%%%%%%%%%%%%%%%%%%%
%\usepackage[notref,notcite]{showkeys}
\usepackage{amssymb}
\usepackage{amsthm}
\usepackage{amsfonts}
\usepackage{amsmath}
\usepackage{pmboxdraw}
\usepackage{verbatim} % for comment
\usepackage{graphicx}
\usepackage{color}
\usepackage[colorlinks=true, citecolor=blue, filecolor=black, linkcolor=black, urlcolor=black]{hyperref}
\usepackage{cite}
\usepackage[normalem]{ulem}
\usepackage{subcaption}
\usepackage{bbm}
\usepackage{bm}
\usepackage{mathtools}
\usepackage{mathrsfs}
%\mathtoolsset{showonlyrefs}
\usepackage{todonotes}
\usepackage{kantlipsum}
\allowdisplaybreaks

\usepackage{xparse}

\ExplSyntaxOn
\NewDocumentCommand{\MeijerG}{smmmm}
 {
  \IfBooleanTF{#1}
   {
    \vic_meijerg:nnnnnn { #2 } { #3 } { #4 } { #5 } { small } { }
   }
   {
    \vic_meijerg:nnnnnn { #2 } { #3 } { #4 } { #5 } { } { \; }
   }
 }

\seq_new:N \l__vic_meijerg_args_in_seq
\seq_new:N \l__vic_meijerg_args_out_seq

\cs_new_protected:Nn \vic_meijerg:nnnnnn
 {
  \seq_set_split:Nnn \l__vic_meijerg_args_in_seq { | } { #3 }
  \seq_clear:N \l__vic_meijerg_args_out_seq  
  \seq_map_inline:Nn \l__vic_meijerg_args_in_seq
   {
    \seq_put_right:Nn \l__vic_meijerg_args_out_seq
     {
      \begin{#5matrix} ##1 \end{#5matrix}
     }
   }
  G\sp{#1}\sb{#2}
  \left(
  \seq_use:Nn \l__vic_meijerg_args_out_seq { #6\middle|#6 }
  #6\middle|#6
  #4
  \right)
 }
\ExplSyntaxOff

%%%%%%%%%%%%%%%%%%%%%%%%%%% usepackage %%%%%%%%%%%%%%%%%%%%%%%%%%

%%%%%%%%%%%%%%%%%%%%%%%%%%%%%%% def %%%%%%%%%%%%%%%%%%%%%%%%%%%%%

\newcommand{\RN}[1]{%
	\textup{\uppercase\expandafter{\romannumeral#1}}%
}

\def\cF{\mathcal{F}}
\def\cL{\mathcal{L}}

\def\cK{\mathcal{K}}

\def\C{\mathbb{C}}

\def\E{\mathbf{E}}
\def\N{\mathbb{N}}
\def\P{\mathbf{P}}
\def\R{\mathbb{R}}

\def\zbar{\overline{z}}
\def\wbar{\overline{w}}

\newcommand{\erfc}{\operatorname{erfc}}

\newcommand{\re}{\operatorname{Re}}
\newcommand{\im}{\operatorname{Im}}

%\newcommand{\bfR}{\mathbf{R}}
%%%%%%%%%%%%%%%%%%%%%%%%%%%%%%% def %%%%%%%%%%%%%%%%%%%%%%%%%%%%%

%%%%%%%%%%%%%%%%%%%%%%%%%% theoremstyle %%%%%%%%%%%%%%%%%%%%%%%%%
\theoremstyle{plain}
\newtheorem*{thm*}{Theorem}
\newtheorem{thm}{Theorem}[section]
\newtheorem{lem}[thm]{Lemma}
\newtheorem{cor}[thm]{Corollary}
\newtheorem{prop}[thm]{Proposition}
\newtheorem*{prop*}{Proposition}
\newtheorem*{lem*}{Lemma}
\newtheorem{rem}[thm]{Remark}

%[section]

\theoremstyle{definition}
\newtheorem*{eg*}{Example}
\newtheorem*{egs*}{Examples}
\newtheorem*{Q*}{Question}

\theoremstyle{remark}
\newtheorem*{rmk*}{Remark}
\newtheorem*{rmks*}{Remarks}
%%%%%%%%%%%%%%%%%%%%%%%%%% theoremstyle %%%%%%%%%%%%%%%%%%%%%%%%%

%%%%%%%%%%%%%%%%%%%%%%%%%%%%%%%% MR %%%%%%%%%%%%%%%%%%%%%%%%%%%%%
%\makeatletter
%\def\@MRnumber{}
%\def\@scanforMR#1 #2\endscan{%
% \def\@MRnumber{#1}%
% }
%\def\reviewed#1{\ifx#1\stop \let\next=\relax \else \ifx#1(\advance\count255 by1 \else\let\next=\relax \fi \let\next=\reviewed \fi \next}
%\def\MR#1{\relax
 %\ifhmode\unskip\spacefactor3000 \space\fi
 %\count255=0 \reviewed#1\stop
 %\ifnum \count255>0
 %{\@scanforMR#1\endscan\href{http://www.ams.org/mathscinet-getitem?mr=\@MRnumber}{MR#1}}
 %\else
 %{\href{http://www.ams.org/mathscinet-getitem?mr=#1}{MR#1}}
 %\fi}
%\makeatother
%%%%%%%%%%%%%%%%%%%%%%%%%%%%%%%% MR %%%%%%%%%%%%%%%%%%%%%%%%%%%%%

%\newcommand\arXiv[1]{\href{http://arxiv.org/abs/#1}{arXiv:#1}}
%\newcommand\arxiv[1]{\href{http://arxiv.org/abs/#1}{arXiv: #1}}

\numberwithin{equation}{section}

\begin{document}
\title[Determinantal structure of the overlap of ISUE]{Determinantal structure of the overlaps for \\ 
induced spherical unitary ensemble
}
%%%%%%%%%%%%%%%%%%%%%%%%%%%%% author %%%%%%%%%%%%%%%%%%%%%%%%%%%%

\author{Kohei Noda}
\address{Institute of Mathematics for Industry, Kyushu University, West Zone 1, 744 Motooka, Nishi-ku, Fukuoka 819-0395, Japan}
\email{k-noda@imi.kyushu-u.ac.jp}

%%%%%%%%%%%%%%%%%%%%%%%%%%%%% author %%%%%%%%%%%%%%%%%%%%%%%%%%%%

%\thanks{}
\subjclass[2020]{Primary 60B20; Secondary 33C45}

\date{\today}

%%%%%%%%%%%%%%%%%%%%%%Abstract%%%%%%%%%%%%%%%%%%%%%%%%%%%%%%%%%%%
\begin{abstract}
In this note, we study the determinantal structure of the $k$-th conditional expectation of the overlap for induced spherical unitary ensemble. 
We will show the universality for the scaling limits of the $k$-the conditional expectation of the overlap in the three regimes, strongly non-unitary, weakly non-unitary, and the singular origin regimes.
\end{abstract}
%%%%%%%%%%%%%%%%%%%%%%Abstract%%%%%%%%%%%%%%%%%%%%%%%%%%%%%%%%%%%

\maketitle
%\tableofcontents
%%%%%%%%%%%%%%%%%%%%%%%introduction%%%%%%%%%%%%%%%%%%%%%%%%%%%%%%%%%
%%%%%%%%%%%%%%%%%%%%%%%introduction%%%%%%%%%%%%%%%%%%%%%%%%%%%%%%%%%

\section{Introduction}\label{Section1}
Recently, the investigation of overlap, defined by left and right eigenvectors, is one of the hottest topic in random matrix theory.
The overlap was originally introduced by Chalker and Mehling \cite{CM98,CM00}. 
While overlap is a trivial quantity for symmetric classes, such as symmetric/Hermitian/symplectic matrices where left and right eigenvectors match, resulting in the overlap matrix becoming the identity matrix, it becomes a non-trivial object for non-Hermitian matrices. 
Subsequent to a groundbreaking paper by \cite{BD21}, the mathematical analysis of overlap in random matrices has seen significant development in \cite{ATTZ, AFK20, BNST17, BZ18, BSV17, CS, CEHS23, CR22, D21v1, D21v2, D23, EC23, EKY21, F18, FT21, N18, WTF23, Y20}.

The overlap plays a crucial role in describing the stochastic dynamics of eigenvalues for non-Hermitian matrix-valued Brownian motion, \cite{BD21, EKY21, GW18, Y20}. 
In this note, we study the integrability of overlaps for the induced spherical unitary ensemble. 
Specifically, we explore the $k$-th conditional expectation of the overlap for the induced spherical unitary ensemble in the strongly and weakly non-unitary, as well as the singular origin regimes. 
The weakly non-unitary regime plays a role in interpolating between the spherical ensemble (non-normal matrix) and the circular unitary ensemble (unitary matrix). 
This regime has been investigated from the viewpoint of the limiting point process for eigenvalues of non-Hermitian random matrices or two-dimensional Coulomb gases, \cite{AP14, AB23, SC22, SFv3, SS23, FKS97}. 
Building upon the framework established in \cite{ATTZ}, we derive scaling limits for the conditional expectation of both diagonal and off-diagonal overlaps for the induced spherical unitary ensemble in these three regimes. 
As a consequence, we confirm the universality of overlaps.

Before concluding this introduction, let us highlight recent developments in the study of non-Hermitian random matrices and our methodology in this note. The method and motivation in this note, as previously mentioned, are strongly inspired by \cite{ATTZ}. Their exploration of the $k$-th conditional expectation of overlaps for the Ginibre unitary ensemble, where they demonstrate scaling limits and observe algebraic decay for the conditional expectation of both diagonal and off-diagonal overlaps. 
On the other hand, the groundbreaking paper by \cite{BD21} revealed that the diagonal overlap for the Ginibre unitary ensemble converges to the inverse Gamma distribution with a parameter of one in law. 
They also provided correlations for both on diagonal and off diagonal overlaps, as well as diagonal and off-diagonal overlaps at microscopic and mesoscopic scales using a robust probabilistic method.

While our method does not provide results in a distributional sense, we offer explicit quantities for both the conditional expectation of diagonal and off-diagonal overlaps beyond Gaussian-type non-Hermitian matrices. Additionally, for the diagonal overlap, as we will discuss later, we introduce a new point process associated with the weight function deformed by the overlap, distinct from point insertion. Consequently, we believe our results are not only intriguing from the perspective of overlap but also offer valuable insights into point processes. 
We also study the overlaps for the induced spherical unitary ensemble in the weakly non-unitary regime.
It is worth noting that this regime has already been explored for other ensembles, as seen in works such as \cite{F18, FT21,CW24,CFW24}. These studies focused on the full density of on-diagonal overlap for the Ginibre orthogonal \cite{WTF23} and elliptic Ginibre orthogonal ensembles in both strongly and weakly non-unitary regimes. However, their method relies on the supersymmetric method, leaving the results for off-diagonal overlap in both regimes unknown. 
\cite{D21v1, D23} investigated overlaps for the spherical and truncated unitary ensembles, and our results complement their findings for the spherical unitary ensemble. Regarding the overlap of planar symplectic ensembles, we refer only to \cite{AFK20, D21v2}. For studies on the universality of overlap in more general non-Hermitian random matrices, see \cite{CS, CEHS23, EC23}.

\subsubsection*{Organisation in this paper}
In section \ref{Section2}, we introduce induced spherical unitary ensemble and the overlaps.
Also, we will explain three regimes, strongly non-unitary regime, weakly non-unitary regime, and at the singular origin, for eigenvalues of induced spherical unitary ensemble. 
In section \ref{Section3}, we state our main results. 
In section \ref{Section4}, we establish a family of the planar orthogonal polynomials associated with a spherical weight function deformed by the diagonal overlap, and as a consequence, we obtain a finite $N$-kernel. 
However, our finite $N$-kernel is not appropriate for large $N$-analysis. 
To overcome this difficulty, we will simplify the finite $N$-kernel. 
In section \ref{Section5}, based on the simplified finite $N$-kernel, we will compute the asymptotic behavior of that depending on some parameters. 
In section \ref{Section6}, we give some concluding remarks, and we complete this paper. 

\section{Preliminaries}\label{Section2}
In this section, we introduce induced spherical ensemble, and we collect some facts for that. 
For the recent developments of the progress on non-Hermitian random matrices, in particular, integrable non-Hermitian random matrix models, and instructive surveys, we refer to \cite{SF22v1,SF22v2}. 
\subsection{Induced spherical unitary ensemble}
For $n\geq N$ and $L\geq 0$, the induced spherical unitary ensemble is defined by
\begin{equation}
\label{Mdense}
d\P_N(\mathbf{G}_N)=C_N\frac{\det\left(\mathbf{G}_N\mathbf{G}_N^\dagger\right)^{L}}{\det\left(\mathbf{1}_N+\mathbf{G}_N\mathbf{G}_N^\dagger\right)^{n+N+L}}d[\mathbf{G}_N],\quad
C_{N}=\frac{1}{\pi^{N^2}}\prod_{k=1}^N\frac{\Gamma(k)\Gamma(n+N+k)}{\Gamma(L+k)\Gamma(n-N+k)},
\end{equation}
where parameters $n$ and $L$ satisfy $n\geq L$ and $L\geq 0$, and they may depend on $N$.
Induced spherical unitary ensemble can be realized as a random matrix 
$\mathbf{G}_N=\mathbf{U}_N(\mathbf{Y}_N^\dagger\mathbf{Y}_N)^{\frac{1}{2}}$, 
where $\mathbf{U}_N$ is a circular unitary matrix uniformly distributed to the Haar measure on the unitary group, and $\mathbf{Y}_N$ is defined by $\mathbf{Y}_N:=\mathbf{X}\mathbf{A}^{-1/2}$. 
Here, $\mathbf{X}$ is an $(n+L)\times N$ rectangular Ginibre ensemble, whose elements are independent identically distributed standard complex Gaussian random variables, and $\mathbf{A}$ is an $N\times N$ complex Wishart matrix.    
For the detailed construction of induced spherical ensemble and other ensembles such as the induced orthogonal or symplectic ensembles, we refer to \cite{FF11,MA13,MA17}, and see also \cite[Section 1.]{SFv3} for a clear and brief explanation.
The joint probability distribution function of the eigenvalues associated with \eqref{Mdense} is given by 
\begin{equation}
\label{Edensity}
d\P_N(\boldsymbol{z}_{(N)})=\frac{1}{Z_N}\prod_{1\leq j<k\leq N}|z_j-z_k|^2\prod_{j=1}^{N}e^{-NQ(z_j)}dA(z_j),
\end{equation}
for $\boldsymbol{z}_{(N)}=(z_1,\dots,z_N)\in\C^N$, where 
\begin{equation}
\label{Potential}
Q(z)=\frac{n+L+1}{N}\log(1+|z|^2)-\frac{2L}{N}\log|z|.
\end{equation}
Here, $dA(z)$ is the planar Lebesgue measure on the complex plane $\C$. 
Following \cite{SS23}, we introduce the three regimes: 
\textbf{strongly non-unitary, weakly non-unitary}, and \textbf{the singular origin regimes}.
From the basics of logarithmic potential theory \cite{HM13,SFT97}, the empirical measure of the eigenvalues for the induced spherical unitary ensemble tends to be distributed on the spectral droplet 
\begin{equation}
\label{SD}
S=\{z\in\C:r_1\leq |z|\leq r_2\},\quad r_1=\sqrt{\frac{L}{n}},\quad r_2=\sqrt{\frac{N+L}{n-N}},
\end{equation}
with the density
\[
\Delta Q(z)\mathbf{1}_{S}\sim\frac{n+L}{N}\frac{1}{(1+|z|^2)^2}. 
\]
Then, depending on a way to choose the parameters $n,L$, the spectral droplet \eqref{SD} is classified as follows.
\begin{description}
\item[(1) Strongly non-unitary regime]
We set the parameters $L,n$ as follows: for a fixed $a,b \geq 0$, 
\begin{equation}
\label{Para1}
L=aN,\quad n=(b+1)N.
\end{equation}
Then, with these parameters, the inner and outer radii of \eqref{SD} satisfy
\[
r_1=\sqrt{\frac{a}{b+1}}+O(N^{-1}),\quad r_2=\sqrt{\frac{a+1}{b}}+O(N^{-1}),\quad\text{as $N\to\infty$}.
\]
In the strongly non-unitary regime, the width of spectral droplet is $O(1)$. 
%%%%%%%%%%%%%%%%%%%%%%%%%%%%%%%%
\item[(2) Weakly non-unitary regime]
Fora fixed $\rho>0$, we set the parameters as 
\begin{equation}
\label{Para2}
L=\frac{N^2}{\rho^2}-N,\quad n=\frac{N^2}{\rho^2}.
\end{equation}
Then, the inner and outer radii of \eqref{SD} satisfy 
\[
r_1=1-\frac{\rho^2}{2N}+O(N^{-2})\quad
r_2=1+\frac{\rho^2}{2N}+O(N^{-2}),\quad\text{as $N\to\infty$}. 
\]
In the weakly non-unitary regime, the spectral droplet \eqref{SD} accumulates about the unit circle with width $O(N^{-1})$. 
Roughly speaking, the eigenvalue distribution in this regime macroscopically seems to be the circular unitary ensemble. 
%%%%%%%%%%%%%%%%%%%%%%%%%%%%%%%%
\item[(3) The singular origin regime]
For a fixed $b>0$, we set the parameters as 
\begin{equation}
\label{Para3}
L>0, \quad n=(b+1)N.
\end{equation}
Then, the radii of the spectral droplet \eqref{SD} satisfy 
\[
r_1=O(N^{-1}),\quad
r_2=\frac{1}{\sqrt{b}}+O(N^{-1}),\quad \text{as $N\to\infty$}.
\]
\end{description}
\begin{figure}[htbp]
  \begin{minipage}[b]{0.3\linewidth}
    \centering
    \includegraphics[keepaspectratio, scale=0.3]{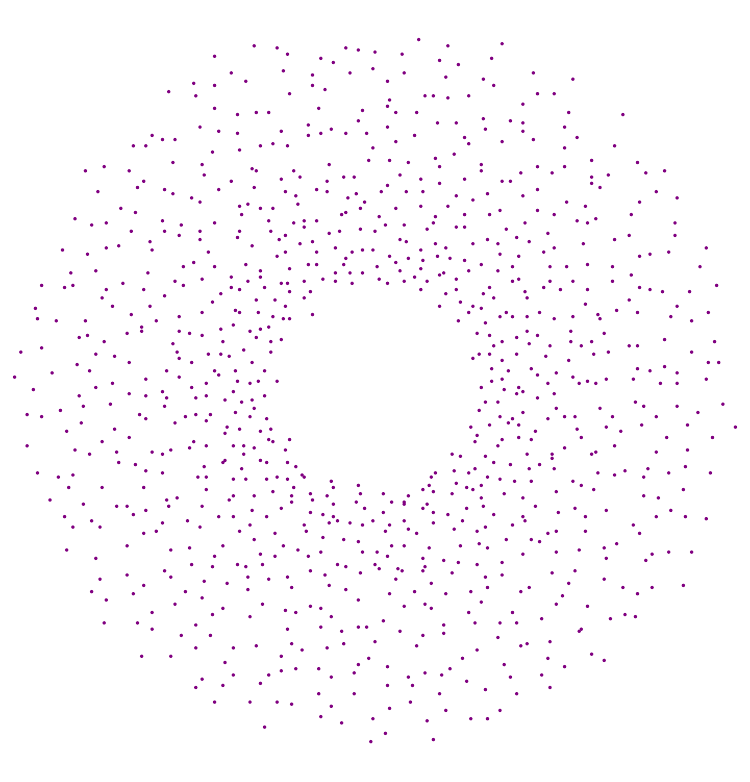}
    \subcaption{Strong non-unitary regime}
  \end{minipage}
  \begin{minipage}[b]{0.3\linewidth}
    \centering
    \includegraphics[keepaspectratio, scale=0.3]{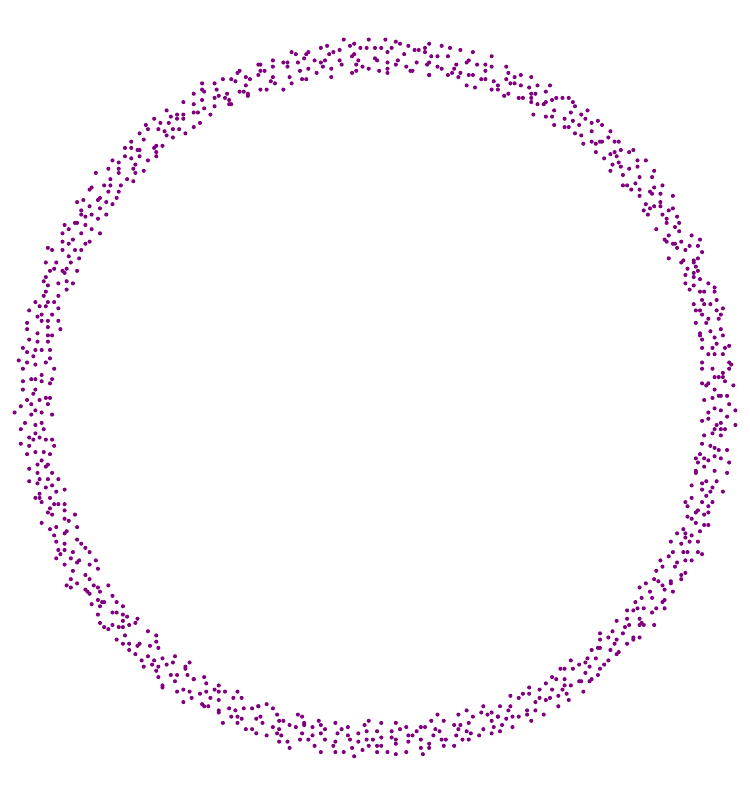}
    \subcaption{Weak non-unitary regime}
  \end{minipage}
    \begin{minipage}[b]{0.3\linewidth}
    \centering
    \includegraphics[keepaspectratio, scale=0.3]{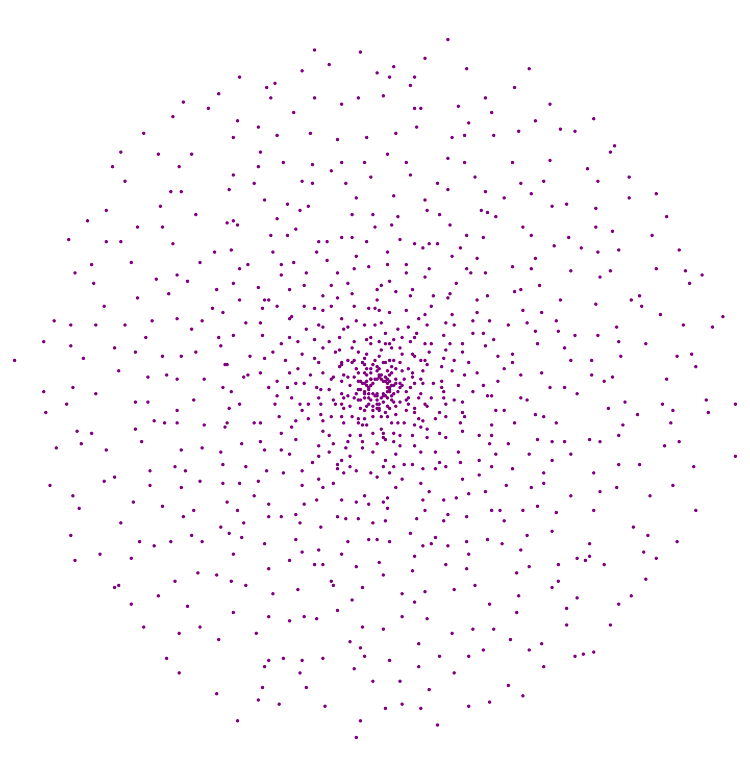}
    \subcaption{Singular origin regime}
  \end{minipage}
    \begin{minipage}[b]{0.3\linewidth}
    \centering
    \includegraphics[keepaspectratio, scale=0.3]{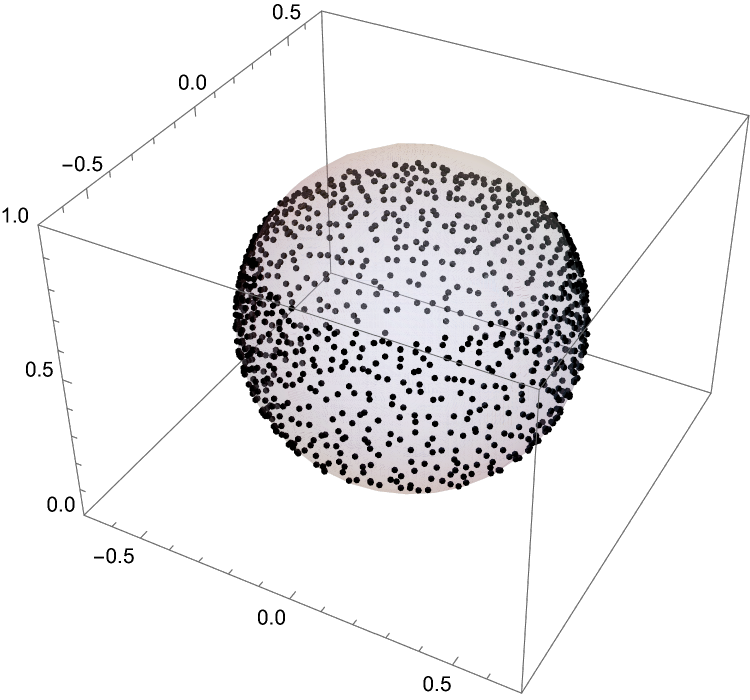}
    \subcaption{Spherical projection in the strongly non-unitary regime}
  \end{minipage}
  \begin{minipage}[b]{0.3\linewidth}
    \centering
    \includegraphics[keepaspectratio, scale=0.3]{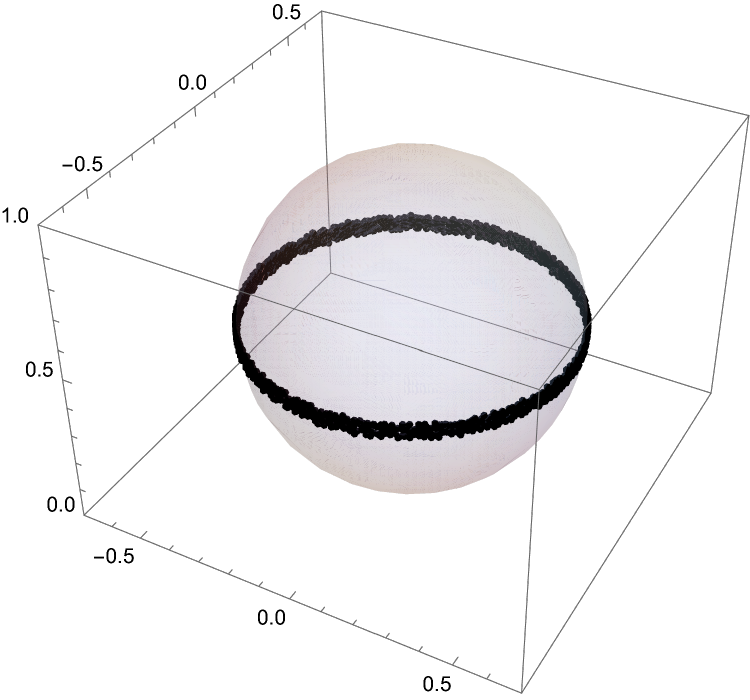}
    \subcaption{Spherical projection in the weakly non-unitary regime}
  \end{minipage}
    \begin{minipage}[b]{0.3\linewidth}
    \centering
    \includegraphics[keepaspectratio, scale=0.3]{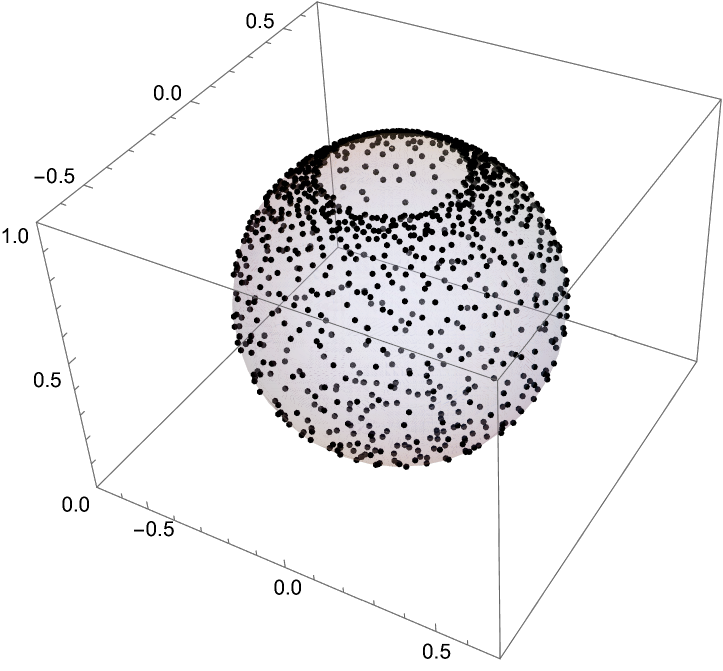}
    \subcaption{Spherical projection at the singular origin regime}
  \end{minipage}
  \caption{(A), (B), and (C) are the plots of the eigenvalues of induced spherical unitary ensemble in each regime on the complex plane. (D), (E), and (F) are the plots of the spherical projection for their eigenvalues.}
\end{figure}
We have discussed the macroscopic pictures for the eigenvalues of the induced spherical unitary ensemble so far. 
Let us mention the point process associated with \eqref{Edensity}. 
It is well-known that the $k$-th correlation function of \eqref{Edensity} forms the determinantal point process, \cite{SF22v1,SF22v2}.
It is given by
\begin{equation}
\label{k correlation function isue}
\mathbf{R}_{N,k}(z_1,\dots,z_k)=\det\bigl( K_N(z_i,z_j) \bigr)_{i,j=1}^{k},
\end{equation}
where $K_N$ is called a correlation kernel given by 
\begin{equation}
\label{RKernel}
K_N(z,w)=e^{-\frac{N}{2}(Q(z)+Q(w))}\sum_{k=0}^{N-1}\frac{\Gamma(n+L+1)}{\Gamma(n-k)\Gamma(k+L+1)}(\overline{z}w)^k. 
\end{equation}
Then, depending the way to choose parameters $L,n$ in the strongly non-unitary, the weakly non-unitary, and the singular origin regimes, the scaling limits of \eqref{RKernel} are classified into the following three classes. 
In the strongly non-unitary regime, we have 
\begin{align*}
&\quad
\lim_{N\to\infty}\frac{1}{N\delta_N(p)}K_N\Bigl(e^{i\theta}\Bigl( p+\mathfrak{s}\frac{\zeta}{\sqrt{N\delta_N(p)}}\Bigr), e^{i\theta}\Bigl( p+\mathfrak{s}\frac{\eta}{\sqrt{N\delta_N(p)}}\Bigr)\Bigr)\\
&=
\begin{cases}
e^{-\frac{1}{2}(|z|^2+|w|^2)+\overline{\zeta}\eta}, & \text{if $p\in\mathrm{int}(S)$, $\theta=0$, and $\mathfrak{s}=1$}, \\
e^{-\frac{1}{2}(|z|^2+|w|^2)+\overline{\zeta}\eta}\erfc(\overline{\zeta}+\eta), 
& \text{if $p\in\partial S$, $\theta\in[0,2\pi)$, and $\mathfrak{s}=1$ (outer edge) or $\mathfrak{-1}$ (inner edge)},
\end{cases}
\end{align*}
uniformly for $\zeta,\eta$ in compact subsets of $\C$. 
In the weakly non-unitary regime, we have 
\[
\lim_{N\to\infty}\frac{1}{N\delta_N(1)}K_N\Bigl(1+\frac{\zeta}{\sqrt{N\delta_N(1)}}, 1+\frac{\eta}{\sqrt{N\delta_N(1)}}\Bigr)
=
e^{-\frac{1}{2}(|z|^2+|w|^2)+\overline{\zeta}\eta}L_{\rho}(\overline{\zeta}+\eta),
\]
uniformly for $\zeta,\eta$ in compact subsets of $\C$. For the definition of $L_{\rho}$, see \eqref{WeakL1}. 
Similarly, at the singular origin regime, the limiting reproducing kernel is characterized by the two-parametric Mittag-Leffler function \eqref{ThmSingular1}, see \cite{AKS21}. 
For the detailed explanation, we refer to \cite{SFv3,FI12}.

\subsection{Overlap for the induced spherical ensemble}
In this subsection, we briefly discuss an overlap defined by left and right eigenvectors for a non-normal matrix, in particular, we focus on the induced spherical ensemble following \cite{D21v1}. In \cite{D21v1}, the case of $L=0$ was only discussed. 
For a given non-normal matrix $\mathbf{G}\in\mathbb{M}_N(\C)$ with simple spectrums $\boldsymbol{z}_{(N)}=(z_1,z_2,...,z_N)\in\C^{N}$, 
we denote a right and left eigenvector associated with an eigenvalue $z_k$ by $R_k$ and $L_k$, respectively, which are defined by 
\[
\mathbf{G}R_k=z_kR_k,\quad 
L_k^{\mathsf{t}}\mathbf{G}=z_kL_k^{\mathsf{t}},
\]
satisfying the bi-orthogonal condition 
\[
L_j^{\mathsf{t}}R_k=\delta_{j,k}.
\]
Then, we define an overlap matrix $\mathcal{O}$ with entries 
\[
\mathcal{O}_{j,k}=(L_j^{\dagger}L_k)(R_j^{\dagger}R_k).
\]
For the definition of the overlap, some remarks should be mentioned. 
The overlap is invariant under the scaling $L_j\mapsto \frac{1}{c}L_j$ and $R_k\mapsto c R_k$ for a constant $c\neq0$. 
Also, the overlap is unchanged by a unitary change of basis.
By the exchangeability of the eigenvalues, we focus on the diagonal overlap $\mathcal{O}_{1,1}$ and the off-diagonal overlap $\mathcal{O}_{1,2}$. 
Similar to the proof of the case for $L=0$ in \cite{D21v1}, we can show the followings:
\begin{prop}\label{prop_O11}
Conditionally on $\boldsymbol{z}_{(N)}=\boldsymbol{\lambda}_{(N)}$, the on-diagonal overlap $\mathcal{O}_{1,1}$ of the induced spherical unitary ensemble is distributed as 
\[
\mathcal{O}_{1,1}\overset{d}{=}\prod_{k=2}^{N}\Bigl(1+\frac{(1+|\lambda_1|^2)(1+|\lambda_k|^2)}{|\lambda_1-\lambda_k|^2}
X_{n+L}^{(k)}
\Bigr),
\]
where $X_m^{(k)}$ is i.i.d. distributed to a real random variable with density 
\[
\frac{m+1}{(1+x)^{m+2}}\mathbf{1}_{\R_+}.
\]
In particular, the quenched expectation is given by
\[
\E_N[\mathcal{O}_{1,1}|\boldsymbol{z}_{(N)}=\boldsymbol{\lambda}_{(N)}]
=
\prod_{k=2}^{N}\Bigl(1+\frac{(1+|\lambda_1|^2)(1+|\lambda_k|^2)}{(n+L)|\lambda_1-\lambda_k|^2}\Bigr), 
\]
where $\E_N$ denotes the expectation with respect to \eqref{Mdense}. 
\end{prop}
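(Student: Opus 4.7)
The plan is to adapt the argument of \cite{D21v1} from the $L=0$ (spherical) case to the induced setting. The crucial structural observation is that under the Schur decomposition $\mathbf{G}_N=\mathbf{U}(\mathbf{Z}+\mathbf{T})\mathbf{U}^\dagger$ (with $\mathbf{U}$ unitary, $\mathbf{Z}=\mathrm{diag}(z_1,\ldots,z_N)$, and $\mathbf{T}$ strictly upper triangular), the extra weight $\det(\mathbf{G}_N\mathbf{G}_N^\dagger)^L$ in \eqref{Mdense} becomes $\prod_{k=1}^N|z_k|^{2L}$, which depends only on the eigenvalues; it is therefore absorbed into the joint eigenvalue density \eqref{Edensity} and leaves the conditional law of $\mathbf{T}$ given the spectrum in the same form as in \cite{D21v1}, with only the exponent $n+N+L$ (rather than $2N$) appearing in the denominator. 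This exponent is precisely what produces the parameter $m=n+L$ in the single-variable distribution of $X_m^{(k)}$.

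First, using the unitary invariance of the overlap, I would reduce to analyzing $\mathcal{O}_{1,1}$ for the upper triangular matrix $\mathbf{Z}+\mathbf{T}$; the Jacobian of the Schur decomposition produces the Vandermonde $\prod_{i<j}|z_i-z_j|^2$, and the conditional density of $\mathbf{T}$ given $\mathbf{Z}$ is proportional to $\det(\mathbf{1}_N+(\mathbf{Z}+\mathbf{T})(\mathbf{Z}+\mathbf{T})^\dagger)^{-(n+N+L)}$. The right eigenvector of $\mathbf{Z}+\mathbf{T}$ for the eigenvalue $z_1$ is $e_1$, so $R_1^\dagger R_1=1$, while the left eigenvector $w^{(1)}$ normalized by $w^{(1)}_1=1$ obeys the triangular recursion
\[
w^{(1)}_k=\frac{1}{z_1-z_k}\sum_{j=1}^{k-1} w^{(1)}_j T_{jk},\qquad k=2,\ldots,N,
\]
yielding $\mathcal{O}_{1,1}=1+\sum_{k=2}^N|w^{(1)}_k|^2$.

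Next, I would perform the iterative column-by-column unitary change of variables on $\mathbf{T}$ used in \cite{D21v1}. For each $k=2,\ldots,N$, the entries $(T_{1k},\ldots,T_{k-1,k})$ are rotated so as to single out the component along the direction $(w^{(1)}_1,\ldots,w^{(1)}_{k-1})$, leaving the remaining components orthogonal. Applying a Schur-complement expansion of $\det(\mathbf{1}_N+(\mathbf{Z}+\mathbf{T})(\mathbf{Z}+\mathbf{T})^\dagger)$ and integrating out the orthogonal components column by column, one finds that the distinguished scalar's modulus-squared is distributed as $\frac{(1+|\lambda_1|^2)(1+|\lambda_k|^2)}{|\lambda_1-\lambda_k|^2}X^{(k)}_{n+L}$, with the $X^{(k)}_{n+L}$ independent across $k$ and having density $(n+L+1)(1+x)^{-(n+L+2)}\mathbf{1}_{\R_+}$. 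Rearranging yields the claimed product form of $\mathcal{O}_{1,1}$, and the quenched expectation follows from $\E[X^{(k)}_{n+L}]=1/(n+L)$.

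The main obstacle is the iterative integration itself: the determinant $\det(\mathbf{1}_N+\mathbf{M}\mathbf{M}^\dagger)$ does not naturally factorize over columns, and one must check by induction on $N$ that the column rotations together with the Schur-complement identities reduce each stage to the same form with $N$ lowered by one and the parameter $n+L$ preserved, while simultaneously producing the prefactor $(1+|\lambda_1|^2)(1+|\lambda_k|^2)/|\lambda_1-\lambda_k|^2$ from the positive-definite quadratic form $(w^{(1)}_{1:k-1})^\dagger(\mathbf{1}_{k-1}+\mathbf{M}_{k-1}\mathbf{M}_{k-1}^\dagger)(w^{(1)}_{1:k-1})$. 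This delicate step is already carried out in \cite{D21v1} for $L=0$, and since the $L$-dependent weight contributes only an eigenvalue-only factor, the argument extends verbatim to $L\geq 0$ with $2N$ replaced throughout by $n+N+L$.
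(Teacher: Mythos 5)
Your proposal is correct and is precisely the route the paper intends: it states Proposition~\ref{prop_O11} without proof, pointing only to \cite{D21v1} for the $L=0$ case, and your argument spells out why that argument carries over. You correctly isolate the one new ingredient — after Schur decomposition $\mathbf{G}_N=\mathbf{U}(\mathbf{Z}+\mathbf{T})\mathbf{U}^\dagger$ the induced weight $\det(\mathbf{G}_N\mathbf{G}_N^\dagger)^L=\prod_k|z_k|^{2L}$ is a function of the spectrum alone, so the conditional law of $\mathbf{T}$ given $\mathbf{Z}$ is still $\propto\det(\mathbf{1}_N+(\mathbf{Z}+\mathbf{T})(\mathbf{Z}+\mathbf{T})^\dagger)^{-p}$ with $p=n+N+L$ replacing Dubach's $2N$, and the column-by-column integration then yields i.i.d.\ beta-prime factors with parameter $p-N=n+L$, whose mean $1/(n+L)$ gives the stated quenched expectation.
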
 
\begin{prop}\label{prop_O12}
The quenched expectation of the off-diagonal overlap $\mathcal{O}_{1,2}$ for the induced spherical unitary ensemble is given by
\[
\E_N[\mathcal{O}_{1,2}|\boldsymbol{z}_{(N)}=\boldsymbol{\lambda}_{(N)}]
=
-\frac{1}{(n+L)|\lambda_1-\lambda_2|^2}
\prod_{k=3}^{N}\Bigl(1+\frac{(1+\lambda_1\overline{\lambda_2})(1+|\lambda_k|^2)}{(n+L)(\lambda_1-\lambda_k)(\overline{\lambda_2-\lambda_k})}\Bigr).
\]
\end{prop}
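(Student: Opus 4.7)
The plan is to follow the Schur decomposition approach used by Dubach in \cite{D21v1} for the $L=0$ case and extend it to arbitrary $L \geq 0$, with $n$ effectively replaced by $n+L$ in the relevant variance factors. Concretely, write $\mathbf{G}_N = \mathbf{U}(\mathbf{Z}+\mathbf{T})\mathbf{U}^{\dagger}$, where $\mathbf{Z}=\operatorname{diag}(\lambda_1,\dots,\lambda_N)$ and $\mathbf{T}$ is strictly upper triangular, and work with the joint distribution of $(\mathbf{Z},\mathbf{T})$ after integrating out the unitary factor $\mathbf{U}$.

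First, I would push \eqref{Mdense} forward under the Schur decomposition. The Jacobian produces the usual factor $\prod_{i<j}|\lambda_i-\lambda_j|^2$, while the rational weight becomes a function of the $\lambda_i$ and the strictly upper triangular entries $t_{ij}$ via $\mathbf{G}_N\mathbf{G}_N^{\dagger}=\mathbf{U}(\mathbf{Z}+\mathbf{T})(\mathbf{Z}+\mathbf{T})^{\dagger}\mathbf{U}^{\dagger}$. Conditional on $\boldsymbol{\lambda}_{(N)}$, the entries of $\mathbf{T}$ should then be integrable column-by-column (peeling off the last column first), using the block decomposition of $\mathbf{1}_N+(\mathbf{Z}+\mathbf{T})(\mathbf{Z}+\mathbf{T})^{\dagger}$ that isolates the $N$-th row/column. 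This yields an inverse-Beta-type conditional law for the last column given the remaining columns, and iterating produces the full conditional law of $\mathbf{T}$.

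Next, insert the explicit back-substitution formulas for the left and right eigenvectors of the triangular matrix $\mathbf{M}=\mathbf{Z}+\mathbf{T}$: the right eigenvector $R_k$ for $\lambda_k$ has components built from sums of products of $t_{ij}/(\lambda_k-\lambda_i)$, and $L_k$ admits a dual expression. Substituting into $\mathcal{O}_{1,2}=(L_1^{\dagger}L_2)(R_1^{\dagger}R_2)$ and expanding, one obtains a polynomial in the $t_{ij}$ whose conditional expectation, thanks to the column-by-column independence from the previous step, factorizes as a product over $k=3,\dots,N$. Each factor $1+\frac{(1+\lambda_1\overline{\lambda_2})(1+|\lambda_k|^2)}{(n+L)(\lambda_1-\lambda_k)(\overline{\lambda_2-\lambda_k})}$ arises from integrating over the $t_{1k}$ and $t_{2k}$ entries (together with the lower entries of column $k$ that couple to them through the eigenvector expansions), while the prefactor $-\frac{1}{(n+L)|\lambda_1-\lambda_2|^2}$ comes from the $t_{12}$ contribution combined with the $1/(\lambda_1-\lambda_2)$ poles in the eigenvector formulas.

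The main obstacle is the non-Gaussian, rational character of the spherical weight: unlike the Ginibre case, where the $t_{ij}$ are conditionally independent centered complex Gaussians with explicit variances, here the weight $\det(\mathbf{G}_N\mathbf{G}_N^{\dagger})^L/\det(\mathbf{1}_N+\mathbf{G}_N\mathbf{G}_N^{\dagger})^{n+N+L}$ couples the entries within each column, so the relevant conditional moments of the bilinear expressions in the $t_{ij}$ that appear in $\mathcal{O}_{1,2}$ must be extracted with care. I expect this to be manageable by induction on $N$, peeling off the last column of $\mathbf{T}$ at each step and checking that it contributes precisely the factor indexed by $k=N$; equivalently, a column-wise change of variables should render each column's marginal into an inverse-Beta law whose relevant second moment is $1/(n+L)$ times a rational function of the $\lambda_i$'s. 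The second delicate point is tracking how the exponent $L$ enters: it should merely shift $n\mapsto n+L$ in the resulting variance, but a careful bookkeeping of the Jacobian together with the $\det(\mathbf{G}_N\mathbf{G}_N^{\dagger})^L$ factor is needed to confirm this cleanly.
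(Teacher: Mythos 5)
Your proposal follows the same route the paper intends: the paper offers no proof of this proposition, only the remark that the argument is ``similar to the proof of the case for $L=0$'' in \cite{D21v1}, i.e.\ Dubach's Schur-decomposition argument, which is exactly what you outline. A remark that collapses your flagged ``second delicate point'': under the Schur change of variables $\mathbf{G}_N=\mathbf{U}(\mathbf{Z}+\mathbf{T})\mathbf{U}^\dagger$ one has $\det(\mathbf{G}_N\mathbf{G}_N^\dagger)=|\det(\mathbf{Z}+\mathbf{T})|^2=\prod_{k}|\lambda_k|^2$, which depends only on the eigenvalues, so the factor $\det(\mathbf{G}_N\mathbf{G}_N^\dagger)^{L}$ in \eqref{Mdense} does not touch the conditional law of $\mathbf{T}$ given $\boldsymbol{\lambda}_{(N)}$ at all; the remaining weight $\det(\mathbf{1}_N+\mathbf{G}_N\mathbf{G}_N^\dagger)^{-(n+N+L)}$ is precisely the $L=0$ weight with $n$ replaced by $n+L$, so the conditional law of $\mathbf{T}$ is literally the $L=0$ law after the substitution $n\mapsto n+L$, and \cite[Theorem 2.11]{D21v1} gives the claimed formula with no further bookkeeping.
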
 
Proposition~\ref{prop_O11} corresponds to \cite[Theorem 2.6]{D21v1}, and Proposition~\ref{prop_O12} corresponds to \cite[Theorem 2.11]{D21v1}. 
Our aim in this note is to study the determinantal structure of the $k$-th conditional expectation of the overlaps for the induced spherical unitary ensemble, that is, for $\boldsymbol{z}_{(k)}=(z_1,z_2,...,z_k)\in\C^k$, we study 
\begin{align}
\begin{split}
\label{D11}
&\quad D_{1,1}^{(N,k)}(\boldsymbol{z}_{(k)})
\\
&:=\frac{N!}{(N-k)!}\frac{1}{Z_{N}}\int_{\C^{N-k}}
\prod_{j=2}^N\Bigl(1+\frac{(1+|z_1|^2)(1+|z_j|^2)}{(n+L)|z_1-z_j|^2}\Bigr)
|\Delta_N(\boldsymbol{z}_{(N)})|^2\prod_{j=1}^{N}e^{-NQ(z_j)}\prod_{j=k+1}^{N}dA(z_j),
\end{split}
\end{align}
and 
\begin{align}
\label{D12}
D_{1,2}^{(N,k)}(\boldsymbol{z}_{(k)})
&:=
-\frac{N!}{(N-k)!}\frac{1}{Z_{N}}\int_{\C^{N-k}}\frac{1}{(n+L)|z_1-z_2|^2}
\\
&\quad\times
\prod_{j=3}^N\Bigl(1+\frac{(1+z_1\overline{z_2})(1+|z_j|^2)}{(n+L)(z_1-z_j)(\overline{z_2-z_j})}\Bigr)
|\Delta_N(\boldsymbol{z}_{(N)})|^2\prod_{j=1}^{N}e^{-NQ(z_j)}\prod_{j=k+1}^{N}dA(z_j),
\nonumber
\end{align}
where, $\Delta_N(\boldsymbol{z}_{(N)})$ is the Vandemonde determinant for $\boldsymbol{z}_{(N)}\in\C^N$.
In order to show the determinantal structure, i.e., to construct the correlation kernel for \eqref{D11} and \eqref{D12}, we introduce a new weight function on $\C$:
\begin{equation}
\label{Weight}
\omega(z,\overline{z}|u,v)
=
\Bigl((z-u)(\overline{z}-v)+\frac{(1+uv)(1+z\overline{z})}{n+L}\Bigr)
e^{-NQ(z)},\quad \text{for $z,u,v,\in\C$},
\end{equation}
with respect to the planar Lebesgue measure $dA(z)$.
Then, \eqref{D11} and \eqref{D12} can be written as 
\begin{align}
\begin{split}
\label{D11v2}
&\quad D_{1,1}^{(N,k)}(\boldsymbol{z}_{(k)})
\\
&=
\frac{N!}{(N-k)!}\frac{1}{Z_N}\frac{|z_1|^{2L}}{(1+|z_1|^2)^{n+L+1}}\int_{\C^{N-k}}|\Delta_{N-1}(z_2,z_3,...,z_N)|^2\prod_{j=2}^{N}\omega(z_j,\overline{z_j}|z_1,\overline{z_1})\prod_{j=k+1}^{N}dA(z_j),
\end{split}
\end{align}
and 
\begin{align}
\begin{split}
\label{D12v2}
&\quad D_{1,2}^{(N,k)}(\boldsymbol{z}_{(k)})
\\
&=
-
\frac{N!}{(N-k)!}\frac{1}{Z_N}\frac{|z_1|^{2L}|z_2|^{2L}}{(1+|z_1|^2)^{n+L+1}(1+|z_2|^2)^{n+L+1}}
\\
&\times
\int_{\C^{N-k}}\Delta_{N-1}(z_1,z_3,...,z_N)\overline{\Delta_{N-1}(z_1,z_3,...,z_N)}\prod_{j=3}^{N}\omega(z_j,\overline{z_j}|z_1,\overline{z_2})\prod_{j=k+1}^{N}dA(z_j).
\end{split}
\end{align}
Let $T$ be the transposition operator acting on functions $f$ on $\C^{2k}$ with $k\geq 2$, and we treat a complex variable $z$ and its complex conjugate $\overline{z}$ as independent. 
The action of $T$ is defined by exchanging the variables $\overline{z}_1\leftrightarrow\overline{z}_2$:
\begin{equation}
    \label{action of T}
    Tf(z_1,\overline{z}_1,z_2,\overline{z}_2,\cdots)
    =
    f(z_1,\overline{z}_2,z_2,\overline{z}_1,\cdots).
\end{equation}
In particular, $T$ is the transposition operator leaving the remaining variables $z_3,\overline{z}_3,\dots,z_k,\overline{z}_k$ unchanged if present. 
Then, the relationship between \eqref{D11v2} and \eqref{D12v2} can be associated by the following lemma~\ref{Decoupling}, which generalise \cite[Lemma 1]{ATTZ} for the GinUE and see also the recent progress \cite[Lemma 2.2.]{ABN24} for GinSE. 
\begin{lem}
\label{Decoupling}
We have 
\begin{equation}
D_{1,2}^{(N,k)}(\boldsymbol{z}_{(k)})
=
-\frac{|1+z_1\overline{z_2}|^{2(n+L+1)}}{(1+|z_1|^2)^{n+L+1}(1+|z_2|^2)^{n+L+1}}
\frac{TD_{1,1}^{(N,k)}(\boldsymbol{z}_{(k)})}{\frac{|1+z_1\overline{z_2}|^2}{n+L}-|z_1-z_2|^2}.
\end{equation}
\end{lem}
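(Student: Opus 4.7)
My plan is to apply the transposition operator $T$ directly to the reduced form \eqref{D11v2} of $D_{1,1}^{(N,k)}$, compute how every factor transforms, and then match the result against \eqref{D12v2}. Because $T$ only swaps $\overline{z_1}\leftrightarrow\overline{z_2}$, almost every factor in \eqref{D11v2} is transformed in a routine way. The holomorphic Vandermonde $\Delta_{N-1}(z_2,\dots,z_N)$ is untouched; the antiholomorphic one $\overline{\Delta_{N-1}(z_2,\dots,z_N)}$ is a polynomial in the antiholomorphic variables in which substituting $\overline{z_2}\mapsto\overline{z_1}$ yields $\overline{\Delta_{N-1}(z_1,z_3,\dots,z_N)}$; and for each $j\ge 3$, the deformed weight transforms simply as $T[\omega(z_j,\overline{z_j}|z_1,\overline{z_1})]=\omega(z_j,\overline{z_j}|z_1,\overline{z_2})$, which is exactly the weight appearing in \eqref{D12v2}. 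These transformations account for the entire integrand except for two exceptional factors.

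The remaining work is on those exceptional factors, namely the prefactor $e^{-NQ(z_1)}$ and the $j=2$ term $\omega(z_2,\overline{z_2}|z_1,\overline{z_1})$, since both involve $\overline{z_1}$ and/or $\overline{z_2}$. I would compute
\[
T[e^{-NQ(z_1)}]=\frac{z_1^L\overline{z_2}^L}{(1+z_1\overline{z_2})^{n+L+1}},
\]
and
\[
T[\omega(z_2,\overline{z_2}|z_1,\overline{z_1})]=\left((z_2-z_1)(\overline{z_1}-\overline{z_2})+\frac{(1+z_1\overline{z_2})(1+z_2\overline{z_1})}{n+L}\right)\frac{z_2^L\overline{z_1}^L}{(1+z_2\overline{z_1})^{n+L+1}}.
\]
Evaluating on the physical diagonal, the polynomial bracket collapses to $\frac{|1+z_1\overline{z_2}|^2}{n+L}-|z_1-z_2|^2$ and the two exponential pieces multiply to $\frac{|z_1|^{2L}|z_2|^{2L}}{|1+z_1\overline{z_2}|^{2(n+L+1)}}$. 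Hence $TD_{1,1}^{(N,k)}(\boldsymbol{z}_{(k)})$ equals the product of these two scalars with precisely the integral appearing in \eqref{D12v2}, which itself carries the prefactor $-\frac{|z_1|^{2L}|z_2|^{2L}}{(1+|z_1|^2)^{n+L+1}(1+|z_2|^2)^{n+L+1}}$. Dividing $D_{1,2}^{(N,k)}$ by $TD_{1,1}^{(N,k)}$ cancels the common integral and produces exactly the claimed ratio.

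The main obstacle is the careful bookkeeping for the $j=2$ factor: because $\overline{z_2}$ appears there in two distinct roles (as the antiholomorphic point variable of $\omega$ and, once $e^{-NQ(z_1)}$ is bundled in, as part of the anchor $v$), the swap with $\overline{z_1}$ simultaneously flips the sign inside $(z_2-z_1)(\overline{z_2}-\overline{z_1})$ and rescrambles the $uv$-structure of the polynomial, and one must verify that the surviving obstruction is exactly the scalar $\frac{|1+z_1\overline{z_2}|^2}{n+L}-|z_1-z_2|^2$ that has to be divided out to isolate the clean integral. A minor, but easy-to-miscount, subtlety is the pair of signs arising from factoring $|\Delta_N|^2$ differently on the holomorphic and antiholomorphic sides when identifying the integral that shows up in \eqref{D12v2}; these sign factors cancel in pairs and the rest of the algebra is bookkeeping.
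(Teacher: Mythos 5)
Your approach --- apply $T$ to \eqref{D11v2}, transform each factor, and compare with the reduced off-diagonal integral --- is exactly the ATTZ-style argument the paper intends, and the factorwise transformations you record are all correct: for $j\ge 3$, $T[\omega(z_j,\overline{z_j}|z_1,\overline{z_1})]=\omega(z_j,\overline{z_j}|z_1,\overline{z_2})$; the antiholomorphic Vandermonde becomes $\overline{\Delta_{N-1}(z_1,z_3,\dots,z_N)}$; and the two exceptional factors $e^{-NQ(z_1)}$ and $\omega(z_2,\overline{z_2}|z_1,\overline{z_1})$ transform as you write, collapsing on the physical diagonal to the scalars $\frac{|z_1|^{2L}|z_2|^{2L}}{|1+z_1\overline{z_2}|^{2(n+L+1)}}$ and $\frac{|1+z_1\overline{z_2}|^2}{n+L}-|z_1-z_2|^2$.

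The weak point is that you import \eqref{D12v2} on faith rather than rederiving it from \eqref{D12}, and this conceals two discrepancies. The integral you obtain from $TD_{1,1}^{(N,k)}$ carries $\Delta_{N-1}(z_2,z_3,\dots,z_N)\,\overline{\Delta_{N-1}(z_1,z_3,\dots,z_N)}$, while \eqref{D12v2} as printed has $\Delta_{N-1}(z_1,z_3,\dots,z_N)$ in the holomorphic slot; the ``precise'' match you claim is not literal, and in fact factoring $|\Delta_N|^2$ in \eqref{D12} produces the Vandermonde product you found, so the printed \eqref{D12v2} is the one that needs correcting. More consequentially, the prefactor $\frac{1}{(n+L)|z_1-z_2|^2}$ in \eqref{D12} leaves a residual $\frac{1}{n+L}$ after the Vandermonde squares are absorbed into the $\omega$'s; this factor appears in neither the printed \eqref{D12v2} nor the Lemma. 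Carrying it through your division yields
\[
D_{1,2}^{(N,k)}(\boldsymbol{z}_{(k)})
=-\frac{1}{n+L}\,
\frac{|1+z_1\overline{z_2}|^{2(n+L+1)}}{(1+|z_1|^2)^{n+L+1}(1+|z_2|^2)^{n+L+1}}\,
\frac{TD_{1,1}^{(N,k)}(\boldsymbol{z}_{(k)})}{\frac{|1+z_1\overline{z_2}|^2}{n+L}-|z_1-z_2|^2},
\]
i.e.\ the stated Lemma is off by $n+L$ (equivalently its denominator should read $|1+z_1\overline{z_2}|^2-(n+L)|z_1-z_2|^2$). Since the entire content of this lemma's proof is to check the identity against the primitive definitions \eqref{D11}--\eqref{D12}, comparing instead against the already-simplified display \eqref{D12v2} reproduces whatever slips that display carries and so misses the normalization; you should rederive \eqref{D12v2} as part of the proof and then flag the mismatch.
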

Since the proof of Lemma \ref{Decoupling} is almost same as Ginibre unitary ensemble case \cite{ATTZ}, 
the detailed proof is left to interested readers.  
This lemma allows us to focus on the on-diagonal overlap case. 
Indeed, all proofs in this note will be done for the on-diagonal overlap case, 
and the results for the off-diagonal overlap case will be obtained as a simple application of Lemma \ref{Decoupling}. 
Here, let us mention a reason why we study such quantities \eqref{D11v2} and \eqref{D12v2}. 
We can see the information about the diagonal and the off-diagonal overlaps from \eqref{D11} and \eqref{D12}. 
To see that, by the definition of the conditional expectation, 
we have 
\[
\E_N[\mathcal{O}_{1,1}|\lambda_1=z_1]
=
\frac{\E_{N-1}\Bigl[
\prod_{k=2}^{N}\Bigl(|z_1-\lambda_k|^2+\frac{(1+|z_1|^2)(1+|\lambda_k|^2)}{n+L}\Bigr)
\Bigr]}{\E_{N-1}\Bigl[\prod_{j=2}^N|\lambda_j-z_1|^2\Bigr]}. 
\]
Note that by \cite[Theorem 1]{AV03}, we have 
\begin{equation}
    \E_{N-1}\Bigl[
    \prod_{j=2}^N|\lambda_j-z_1|^2
    \Bigr]
    =
    h_{N-1}^{(\mathrm{ISUE})}\mathcal{K}_N^{(n,L)}(z,w),
\end{equation}
where 
\begin{equation}
    \mathcal{K}_N^{(n,L)}(z,w):=\sum_{k=0}^{N-1}\frac{(z\overline{w})^k}{h_k^{(\mathrm{ISUE})}}, \quad
    h_{k}^{(\mathrm{ISUE})}:=\frac{\Gamma(n-k)\Gamma(k+L+1)}{\Gamma(n+L+1)}. 
\end{equation}
Hence, we have 
\begin{equation}
\label{On conditional expectation 1}
    \E_N[\mathcal{O}_{1,1}|\lambda_1=z_1]
    =
    \frac{N!\prod_{j=0}^{N-2}h_j(z_1,\overline{z}_1)}{Z_{N}\cK_{N}(z_1,\overline{z}_1)}
    =
    \frac{D_{1,1}^{(N,1)}(z_1)}{\mathbf{R}_{N,1}(z_1)},
\end{equation}
where $h_j\equiv h_j(z_1,\overline{z}_1)$ is given by \eqref{LDU result} and $Z_N$ is the normalization constant given by \eqref{Mdense}. Here, $\mathbf{R}_{N,1}$ is the one-point density for the induced spherical unitary ensemble given by \eqref{k correlation function isue}. 
For the off-diagonal overlap, similar to the on-diagonal overlap, we have
\begin{align*}
&\quad\E_N[\mathcal{O}_{1,2}|\lambda_1=z_1,\lambda_2=z_2]
\\
&=
-\frac{1}{(n+L)|\lambda_1-\lambda_2|^2}
\frac{\E_{N-2}\Bigl[
\prod_{k=3}^{N}
(\overline{z}_1-\overline{\lambda}_k)(z_2-\lambda_k)
\Bigl((z-\lambda_k)(\overline{z_2-\lambda_k})+\frac{(1+z_1\overline{z_2})(1+|\lambda_k|^2)}{n+L}\Bigr)
\Bigr]}{\E_{N-2}\Bigl[
\prod_{k=3}^N|z_1-\lambda_k|^2|z_2-\lambda_k|^2
\Bigr]}.
\end{align*}
Then, by Lemma~\ref{Decoupling} and \cite[Theorem 1]{AV03}, we have 
\begin{equation}
\label{Off conditional expectation 1}
    \E_N[\mathcal{O}_{1,2}|\lambda_1=z_1,\lambda_2=z_2]
    =
    \frac{D_{1,2}^{(N,2)}(z_1,z_2)}{Z_{N}^{(n,L)}(z_1,z_2)},
\end{equation}
where 
\begin{equation}
    Z_{N}^{(n,L)}(z_1,z_2)
    :=
    \widehat{\omega}^{(n,L)}(\overline{z}_1,z_1)
\widehat{\omega}^{(n,L)}(\overline{z}_2,z_2)
\det
\begin{bmatrix}
\mathcal{K}_{N}^{(n,L)}(z_1,z_1) & \mathcal{K}_{N}^{(n,L)}(z_1,z_2) \\
\mathcal{K}_{N}^{(n,L)}(z_2,z_1) & \mathcal{K}_{N}^{(n,L)}(z_2,z_2)
\end{bmatrix},
\end{equation}
where $\widehat{\omega}^{(n,L)}$ is defined by \eqref{Homega}. 
Hence, once we know the scaling limits of \eqref{D11} and \eqref{D12}, we could get the scaling limits of the conditional expectations of the on- and off-diagonal overlaps. 

Finally, as we already mentioned, overlaps for non-Hermitian random matrices can be applied to the study of the dynamical eigenvalues processes of non-Hermitian matrix-valued Brownian motions.
In \cite[page 2]{ATTZ}, the motivation to study \eqref{D11}, \eqref{D12} was mentioned from the perspective of the relationship between \eqref{D11}, \eqref{D12}, and the correlation of the dynamical eigenvalues processes of the non-Hermitian matrix-valued Brownian motion. 
It would be interesting to study the construction of dynamical eigenvalues processes of the dynamical induced spherical unitary ensemble whose two matrices to define it are non-Hermitian matrix-valued Brownian motions and to find the similar relationship mentioned in \cite{ATTZ} for the present setting, but it seems to be out of reach in this paper.

\begin{comment}
Also, the original motivation to study \eqref{D11v2} and \eqref{D12v2} for Ginibre unitary ensemble was that \eqref{D11v2} and \eqref{D12v2} are the quantities related to time-evolutional eigenvalues processes of non-Hermitian matrix valued Brownian motion.
It would be interesting to study the dynamical model of the induced spherical unitary ensemble, but it is out of reach in the present setting.

We expect that \eqref{D11v2} and \eqref{D12v2} for the induced spherical unitary ensemble would be also related to the dynamical model of the induced spherical unitary ensemble. 
Second, we can regard \eqref{D11v2} as the determinantal point process associated with \eqref{Weight}. 
\eqref{Weight} is different from the spherical potential with a point insertion, and it should be regarded as the spherical potential with a point insertion added a constant perturbation. This slight difference may affect a limiting point process. 
Hence, it is natural to study \eqref{D11v2} from the perspective of the universality for random matrix theory and the determinantal point process.
\end{comment}

\section{Main results}\label{Section3}
In this section, we state our main results in this note. 
We write 
\begin{equation}
\label{GNL1}
g_m^{(n,L)}(x)=\frac{x-L/n}{x} \sum_{k=0}^{m}(m+1-k)\binom{L+n}{L+k}x^k + \frac{L(m+1)}{nx} \binom{L+n}{L}. 
\end{equation}
We also write 
\begin{equation}
\label{QNL}
q_N^{(n,L)}(x)=
\sum _{k=0}^N \frac{\Gamma (L+n+1)}{\Gamma (k+L+1) \Gamma (n+1-k)}x^k, 
\end{equation}
and 
\begin{equation}
\label{HQNL}
\widehat{q}_N^{(n,L)}(y|x)=
q_N^{(n,L)}(y)+\frac{1}{nx-L} \frac{\Gamma(L+n+1)}{\Gamma(n+1)\Gamma(L)}. 
\end{equation}
Let us denote 
\begin{align}
\begin{split}
\label{GNL2}
\widehat{g}_N^{(n,L)}(x)
=
(L+N+1)\widehat{q}_{N+1}^{(n,L)}(x|x)-x(n-N-1)\widehat{q}_{N}^{(n,L)}(x|x).
\end{split}
\end{align}
We have the relationship between \eqref{GNL1} and \eqref{GNL2}:
\begin{equation}
g_N^{(n,L)}(x)=\frac{(x-L/n)}{x(1+x)}\widehat{g}_N^{(n,L)}(x). 
\end{equation}
Let us also denote 
\begin{equation}
\label{Homega}
\widehat{\omega}^{(n,L)}(z,w)=\frac{(zw)^{L}}{((1+z\overline{z})(1+w\overline{w}))^{\frac{n+L+1}{2}}}
\end{equation}
and
\begin{align}
\begin{split}
\label{vapi}
\varpi^{(n,L)}(z,\overline{z},w,\overline{w}|\lambda,\overline{\lambda})
&=
\sqrt{
|z-\lambda|^2+\frac{(1+\lambda\overline{\lambda})(1+z\overline{z})}{n+L}
}
\sqrt{
|w-\lambda|^2+\frac{(1+\lambda\overline{\lambda})(1+w\overline{w})}{n+L}
}.
\end{split}
\end{align}
First, we construct a finite $N$-kernel associated with \eqref{Weight}. 
\begin{prop}\label{Prop_FiniteN}
We write 
\begin{align}
\begin{split}
\label{FiniteN1}
G_{N}^{(n,L)}(x|y,z)&=\sum_{s,t=0}^{N-1}g_s^{(n,L)}(x)y^sg_t^{(n,L)}(x)z^t
\\
&\times
\sum_{k=\max\{s,t\}}^{N-1}
\frac{\Gamma(n+L+1)}{\Gamma(k+L+2)\Gamma(n-k-1)}
\frac{x^k}{g_{k+1}^{(n,L)}(x)g_{k}^{(n,L)}(x)}.
\end{split}
\end{align}
Then, for any $2\leq k\leq N$, we have 
\begin{equation}
\label{D11ver1}
D_{1,1}^{(N,k)}(\boldsymbol{z}_k)
=
\frac{n\left(z_1\zbar_1-\frac{L}{n}\right)}{z_1\zbar_1(1+z_1\zbar_1)}
\widehat{g}_{N-1}^{(n,L)}(z_1\zbar_1)
\widehat{\omega}^{(n,L)}(\overline{z_1},z_1)
\underset{2\leq i,j\leq N}{\det}\Bigl(K_{1,1}^{(N-1)}(z_i,\zbar_i,z_j,\zbar_j|z_1,\zbar_1)\Bigr),
\end{equation}
where $K_{1,1}^{(N)}$ is the correlation kernel given by
\begin{equation}
\label{onK11}
K_{1,1}^{(N)}(z,\zbar,w,\wbar|\lambda,\bar{\lambda})
=
\cK_{1,1}^{(N)}(\zbar,w|\lambda,\bar{\lambda})
\varpi^{(n,L)}(z,\overline{z},w,\overline{w}|\lambda,\overline{\lambda})
e^{-\frac{N}{2}(Q(z)+Q(w))}.
\end{equation}
Here, $\cK_{1,1}^{(N)}$ is the polynomial kernel given by
\begin{equation}
\label{FiniteN2}
\cK_{1,1}^{(N)}(\zbar,w|\lambda,\lambda)=G_{N}^{(n,L)}\Bigl(\lambda\bar{\lambda}|\frac{\bar{z}}{\bar{\lambda}},\frac{w}{\lambda}\Bigr).
\end{equation}
Furthermore, for $k\geq 2$, we have 
\begin{align}
\begin{split}
\label{D12ver1}
D_{1,2}^{(N,k)}(\boldsymbol{z}_k)&=
-
\frac{n(z_1\overline{z_2}-\frac{L}{n})}{z_1\overline{z_2}(1+z_1\overline{z_2})}
\widehat{g}_{N-1}^{(n,L)}(z_1\overline{z_2})\widehat{\omega}^{(n,L)}(\overline{z_2},z_1)
\\
&\times
\mathcal{K}^{(N-1)}(\overline{z_1},z_2|z_1,\overline{z_2})
\widehat{\omega}^{(n,L)}(\overline{z_1},z_2)
\underset{3\leq i,j\leq N}{\det}\Bigl(K_{1,2}^{(N-1)}(z_i,\zbar_i,z_j,\zbar_j|z_1,\zbar_1)\Bigr),
\end{split}
\end{align}
where
\begin{align}
\begin{split}
&\quad
K_{1,2}^{(N)}(z,\zbar,w,\wbar|u,\bar{u},v,\bar{v})
\\
&
=
\frac{\varpi^{(n,L)}(z,\overline{z},w,\overline{w}|u,\overline{v})
e^{-\frac{N}{2}(Q(z)+Q(w))}}{\cK_{1,1}^{(N)}(\bar{u},v|u,\bar{v})}
\det\begin{pmatrix}
\cK_{1,1}^{(N)}(\bar{u},v|u,\bar{v}) & \cK_{1,1}^{(N)}(\bar{u},w|u,\bar{v}) \\
\cK_{1,1}^{(N)}(\bar{z},v|u,\bar{v}) & \cK_{1,1}^{(N)}(\bar{z},w|u,\bar{v})
\end{pmatrix}.
\end{split}
\end{align}
\end{prop}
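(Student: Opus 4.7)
The plan is to construct a family of planar orthogonal polynomials with respect to the deformed weight $\omega(z,\overline{z}|z_1,\overline{z_1})$ of \eqref{Weight} and then apply an Andreief-type identity to the integral in \eqref{D11v2}. The first step is to compute the Gram matrix $M_{i,j} = \int_{\C} z^i \overline{z}^j \, \omega(z,\overline{z}|\lambda,\overline{\lambda})\, dA(z)$ for $0\leq i,j\leq N-2$. Expanding the bracket in
$$
\omega(z,\overline{z}|\lambda,\overline{\lambda}) = \Bigl[(z - \lambda)(\overline{z} - \overline{\lambda}) + \tfrac{1+\lambda\overline{\lambda}}{n+L}(1 + z\overline{z})\Bigr] e^{-NQ(z)}
$$
and using orthogonality of monomials with respect to $e^{-NQ(z)}\, dA(z)$ (with norms $h_k^{(\mathrm{ISUE})}$), one sees that $M$ is tridiagonal with entries affine in $x := \lambda\overline{\lambda}$.

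Since $M$ is symmetric and tridiagonal, it admits an explicit $LDU$ decomposition governed by a three-term recurrence. The diagonal entries of $D$ give the squared norms $h_k^\lambda$ of the orthogonal polynomials associated with $\omega$, while the entries of $L$ give their expansion coefficients in the monomial basis. Solving the recurrence and matching with \eqref{GNL1}, the squared norms take the form $h_k^\lambda \propto h_k^{(\mathrm{ISUE})}\, g_{k+1}^{(n,L)}(x)\, g_k^{(n,L)}(x)/x^k$ up to an explicit $(n,L,k)$-dependent prefactor, and the polynomials themselves are proportional to $\sum_{s\leq k} g_s^{(n,L)}(x)\, (w/\lambda)^s$. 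The reproducing kernel is then $\sum_{k=0}^{N-1} P_k(w)\overline{P_k(z)}/h_k^\lambda$; after interchanging the order of summation this yields precisely the double sum in \eqref{FiniteN1} with the identification $y = \overline{z}/\overline{\lambda}$, $z = w/\lambda$, which is \eqref{FiniteN2}.

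With the kernel in hand, Andreief's identity reduces the $(N-k)$-fold integral in \eqref{D11v2} to a $(k-1)\times(k-1)$ determinant of $K_{1,1}^{(N-1)}$; the weights $e^{-NQ(z_j)}$ and the quadratic factor of $\omega$ are distributed symmetrically between rows and columns to form the square-root expression $\varpi^{(n,L)}$ in \eqref{vapi} together with $e^{-\frac{N}{2}(Q(z)+Q(w))}$ as in \eqref{onK11}. The scalar prefactor of \eqref{D11ver1} arises from matching the product of norms $\prod_{k=0}^{N-2} h_k^\lambda$ against $Z_N$ from \eqref{Mdense} together with the explicit factor $|z_1|^{2L}/(1+|z_1|^2)^{n+L+1}$ present in \eqref{D11v2}; the crucial identity $g_N^{(n,L)}(x) = \tfrac{x-L/n}{x(1+x)}\widehat{g}_N^{(n,L)}(x)$ is what converts the stray $g_{N-1}^{(n,L)}$ factor into $\widehat{g}_{N-1}^{(n,L)}(z_1\overline{z_1})$ times the rational factor $n(z_1\overline{z_1}-L/n)/(z_1\overline{z_1}(1+z_1\overline{z_1}))$.

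The off-diagonal expression \eqref{D12ver1} follows as a direct consequence of Lemma \ref{Decoupling} applied to \eqref{D11ver1}: exchanging $\overline{z_1}\leftrightarrow\overline{z_2}$ via $T$ and dividing by $|1+z_1\overline{z_2}|^2/(n+L) - |z_1-z_2|^2$, the top row and column of the resulting kernel determinant can be expanded via the Schur complement, producing the scalar factor $\cK_{1,1}^{(N-1)}(\overline{z_1},z_2|z_1,\overline{z_2})\widehat{\omega}^{(n,L)}(\overline{z_1},z_2)$ and the reduced $(k-2)\times(k-2)$ determinant of $K_{1,2}^{(N-1)}$. The main obstacle is the second step: explicitly solving the three-term recurrence and recognising its solution as the partial sums $g_m^{(n,L)}(x)$ of \eqref{GNL1} requires nontrivial combinatorial identities, which I expect to handle by induction on $k$ together with a generating-function verification.
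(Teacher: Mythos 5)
Your proposal follows essentially the same route as the paper: construct the moment matrix for the deformed weight $\omega(\cdot|\lambda,\overline\lambda)$, observe it is tridiagonal (after pulling out the diagonal $\Gamma$-factors), carry out an $LDU$ decomposition whose diagonal recurrence identifies the partial sums $g_m^{(n,L)}$, substitute the explicit inverses of $L,U$ into the bi-orthogonal reproducing kernel, swap the order of the double sum to obtain \eqref{FiniteN1}, match the product of norms against $N!/Z_N$, and finally invoke Lemma~\ref{Decoupling} together with a Schur-complement expansion for the off-diagonal case. One small imprecision: the moment matrix is Hermitian rather than symmetric (and $\mu$ itself is genuinely non-symmetric, with sub- and super-diagonal entries proportional to $\overline a$ and $a$ respectively), which is why a full $LDU$ rather than an $LDL^{\mathsf t}$ decomposition is needed; this does not affect the logic, and the rest of your sketch is sound.
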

In particular, since we get $K_{12}^{(N)}$ from $K_{11}^{(N)}$ via Lemma~\ref{Decoupling}, 
we mainly focus on the diagonal overlap case. 
To directly analyze the finite $N$-kernel \eqref{onK11} in Proposition~\ref{Prop_FiniteN} in a large $N$-limit is difficult. 
Indeed, \eqref{FiniteN1} constitutes many summations and complicated terms in the denominator. 
Therefore, in order to compute the scaling limits of \eqref{onK11}, we need to simplify \eqref{onK11} in terms of more simple functions.
The below result is the building block in this note. 
\begin{thm}\label{Thm_Simplification}
We define
\begin{align}
\begin{split}
\label{PartHN}
\mathfrak{H}_N^{(n,L)}(\zbar,w,\lambda)
&=
\frac{(1+\lambda\overline{\lambda})}{(\overline{z}-\overline{\lambda})^2(w-\lambda)^2\widehat{g}_N^{(n,L)}(\lambda\overline{\lambda})\widehat{\omega}^{(n,L)}(\overline{\lambda},\lambda)}
\\
&\times \Bigl( (N+L+1)Q_{N+1}^{(n,L)}(\overline{z},w,\lambda) -(n-N-1)\lambda\overline{\lambda} Q_{N}^{(n,L)}(\overline{z},w,\lambda) \Bigr), 
\end{split}
\end{align}
where 
\begin{align}
\begin{split}
\label{PartQN}
Q_{N}^{(n,L)}(\overline{z},w,\lambda)
&=
\widehat{q}_{N}^{(n,L)}(\overline{\lambda}w|\lambda\overline{\lambda})
\widehat{\omega}^{(n,L)}(\overline{\lambda},w)
\widehat{q}_{N}^{(n,L)}(\overline{z}\lambda|\lambda\overline{\lambda})
\widehat{\omega}^{(n,L)}(\overline{z},\lambda)
\\
&
-\Bigl( 1-\frac{(\overline{z}-\overline{\lambda})(w-\lambda)}{1+\lambda\overline{\lambda}}\Bigr)\widehat{q}_{N}^{(n,L)}(\lambda\overline{\lambda}|\lambda\overline{\lambda})
\widehat{\omega}^{(n,L)}(\overline{\lambda},\lambda)
\widehat{q}_{N}^{(n,L)}(\overline{z}w|\lambda\overline{\lambda})
\widehat{\omega}^{(n,L)}(\overline{z},w).
\end{split}
\end{align}
We also define
\begin{equation}
\label{PartFN}
\mathfrak{F}_N^{(n,L)}(\overline{z},w,\lambda)
=
\mathfrak{I}_N^{(n,L)}(\overline{z},w,\lambda)
+
\mathfrak{II}_N^{(n,L)}(\overline{z},w,\lambda)
+
\mathfrak{III}_N^{(n,L)}(\overline{z},w,\lambda),
\end{equation}
where 
\begin{align}
\begin{split}
\label{Part1N}
&\quad \mathfrak{I}_N^{(n,L)}(\overline{z},w,\lambda)
\\
&=
\frac{
(L+N)
\widehat{q}_{N}^{(n,L)}(\overline{z}w| \overline{\lambda}\lambda)\widehat{\omega}^{(n,L)}(\overline{z},w)
+(n-N-1-\overline{z}w)
\widehat{q}_{N-1}^{(n,L)}(\overline{z}w| \overline{\lambda}\lambda)\widehat{\omega}^{(n,L)}(\overline{z},w)
}
{(1+\overline{z}w)(\overline{z}-\overline{\lambda})(w-\lambda)} 
\\
&
-
\frac{n(1+\lambda\overline{\lambda})}{n\overline{\lambda}\lambda-L}
\frac{\Gamma(L+n+1)}{\Gamma(n+1)\Gamma(L)}
\frac{\widehat{\omega}^{(n,L)}(\overline{z},w)}{(1+\overline{z}w)(\overline{z}-\overline{\lambda})(w-\lambda)}, 
\end{split}
\end{align}

\begin{align}
\begin{split}
\label{Part2N}
\mathfrak{II}_N^{(n,L)}(\overline{z},w,\lambda)
&=
-\frac{(n-N-1)\Gamma(L+n+1)}{\Gamma(L+N+1)\Gamma(n-N)}
\frac{(\overline{z}w)^N\widehat{\omega}^{(n,L)}(\overline{z},w)}{(\overline{z}-\overline{\lambda})(w-\lambda)\widehat{g}_N^{(n,L)}(\lambda\overline{\lambda})\widehat{\omega}^{(n,L)}(\overline{\lambda},\lambda)}
\\
&\times
\Bigl( 
\widehat{q}_{N-1}^{(n,L)}(\lambda\overline{\lambda}|\lambda\overline{\lambda})
\widehat{\omega}^{(n,L)}(\overline{\lambda},\lambda)
-\frac{\widehat{q}_{N}^{(n,L)}(\lambda\overline{\lambda}|\lambda\overline{\lambda})
\widehat{\omega}^{(n,L)}(\overline{\lambda},\lambda)}{n-N}
\Bigr),
\end{split}
\end{align}
and
\begin{align}
\begin{split}
\label{Part3N}
\mathfrak{III}_N^{(n,L)}(\overline{z},w,\lambda)
&=
\frac{(n-N-1)\Gamma(L+n+1)}{\Gamma(L+N+1)\Gamma(n-N)}
\frac{(\overline{z}w)^N\widehat{\omega}^{(n,L)}(\overline{z},w)}{(\overline{z}-\overline{\lambda})(w-\lambda)\widehat{g}_N^{(n,L)}(\lambda\overline{\lambda})\widehat{\omega}^{(n,L)}(\overline{\lambda},\lambda)}
\\
&\times
\Bigl( 
\frac{L+N+1}{n-N}
\widehat{q}_{N}^{(n,L)}(\lambda\overline{\lambda}|\lambda\overline{\lambda})
\widehat{\omega}^{(n,L)}(\overline{\lambda},\lambda)
-\frac{\overline{z}w\widehat{q}_{N+1}^{(n,L)}(\lambda\overline{\lambda}|\lambda\overline{\lambda})
\widehat{\omega}^{(n,L)}(\overline{\lambda},\lambda)}{n-N-1}
\Bigr).
\end{split}
\end{align}
Then, we can express \eqref{onK11} as 
\begin{equation}
K_{1,1}^{(N)}(z,\overline{z},w,\overline{w}|\lambda,\overline{\lambda})
=
C_{K}(z,w)
\bigl(
\mathfrak{H}_N^{(n,L)}(\overline{z},w,\lambda)
+
\mathfrak{F}_N^{(n,L)}(\overline{z},w,\lambda)
\bigr)
\varpi^{(n,L)}(z,\overline{z},w,\overline{w}|\lambda,\overline{\lambda}),
\end{equation}
where 
\[
C_{K}(z,w)=\frac{\sqrt{\widehat{\omega}^{(n,L)}(\overline{z},z)\widehat{\omega}^{(n,L)}(\overline{w},w)}}{\widehat{\omega}^{(n,L)}(\overline{z},w)}. 
\]
\end{thm}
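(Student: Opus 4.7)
The goal is to convert the triple-summation expression \eqref{FiniteN1} for $G_N^{(n,L)}(x|y,z)$ into a closed form built out of the truncated functions $\widehat{q}_N^{(n,L)}$. Once this is done, the claimed identity for $K_{1,1}^{(N)}$ follows from \eqref{FiniteN2} via the substitution $x=\lambda\overline{\lambda},\ y=\overline{z}/\overline{\lambda},\ z=w/\lambda$ (which converts $xy \to \overline{z}\lambda$, $xz \to \overline{\lambda}w$, $xyz \to \overline{z}w$, exactly the combinations appearing in the statement), followed by multiplication by the ratio $C_K(z,w)$ that translates between the weight $\widehat{\omega}^{(n,L)}$ and the exponential normalisation $e^{-NQ/2}$ in \eqref{onK11}.

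The first step is to interchange the order of summation in \eqref{FiniteN1}, bringing the $k$-sum outside:
\begin{equation*}
G_N^{(n,L)}(x|y,z)=\sum_{k=0}^{N-1}\frac{\Gamma(n+L+1)}{\Gamma(k+L+2)\Gamma(n-k-1)}\,\frac{x^k}{g_{k+1}^{(n,L)}(x)\,g_{k}^{(n,L)}(x)}\,S_k(x,y)\,S_k(x,z),
\end{equation*}
with $S_k(x,u):=\sum_{s=0}^{k}g_s^{(n,L)}(x)\,u^s$. The second step is to derive a closed form for $S_k(x,u)$ in terms of $\widehat{q}_{k+1}^{(n,L)}(xu|x)$ and $\widehat{q}_{k+1}^{(n,L)}(x|x)$. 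Writing $g_s^{(n,L)}$ as a finite-difference combination of the coefficients of $q_s^{(n,L)}$ (exactly the mechanism that produces \eqref{GNL2} from \eqref{GNL1}) and applying Abel summation in $s$ turns $S_k(x,u)$ into two boundary contributions at $s=0$ and $s=k+1$, plus a residual generated by the $\frac{L(s+1)}{nx}\binom{L+n}{L}$-piece of \eqref{GNL1} which is absorbed by the $\frac{1}{nx-L}$-correction in \eqref{HQNL}.

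The third and decisive step is a second Abel summation, this time in $k$. Substituting the Step-2 formula back into the $k$-sum, and using the three-term recurrence relating $\widehat{q}_{k-1}^{(n,L)},\widehat{q}_{k}^{(n,L)},\widehat{q}_{k+1}^{(n,L)}$ (the general-index form of the combination recorded in \eqref{GNL2}), one recognises the summand as a discrete difference, so that the bulk of the series telescopes. The surviving boundary terms at the top endpoint $k=N-1$ split into four groups: a term carrying the combination $(L+N+1)\widehat{q}_{N+1}^{(n,L)}-(n-N-1)\lambda\overline{\lambda}\,\widehat{q}_{N}^{(n,L)}$ multiplied by the mixed product $\widehat{q}_{N}^{(n,L)}(\overline{\lambda}w|\lambda\overline{\lambda})\widehat{q}_{N}^{(n,L)}(\overline{z}\lambda|\lambda\overline{\lambda})$, which via \eqref{GNL2} is precisely $\mathfrak{H}_N^{(n,L)}$ as in \eqref{PartHN}--\eqref{PartQN}; a purely off-diagonal contribution in $\widehat{q}_{N}^{(n,L)}(\overline{z}w|\lambda\overline{\lambda})$ and $\widehat{q}_{N-1}^{(n,L)}(\overline{z}w|\lambda\overline{\lambda})$, which becomes $\mathfrak{I}_N^{(n,L)}$ after using the $\frac{1}{nx-L}$-correction once more to absorb the stray constant; and two residuals produced by the upper-endpoint factor $\frac{\Gamma(L+n+1)}{\Gamma(L+N+1)\Gamma(n-N)}(\overline{z}w)^N$ paired with the diagonal values $\widehat{q}_{N-1}^{(n,L)}$, $\widehat{q}_{N}^{(n,L)}$, $\widehat{q}_{N+1}^{(n,L)}$ of the $\frac{1}{nx-L}$-correction, which become $\mathfrak{II}_N^{(n,L)}$ and $\mathfrak{III}_N^{(n,L)}$.

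The principal obstacle lies in Step 3: identifying the right recurrence so that the series telescopes with no error, and tracking the $\frac{1}{nx-L}$-correction through both Abel summations so that it appears with the coefficients stated in \eqref{Part1N}--\eqref{Part3N} and produces the common denominator $(\overline{z}-\overline{\lambda})(w-\lambda)$ in each of the four pieces. No further analytic input is needed beyond the algebraic structure of $g_s^{(n,L)}$ and $\widehat{q}_k^{(n,L)}$; the proof reduces entirely to a bookkeeping of boundary terms once the correct discrete derivative has been recognised in the summand.
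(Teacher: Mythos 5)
Your Step 1 (interchange of summation to $G_N^{(n,L)}(x|y,z)=\sum_{k=0}^{N-1}c_k\,S_k(x,y)S_k(x,z)$ with $c_k=\frac{\Gamma(n+L+1)}{\Gamma(k+L+2)\Gamma(n-k-1)}\frac{x^k}{g_{k+1}^{(n,L)}(x)g_k^{(n,L)}(x)}$) and your Step 2 (closed form for $S_k=\alpha_k$ in terms of $\widehat{q}_{k}^{(n,L)}$) are sound and in fact correspond to ingredients the paper also uses: $S_k$ is precisely the $\alpha_m$ of \eqref{AlphaSPH} whose closed form is \eqref{C1}. The genuine gap is Step 3, where you assert that after substituting the closed form for $S_k$ ``one recognises the summand as a discrete difference, so that the bulk of the series telescopes.'' Nothing in the paper supports that the $k$-sum $\sum_k c_k\alpha_k(x,y)\alpha_k(x,z)$ admits a telescoping discrete derivative, and this is not a mere bookkeeping issue: it is effectively a Christoffel--Darboux identity for the non-standard three-term recurrence \eqref{ThreeTerm}, and the paper explicitly remarks after Proposition~\ref{Prop_FiniteN} that finding such a Christoffel--Darboux identity is an open problem that it does \emph{not} pursue. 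You also invoke a ``three-term recurrence relating $\widehat{q}_{k-1}^{(n,L)},\widehat{q}_{k}^{(n,L)},\widehat{q}_{k+1}^{(n,L)}$'' as the engine of the telescope, but no such recurrence is established in the paper (the only recurrence available is the trivial one-step relation $\widehat{q}_{k+1}(y|x)-\widehat{q}_k(y|x)=\frac{\Gamma(n+L+1)}{\Gamma(k+L+2)\Gamma(n-k)}y^{k+1}$, and it is far from clear that this alone produces the claimed discrete derivative in the presence of the $1/(g_{k+1}g_k)$ weight).

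The paper's actual route avoids the need for any such telescope. It keeps the original order and writes the inner sum as $\Phi_{N-1}^{(n,L)}(x)-\Phi_{\max\{s,t\}-1}^{(n,L)}(x)$ as in \eqref{GN1}, closes the inner geometric-type sum $\Phi_q$ by an inductive argument (Lemma~\ref{Lem_DPhi}), handles the residual dependence on $\max\{s,t\}$ by the identities \eqref{C4}--\eqref{C5} and the derivative relation \eqref{C2}, regroups everything algebraically into the $T_A,T_B,T_C$ blocks in \eqref{C7}--\eqref{C10}, and finally uses the one-step recursion for $W_N^{(n,L)}$ in \eqref{C12} (a recursion in $N$, not a sum over $k$) to identify the $\mathfrak{H}_N^{(n,L)}$ piece. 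In short: the paper never telescopes in $k$; it substitutes closed forms and regroups. If you want to carry out your Step 3 you would need to \emph{prove} that the series telescopes and exhibit the antidifference $D_k$ with $c_kS_k(y)S_k(z)=D_{k+1}-D_k$, which is a substantial and unverified claim; absent that, the argument is not complete.
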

\begin{rem}
As we will see later, the remainder terms \eqref{Part2N} and \eqref{Part3N} do not play essential roles in a large $N$-limit. 
Indeed, the main contributions come from \eqref{PartHN} and \eqref{Part1N}. 
\end{rem}
Now, we are ready to show the scaling limits based on Theorem~\ref{Thm_Simplification}. 
For $p\in\mathrm{clo}(S)$, we set 
\begin{equation}
\label{DensQ}
\delta_N(p)=\frac{n+L+1}{N}\frac{1}{(1+|p|^2)^2},
\end{equation}
which is the density at a point of \eqref{SD}. 
Let us denote 
\begin{equation}
z_j=e^{i\theta}\Bigl(p+\mathfrak{s}\frac{\zeta_j}{\sqrt{N\delta_N(p)}}\Bigr),
\end{equation}
for $\zeta_j$ in compact subsets of $\C$ for $j=1,2,...,k$, where $\mathfrak{s}=-1$ if the inner edge case in the strongly non-unitary regime, and otherwise $\mathfrak{s}=1$. 
We write 
\begin{equation}
\label{LimitW1}
\varpi(\zeta,\overline{\zeta},\eta,\overline{\eta}|\chi,\overline{\chi})
=
\sqrt{(1+(\overline{\zeta}-\overline{\chi})(\zeta-\eta))( 1+(\overline{\eta}-\overline{\chi})(\eta-\eta))},
\end{equation}
and
\begin{equation}
K_{1,1}^{(\ast)}(\zeta,\overline{\zeta},\eta,\overline{\eta}|\chi,\overline{\chi})
=
\cK_{1,1}^{(\ast)}(\overline{\zeta},\eta|\chi,\overline{\chi})
\omega^{(\ast)}(\overline{\zeta},\eta|\chi,\overline{\chi}), \text{for $\zeta,\eta,\chi\in\C$}.
\end{equation}
Here, the symbols $\mathbf{b,e,w,s}$ are assigned to $\ast$ for the bulk and edge cases in strongly non-unitary regime, in the weakly non-unitary regime, and at the singular origin regime, respectively. 
In the same rule, we also write 
\begin{equation}
K_{1,2}^{(\ast)}(\zeta,\overline{\zeta},\eta,\overline{\eta}|ui,\overline{v})
=
\frac{\omega^{(\ast)}(\overline{\zeta},\eta|u,\overline{v})}{\cK_{1,1}^{(\ast)}(\bar{u},v|u,\bar{v})}
\det\begin{pmatrix}
\cK_{1,1}^{(\ast)}(\bar{u},v|u,\bar{v}) & \cK_{1,1}^{(\ast)}(\bar{u},\eta|u,\bar{v}) \\
\cK_{1,1}^{(\ast)}(\bar{\zeta},v|u,\bar{v}) & \cK_{1,1}^{(\ast)}(\bar{\zeta},\eta|u,\bar{v})
\end{pmatrix},\quad\text{for $\zeta,\eta,u,v\in\C^4$}.
\end{equation}
The below is our main result in this note. 
\begin{thm}\label{MainResult}
We write 
\begin{equation}
\mathfrak{D}_{1,m}^{(N,k)}(\boldsymbol{\zeta}_{(k)})=\frac{1}{(N\delta_N(p))^k}D_{1,m}^{(N,k)}(\boldsymbol{z}_{(k)}),\quad \text{for $m=1,2$}. 
\end{equation}
For $m=1,2$, we set 
\begin{align}
\begin{split}
&\quad
\lim_{N\to\infty} \frac{1}{N^q} \mathfrak{D}_{1,m}^{(N,k)}(\boldsymbol{\zeta}_{(k)})
\\
&
= 
\begin{cases}
\mathfrak{D}_{1,m}^{(k,\mathbf{b})}(\boldsymbol{\zeta}_{(k)}) 
 & \text{if  $ p\in\mathrm{int}(S)$ and strongly non-unitary regime with $q=1$},  \\
 \mathfrak{D}_{1,m}^{(k,\mathbf{e})}(\boldsymbol{\zeta}_{(k)}) 
 & \text{if  $ p\in\partial(S)$ and strongly non-unitary regime with $q=1/2$},  \\
 \mathfrak{D}_{1,m}^{(k,\mathbf{w})}(\boldsymbol{\zeta}_{(k)}) 
 & \text{if  $ p=1$ and weakly non-unitary regime with $q=0$},  \\
 \mathfrak{D}_{1,m}^{(k,\mathbf{s})}(\boldsymbol{\zeta}_{(k)}) 
 & \text{if  $p=0$ and at the singular regime with $q=1$}. 
\end{cases}
\end{split}
\end{align}
\begin{itemize}
\item[(I)] \textbf{\textup{(Strongly non-unitary regime: bulk case)}} 
We write 
\begin{equation}
\cK_{1,1}^{(\mathbf{b})}(\overline{\zeta},\eta|\chi,\overline{\chi})
=
\frac{d}{dx}\Bigl(\frac{e^x-1}{x}\Bigr)\Bigr|_{x=(\overline{\zeta}-\overline{\chi})(\eta-\chi)},
\end{equation}
and 
\begin{equation}
\label{BulkWeight}
\omega^{(\mathbf{b})}(\overline{\zeta},\eta|\chi,\overline{\chi})
=
\varpi(\zeta,\overline{\zeta},\eta,\overline{\eta}|\chi,\overline{\chi})e^{-(\overline{\zeta}-\overline{\chi})(\eta-\chi)}. 
\end{equation}
Then, we have
\begin{equation}
\label{D11BULK}
\mathfrak{D}_{1,1}^{(k,\mathbf{b})}(\boldsymbol{\zeta}_{(k)}) 
=
\frac{b(b+1)}{a+b+1}
\frac{\Bigl(|p|^2-\frac{a}{b+1} \Bigr)\Bigl(\frac{a+1}{b}-|p|^2 \Bigr)}{|p|^2}
\overset{k}{\underset{i,j=2}{\det}} 
\bigl(K_{1,1}^{(\mathbf{b})}(\zeta_i,\overline{\zeta_i},\zeta_j,\overline{\zeta_j}|\zeta_1,\overline{\zeta_1}) \bigr),
\end{equation}
uniformly for $\zeta_k$ in compact subsets of $\C$. 
Moreover, we have 
\begin{align}
\begin{split}
\label{D12BULK}
 \mathfrak{D}_{1,2}^{(k,\mathbf{b})}(\boldsymbol{\zeta}_{(k)})
&=
-\frac{b(b+1)}{a+b+1}\frac{\Bigl( |p|^2-\frac{a}{b+1} \Bigr)\Bigl( \frac{a+1}{b}-|p|^2 \Bigr)}{|p|^2}
\mathcal{K}^{(\mathbf{b})}(\overline{\zeta_1},\zeta_2|\zeta_1,\overline{\zeta_2})
\\
&\times
\overset{k}{\underset{i,j=3}{\det}} 
\bigl(K_{1,2}^{(\mathbf{b})}(\zeta_i,\overline{\zeta_i},\zeta_j,\overline{\zeta_j}|\zeta_1,\overline{\zeta_2}) \bigr), 
\end{split}
\end{align}
uniformly for $\zeta_k$ in compact subsets of $\C$. 
\item[(II)] \textbf{\textup{(Strongly non-unitary regime: edge case)}} 
For $x\in\C$, we define
\begin{equation}
\label{EdgeF1}
F(x)=\frac{1}{2}\erfc\Bigl(\frac{x}{\sqrt{2}}\Bigr),
\end{equation}
and 
\begin{equation}
\label{EdgeF2}
\mathcal{F}(x)=e^{-\frac{1}{2}x^2}-\sqrt{2\pi}xF(x).
\end{equation}
Let 
\begin{equation}
\label{ThmEdge1}
\mathfrak{c}_{\mathfrak{s}}
=
\begin{cases}
\sqrt{\frac{a+b+1}{2\pi(a+1)b}} & \text{if $\mathfrak{s}=1$}, \\
\sqrt{\frac{a+b+1}{2\pi a(b+1)}}  & \text{if $\mathfrak{s}=-1$},
\end{cases}
\end{equation}
and we write 
\begin{align}
\begin{split}
\label{ThmEdge2}
&\quad H(a,b,c,d,f)
\\
&=
-
\frac{\sqrt{2\pi}
\frac{d}{dx}
\Bigl[
e^{\frac{(a+x)^2}{2}}
\Bigl(e^{-f}F(b+x)F(c+x)-F(d+x)F(a+x)+fF(d)F(a+x) \Bigr)
\Bigr]\Bigr|_{x=0}}{e^{\frac{1}{2}a^2}\mathcal{F}(a)}.
\end{split}
\end{align}
Finally, we write 
\begin{equation}
\cK_{1,1}^{(\mathbf{e})}(\overline{\zeta},\eta|\chi,\overline{\chi})
=
\frac{e^{\overline{\zeta}\eta}H(\overline{\chi}+\chi,\overline{\zeta}+\chi,\overline{\chi}+\eta,\overline{\zeta}+\eta,(\overline{\zeta}-\overline{\chi})(\eta-\chi))}{(\overline{\zeta}-\overline{\chi})^2(\eta-\chi)^2}
\end{equation}
and 
\begin{equation}
\omega^{(\mathbf{e})}(\overline{\zeta},\eta|\chi,\overline{\chi})
=
\varpi(\zeta,\overline{\zeta},\eta,\overline{\eta}|\chi,\overline{\chi})e^{-\overline{\zeta}\eta}.
\end{equation}
Then, we have 
\begin{equation}
\label{D11EDGE}
\mathfrak{D}_{1,1}^{(k,\mathbf{e})}(\boldsymbol{\zeta}_{(k)}) 
=
\mathfrak{c}_{\mathfrak{s}}
\mathcal{F}(\zeta_1+\overline{\zeta_1}) 
\overset{k}{\underset{i,j=2}{\det}} 
\bigl(K_{1,1}^{(\mathbf{e})}(\zeta_i,\overline{\zeta_i},\zeta_j,\overline{\zeta_j}|\zeta_1,\overline{\zeta_1}) \bigr), 
\end{equation}
uniformly for $\zeta_k$ in compact subsets of $\C$. 
Moreover, we have 
\begin{align}
\label{D12EDGE}
\begin{split}
\mathfrak{D}_{1,2}^{(k,\mathbf{e})}(\boldsymbol{\zeta}_{(k)})
&=
-
\mathfrak{c}_{\mathfrak{s}}e^{-|\zeta_1-\zeta_2|^2}\mathcal{F}(\zeta_1+\overline{\zeta_2})
\frac{H(\overline{\zeta_2}+\zeta_1,\overline{\zeta_1}+\zeta_1,\overline{\zeta_2}+\zeta_2,\overline{\zeta_2}+\zeta_1,-(\overline{\zeta_1}-\overline{\zeta_2})(\zeta_1-\eta_2))}{(\overline{\zeta_1}-\overline{\zeta_2})^2(\zeta_1-\zeta_2)^2}
\\
&\times
\overset{k}{\underset{i,j=3}{\det}} 
\bigl(K_{1,2}^{(\mathbf{e})}(\zeta_i,\overline{\zeta_i},\zeta_j,\overline{\zeta_j}|\zeta_1,\overline{\zeta_2}) \bigr), 
\end{split}
\end{align}
uniformly for $\zeta_k$ in compact subsets of $\C$. 

\item[(III)] \textbf{\textup{(Weakly non-unitary regime)}} 
For $z\in\C$ and $\rho>0$, we define
\begin{equation}
\label{WeakL1}
L_{\rho}(z)=\frac{1}{\sqrt{2\pi}}\int_{-\frac{\rho}{\sqrt{2}}}^{\frac{\rho}{\sqrt{2}}}e^{-\frac{1}{2}(z-\xi)^2}d\xi,
\end{equation}
and 
\begin{equation}
\label{WeakL2}
\mathcal{L}_{\rho}(z)
=
\frac{1}{\sqrt{2\pi}}
\Bigl\{ \Bigl(z+\frac{\rho}{\sqrt{2}}\Bigr)e^{-\frac{1}{2}(z-\frac{\rho}{\sqrt{2}})^2}
-\Bigl(z-\frac{\rho}{\sqrt{2}}\Bigr)
\Bigl(\sqrt{2\pi}\Bigl(z+\frac{\rho}{\sqrt{2}} \Bigr)L_{\rho}(z)+e^{-\frac{1}{2}(z+\frac{\rho}{\sqrt{2}})^2} \Bigr)
\Bigr\}. 
\end{equation}
We write 
\begin{equation}
\label{ThmWeak1}
\mathcal{A}_{\rho}(a,b,c,d,f)
=\Bigl(\frac{\rho}{\sqrt{2}}+a\Bigr)\Bigl(\frac{\rho}{\sqrt{2}}-a\Bigr)
\Bigl( 
e^{f}(f-1)L_{\rho}(a)L_{\rho}(d)+L_{\rho}(b)L_{\rho}(c)
\Bigr),
\end{equation}
\begin{equation}
\label{ThmWeak2}
\mathcal{B}_{\rho}(a,b,c)
=
L_{\rho}(b)
\Bigl\{
\Bigl(a+\frac{\rho}{\sqrt{2}}\Bigr)\frac{e^{-\frac{1}{2}(c-\frac{\rho}{\sqrt{2}})^2}}{\sqrt{2\pi}}
-
\Bigl(a-\frac{\rho}{\sqrt{2}}\Bigr)\frac{e^{-\frac{1}{2}(c+\frac{\rho}{\sqrt{2}})^2}}{\sqrt{2\pi}}
\Bigr\},
\end{equation}
and 
\begin{equation}
\label{ThmWeak3}
\mathcal{C}_{\rho}(a,b)
=
\frac{e^{-\frac{1}{2}(a-\frac{\rho}{\sqrt{2}})^2-\frac{1}{2}(b+\frac{\rho}{\sqrt{2}})^2}}{\sqrt{2\pi}^2}
+
\frac{e^{-\frac{1}{2}(a+\frac{\rho}{\sqrt{2}})^2-\frac{1}{2}(b-\frac{\rho}{\sqrt{2}})^2}}{\sqrt{2\pi}^2}.
\end{equation}
We also write
\begin{align}
\begin{split}
\label{ThmWeak4}
\mathcal{H}_{\rho}(a,b,c,d,f)
&=
\mathcal{A}_{\rho}(a,b,c,d,f)
+
\mathcal{B}_{\rho}(a,b,c)
+
\mathcal{B}_{\rho}(a,c,b)
+
fe^{f}
\mathcal{B}_{\rho}(a,d,a)
-e^{f}\mathcal{B}_{\rho}(a,d,a)
\\
&
-e^{f}\mathcal{B}_{\rho}(a,a,d)
+
\mathcal{C}_{\rho}(b,c)
-e^{f}
\mathcal{C}_{\rho}(a,d).
\end{split}
\end{align}
Finally, we write 
\begin{equation}
\label{ThmWeak6}
\cK_{1,1}^{(\mathbf{w})}(\overline{\zeta},\eta|\chi,\overline{\chi})=
\frac{\mathcal{H}_{\rho}
(\overline{\chi}+\chi,\overline{\zeta}+\chi,\overline{\chi}+\eta,\overline{\zeta}+\eta,(\overline{\zeta}-\overline{\chi})(\eta-\chi))
}{(\overline{\zeta}-\overline{\chi})^2(\eta-\chi)^2\mathcal{L}_{\rho}(\chi+\overline{\chi})},
\end{equation}
and 
\begin{equation}
\label{ThmWeak7}
\omega^{(\mathbf{w})}(\overline{\zeta},\eta|\chi,\overline{\chi})
=
\varpi(\zeta,\overline{\zeta},\eta,\overline{\eta}|\chi,\overline{\chi})
e^{-(\overline{\zeta}-\overline{\chi})(\eta-\chi)}. 
\end{equation}
Then, we have 
\begin{equation}
\label{D11WEAK}
\mathfrak{D}_{1,1}^{(k,\mathbf{w})}(\boldsymbol{\zeta}_{(k)}) 
=
\mathcal{L}_{\rho}(\zeta_1+\overline{\zeta_1})
\overset{k}{\underset{i,j=2}{\det}} 
\bigl(K_{1,1}^{(\mathbf{w})}(\zeta_i,\overline{\zeta_i},\zeta_j,\overline{\zeta_j}|\zeta_1,\overline{\zeta_1}) \bigr),
\end{equation}
uniformly for $\zeta_k$ in compact subsets of $\C$. 
Moreover, we have 
\begin{align}
\label{D12WEAK}
\mathfrak{D}_{1,2}^{(k,\mathbf{w})}(\boldsymbol{\zeta}_{(k)})
=
-
\mathcal{L}_{\rho}(\zeta_1+\overline{\zeta_2})
\mathcal{K}^{(\mathbf{w})}(\overline{\zeta_1},\zeta_2|\zeta_1,\overline{\zeta_2})
\overset{k}{\underset{i,j=3}{\det}} 
\bigl(K_{1,2}^{(\mathbf{w})}(\zeta_i,\overline{\zeta_i},\zeta_j,\overline{\zeta_j}|\zeta_1,\overline{\zeta_2}) \bigr), 
\end{align}
uniformly for $\zeta_k$ in compact subsets of $\C$. 

\item[(IV)] \textbf{\textup{(Singular origin regime)}} 
We denote the two-parametric Mittag-Lehler function by
\begin{equation}
\label{ThmSingular1}
E_{a,b}(z)=\sum_{k=0}^{\infty}\frac{z^k}{\Gamma(ak+b)}.
\end{equation}
We also write 
\begin{equation}
\label{ThmSingular2}
\mathcal{E}_{1,c}(z|x)=(x-c)E_{1,c+1}(z)+\frac{1}{\Gamma(c)},
\end{equation}
and we define
\begin{align}
\begin{split}
\label{ThmSingular3}
&\quad
\mathcal{S}_L(\overline{\zeta}\chi,\overline{\chi}\eta,\overline{\zeta}\eta,\overline{\chi}\chi,(\overline{\zeta}-\overline{\chi})(\eta-\chi))
\\
&
=
(\chi\overline{\chi}-L)
\Bigl( 
E_{1,L+1}(\overline{\zeta}\chi)E_{1,L+1}(\overline{\chi}\eta)
-(1-(\overline{\zeta}-\overline{\chi})(\eta-\chi))E_{1,L+1}(\overline{\zeta}\eta)E_{1,L+1}(\overline{\chi}\chi)
\Bigr)
\\
&
+\frac{1}{\Gamma(L)}
\Bigl( E_{1,L+1}(\overline{\zeta}\chi)+E_{1,L+1}(\overline{\chi}\eta)-E_{1,L+1}(\overline{\zeta}\eta)-E_{1,L+1}(\overline{\chi}\chi)
+(\overline{\zeta}-\overline{\chi})(\eta-\chi)E_{1,L+1}(\overline{\chi}\chi)
\Bigr).
\end{split}
\end{align}
Finally, we write
\begin{equation}
\label{ThmSingular5}
\cK_{1,1}^{(\mathbf{s})}(\overline{\zeta},\eta|\chi,\overline{\chi})
=
\frac{\mathcal{S}_L(\overline{\zeta}\chi,\overline{\chi}\eta,\overline{\zeta}\eta,\overline{\chi}\chi,(\overline{\zeta}-\overline{\chi})(\eta-\chi))}{(\overline{\zeta}-\overline{\chi})^2(\eta-\chi)^2\mathcal{E}_{1,L}(\overline{\chi}\chi|\overline{\chi}\chi)},
\end{equation}
and 
\begin{equation}
\label{ThmSingular6}
\omega^{(\mathbf{s})}(\overline{\zeta},\eta|\chi,\overline{\chi})
=
\varpi(\zeta,\overline{\zeta},\eta,\overline{\eta}|\chi,\overline{\chi})
(\overline{\zeta}\eta)^{L}e^{-\frac{1}{2}(|\zeta|^2+|\eta|^2)}.
\end{equation}
Then, we have 
\begin{equation}
\label{D11SINGULAR}
\mathfrak{D}_{1,1}^{(k,\mathbf{s})}(\boldsymbol{\zeta}_{(k)}) 
=
|\zeta_1|^{2L-2}\mathcal{E}_{1,L}(|\zeta_1|^2)e^{-|\zeta_1|^2}
\overset{k}{\underset{i,j=2}{\det}} 
\bigl(K_{1,1}^{(\mathbf{s})}(\zeta_i,\overline{\zeta_i},\zeta_j,\overline{\zeta_j}|\zeta_1,\overline{\zeta_1}) \bigr),
\end{equation}
uniformly for $\zeta_k$ in compact subsets of $\C$. 
Moreover, we have 
\begin{align}
\begin{split}
\label{D12SINGULAR}
\mathfrak{D}_{1,2}^{(k,\mathbf{s})}(\boldsymbol{\zeta}_{(k)}) 
&=
-
\frac{\mathcal{E}_{1,L}(\zeta_1\overline{\zeta_2})|\zeta_1|^{2L}|\zeta_2|^{2L}}{\zeta_1\overline{\zeta_2}}e^{-|\zeta_1|^2-|\zeta_2|^2}
\cK_{1,1}^{(\mathbf{s})}(\overline{\zeta_1},\zeta_2|\zeta_1,\overline{\zeta_2})
\\
&\times
\overset{k}{\underset{i,j=3}{\det}} 
\bigl(K_{1,2}^{(\mathbf{s})}(\zeta_i,\overline{\zeta_i},\zeta_j,\overline{\zeta_j}|\zeta_1,\overline{\zeta_2}) \bigr), 
\end{split}
\end{align}
uniformly for $\zeta_k$ in compact subsets of $\C$. 
\end{itemize}
\end{thm}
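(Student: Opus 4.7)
The plan is to combine the simplified finite-$N$ kernel of Theorem~\ref{Thm_Simplification} with the determinantal factorisation \eqref{D11ver1} and substitute the rescaling $z_j=e^{i\theta}(p+\mathfrak{s}\zeta_j/\sqrt{N\delta_N(p)})$, then take $N\to\infty$ ingredient by ingredient. The overall prefactor in each of \eqref{D11BULK}, \eqref{D11EDGE}, \eqref{D11WEAK}, \eqref{D11SINGULAR} will come from the limit of $\frac{n(z_1\overline{z}_1-L/n)}{z_1\overline{z}_1(1+z_1\overline{z}_1)}\widehat{g}_{N-1}^{(n,L)}(z_1\overline{z}_1)\widehat{\omega}^{(n,L)}(\overline{z}_1,z_1)$ evaluated at the rescaled point, and the determinant from the pointwise limit of $K_{1,1}^{(N-1)}$. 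The off-diagonal statements \eqref{D12BULK}, \eqref{D12EDGE}, \eqref{D12WEAK}, \eqref{D12SINGULAR} then follow immediately by applying Lemma~\ref{Decoupling} to the diagonal limit, so it suffices to handle the diagonal case in each of the four regimes.

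The basic building block is the partial sum $\widehat{q}_N^{(n,L)}(y|x)$ of \eqref{HQNL}. Its scaling behaviour is governed by where the dominant saddle of the associated binomial integrand sits relative to the summation window. In the strongly non-unitary bulk the saddle lies strictly below $N$ and the sum essentially saturates, producing an exponential in the scaling variable; at the edge, uniform saddle-point analysis produces the $\frac{1}{2}\erfc$ factor $F$ of \eqref{EdgeF1}; in the weakly non-unitary regime the upper limit $N$ sits a bounded distance from the saddle after rescaling, and the sum converges to the truncated Gaussian $L_\rho$ of \eqref{WeakL1}; at the singular origin $L$ is fixed, so $\widehat{q}_N^{(n,L)}$ converges directly to a truncation of the Mittag--Leffler series $E_{1,L+1}$ in \eqref{ThmSingular1}. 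The corresponding limit of $\widehat{g}_N^{(n,L)}$ in \eqref{GNL2} is a discrete derivative in the scaling variable of these limits and produces $\mathcal{F}$, $\mathcal{L}_\rho$ and $\mathcal{E}_{1,L}$ respectively.

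Assembling the limit of $\cK_{1,1}^{(N)}$ from the three pieces of Theorem~\ref{Thm_Simplification}: the Christoffel--Darboux-type combination $(L+N+1)Q_{N+1}-(n-N-1)\lambda\overline{\lambda}\,Q_N$ in $\mathfrak{H}_N^{(n,L)}$ of \eqref{PartHN} becomes, after rescaling, a discrete derivative in $N$ of products of the four evaluations $\widehat{q}_N(\overline{\lambda}w|\lambda\overline{\lambda})$, $\widehat{q}_N(\overline{z}\lambda|\lambda\overline{\lambda})$, $\widehat{q}_N(\lambda\overline{\lambda}|\lambda\overline{\lambda})$, $\widehat{q}_N(\overline{z}w|\lambda\overline{\lambda})$. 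This produces the scaling functions $\frac{d}{dx}\bigl(\frac{e^x-1}{x}\bigr)$ in the bulk, $H$ of \eqref{ThmEdge2} at the edge, $\mathcal{H}_\rho$ of \eqref{ThmWeak4} in the weak regime, and $\mathcal{S}_L$ of \eqref{ThmSingular3} at the origin. The piece $\mathfrak{I}_N^{(n,L)}$ in \eqref{Part1N} contributes the remaining terms of $\mathcal{H}_\rho$ and $\mathcal{S}_L$, while the tails $\mathfrak{II}_N$ and $\mathfrak{III}_N$ of \eqref{Part2N}--\eqref{Part3N} are subleading in each of the four regimes because the factor $(\overline{z}w)^N/\bigl(\widehat{\omega}^{(n,L)}(\overline{\lambda},\lambda)\widehat{g}_N^{(n,L)}(\lambda\overline{\lambda})\bigr)$ decays after rescaling, as indicated in the remark following Theorem~\ref{Thm_Simplification}.

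The main obstacles are threefold. First, the apparent double poles $1/(\overline{\zeta}-\overline{\chi})^2(\eta-\chi)^2$ in the limiting kernels are removable; verifying this cleanly requires Taylor-expanding the numerators of $\mathfrak{H}_N$ and $\mathfrak{F}_N$ on the coincidence locus $\overline{\zeta}=\overline{\chi}$, $\eta=\chi$ uniformly in $N$, which is what exposes the derivative structure in the bulk and weak limits. Second, the edge case is the most delicate because its scaling exponent $q=1/2$ differs from the bulk and singular exponent $q=1$, so next-to-leading terms in the uniform saddle-point expansion of $\widehat{q}_N$ must be tracked and shown to combine into exactly $H$ of \eqref{ThmEdge2}. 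Third, uniform convergence on compacts of the rescaled $\widehat{q}_N^{(n,L)}$ requires genuine care in the weak regime because $n,L$ grow like $N^2/\rho^2$ while the summation range has length $N$, so the standard Laplace bound must be replaced by a uniform Gaussian-truncation estimate on the window of size $\rho$. Once these points are settled, each diagonal statement \eqref{D11BULK}, \eqref{D11EDGE}, \eqref{D11WEAK}, \eqref{D11SINGULAR} follows by direct substitution of the established limits into \eqref{D11ver1}, and the off-diagonal versions then follow from Lemma~\ref{Decoupling}.
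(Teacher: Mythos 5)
Your plan reproduces the paper's argument almost step for step: rescale in \eqref{D11ver1}, push the limit through $\widehat{g}_{N-1}^{(n,L)}\widehat{\omega}^{(n,L)}$ for the prefactor and through the decomposition $\mathfrak{H}_N+\mathfrak{F}_N$ of Theorem~\ref{Thm_Simplification} for the kernel, verify that $\mathfrak{II}_N$ and $\mathfrak{III}_N$ are subleading, and deduce the off-diagonal statements from Lemma~\ref{Decoupling}. The one place where you deviate is in how you propose to obtain the asymptotics of $\widehat{q}_N^{(n,L)}$ (the content of Lemma~\ref{Lem_Asym}): you frame this as saddle-point/Laplace analysis of the binomial integrand, whereas the paper rewrites the partial sum as a binomial tail probability $\P(L\leq X\leq N+L-1)$ with $X\sim B(n+L,\mathfrak{p})$ and invokes Gaussian approximation (CLT with Edgeworth correction) in the strong and weak regimes and Poisson approximation in the singular regime, following \cite{SFv3}. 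These are equivalent in substance — the paper itself remarks that the same lemma can be obtained from Temme's asymptotics for the incomplete beta function \cite{TEM79}, which is exactly the analytic route you sketch — so the buy is merely presentational: the probabilistic formulation makes the restriction to real $z,w$ and integer $n,L$ natural and defers the complex case to Vitali's theorem, while the saddle-point route works directly with the analytic continuation but requires more bookkeeping for the next-to-leading corrections. You are right that the edge case is the delicate one because $q=1/2$ means the $O(N^{-1/2})$ corrections to $\widehat{q}_N$ and $\widehat{g}_N$ cannot be discarded; the paper handles this via the explicit $C_{\mathrm{out}}/\sqrt{N}$ and $k\sqrt{\tfrac{a+b+1}{2\pi(a+1)bN}}e^{-\frac12(\overline{\zeta}+\eta)^2}$ terms in Lemma~\ref{Lem_Asym}(I), which then combine algebraically into $H$ of \eqref{ThmEdge2} exactly as you anticipate. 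One small imprecision: you describe $\mathfrak{I}_N$ as contributing only to $\mathcal{H}_\rho$ and $\mathcal{S}_L$, but in fact $\mathfrak{I}_N$ is the second essential piece in all four regimes (it supplies the $1/[(\overline{\zeta}-\overline{\chi})(\eta-\chi)]$ term that completes the derivative structure $\frac{d}{dx}\bigl(\tfrac{e^x-1}{x}\bigr)$ in the bulk and the analogous pieces elsewhere). This does not change the proof, but the phrasing should be corrected.
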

\begin{rem}
By \eqref{WeakL2}, note that 
\[
\cL_{\rho}(x)\sim \frac{2}{\rho^2},\quad \text{as $\rho\to\infty$}. 
\]
From the definition of \eqref{ThmWeak4}, we have 
\[
\frac{2}{\rho^2}\mathfrak{D}_{1,1}^{(k,\mathbf{w})}(\boldsymbol{\zeta}_{(k)}) \to \mathfrak{D}_{1,1}^{(k,\mathbf{b})}(\boldsymbol{\zeta}_{(k)}),\quad \text{as $\rho\to\infty$}.
\]
This implies that the weak non-unitary regime recovers the strongly non-unitary regime. 
\end{rem}

As we already remarked, we could get the scaling limits of the conditional expectation of the diagonal and the off-diagonal overlaps from \eqref{On conditional expectation 1} and \eqref{Off conditional expectation 1}. 
To state our last result, we now introduce some functions. 
For $z\in\C$, we define 
\[
\Psi_{1,2}^{(\mathbf{b})}(z)=
\frac{1}{|z|^4} \frac{1-(1+|z|^2)e^{-|z|^2}}{1-e^{-|z|^2}}.
\]
For $z,a,b,c,d,f\in\C$, we define 
\begin{align*}
&\Psi_{1,1}^{(\mathbf{e})}(z)
=\frac{\mathcal{F}(z)}{F(z)},
\quad
\Psi_{1,2}^{(\mathbf{e})}(a,b,c,d,f)
=
-\frac{H(a,b,c,d,-|f|^2)e^{-|f|^2}\cF(a)}{|f|^4(F(b)F(c)-e^{-|f|^2}F(a)F(d))},
\\
&\Psi_{1,1}^{(\rho)}(z)
=\frac{\mathcal{L}_{\rho}(z)}{L_{\rho}(z)},
\quad
\Psi_{1,2}^{(\rho)}(a,b,c,d,f)
=
-\frac{\mathcal{H}_{\rho}(a,b,c,d,-|f|^2)}{|f|^4(L_{\rho}(b)L_{\rho}(c)-e^{-|f|^2}L_{\rho}(a)L_{\rho}(d))},
\\
&\Psi_{1,1}^{(L)}(z)
=\frac{\mathcal{E}_{L}(|z|^2)}{|z|^2E_{1,1+L}(|z|^2)},
\quad
\Psi_{1,2}^{(L)}(a,b,c,d,f)
=
-
\frac{\mathcal{S}_L(a,b,c,d,-|f|^2)}{a|f|^2(E_{1,1+L}(b)E_{1,1+L}(c)-E_{1,1+L}(a)E_{1,1+L}(d))}.
\end{align*}

Then, combining Theorem \ref{MainResult} with \eqref{On conditional expectation 1} and \eqref{Off conditional expectation 1}, we can readily find the following: 
\begin{cor}\label{Cor}
\begin{itemize}
\item[(I)] \textbf{\textup{(Strongly non-unitary regime: bulk case)}} 
Conditionally on $\{z_1=p+\zeta_1/\sqrt{N\delta(p)}\}$ for $\mathrm{int}(S)$, we have 
\[
\lim_{N\to\infty}
\frac{1}{N}
\E_N\Bigl[\mathcal{O}_{1,1}\Bigr|z_1=p+\frac{\zeta_1}{\sqrt{N\delta(p)}}\Bigr]=
\frac{b(b+1)}{a+b+1}
\frac{\Bigl(|p|^2-\frac{a}{b+1} \Bigr)\Bigl(\frac{a+1}{b}-|p|^2 \Bigr)}{|p|^2},
\]
uniformly for $\zeta_1$ in a compact subset of $\C$.
Moreover, conditionally on $\{z_1=p+\zeta_1/\sqrt{N\delta(p)},z_2=p+\zeta_2/\sqrt{N\delta(p)}\}$ for $\mathrm{int}(S)$, we have 
\begin{align*}
\lim_{N\to\infty}
&\quad\frac{1}{N}
\E_N\Bigl[\mathcal{O}_{1,2}\Bigr|z_1=p+\frac{\zeta_1}{\sqrt{N\delta(p)}},z_2=p+\frac{\zeta_2}{\sqrt{N\delta(p)}}\Bigr]
\\
&=
-
\frac{b(b+1)}{a+b+1}
\frac{\Bigl(|p|^2-\frac{a}{b+1} \Bigr)\Bigl(\frac{a+1}{b}-|p|^2 \Bigr)}{|p|^2}
\Psi_{1,2}^{(\mathbf{b})}(\zeta_1-\zeta_2),
\end{align*}
uniformly for $\zeta_1,\zeta_2$ in compact subsets of $\C$.
\item[(II)] \textbf{\textup{(Strongly non-unitary regime: edge case)}}
Conditionally on $\{z_1=e^{i\theta}(p+\mathfrak{s}\zeta_1/\sqrt{N\delta(p)})\}$ for $p\in\partial S$ with $\theta\in[0,2\pi)$, we have 
\[
\lim_{N\to\infty}
\frac{1}{\sqrt{N}}
\E_N\Bigl[\mathcal{O}_{1,1}\Bigr|z_1=e^{i\theta}\Bigl( p+\mathfrak{s}\frac{\zeta_1}{\sqrt{N\delta(p)}}\Bigr)\Bigr]=
\mathfrak{c}_{\mathfrak{s}}
\Psi_{1,1}^{(\mathbf{e})}(\zeta_1+\overline{\zeta}_1),
\]
uniformly for $\zeta_1$ in a compact subset of $\C$.
Moreover, conditionally on $\{z_1=p+\mathfrak{s}\zeta_1/\sqrt{N\delta(p)},z_2=p+\mathfrak{s}\zeta_2/\sqrt{N\delta(p)}\}$ for $p\in\partial S$, we have 
\begin{align*}
&\lim_{N\to\infty}
\frac{1}{\sqrt{N}}
\E_N\Bigl[\mathcal{O}_{1,1}\Bigr|z_1=p+\mathfrak{s}\frac{\zeta_1}{\sqrt{N\delta(p)}},z_2=p+\mathfrak{s}\frac{\zeta_2}{\sqrt{N\delta(p)}}\Bigr]
\\
&
=
\mathfrak{c}_{\mathfrak{s}}
\Psi_{1,2}^{(\mathbf{e})}
\left(\zeta_1+\overline{\zeta}_2,\zeta_1+\overline{\zeta}_1,\zeta_2+\overline{\zeta}_2,\zeta_2+\overline{\zeta}_1,\zeta_1-\zeta_2\right),
\end{align*}
uniformly for $\zeta_1,\zeta_2$ in compact subsets of $\C$.
\item[(III)] \textbf{\textup{(Weakly non-unitary regime)}} 
Conditionally on $\{z_1=e^{i\theta}(1+\zeta_1/\sqrt{N\delta(1)})\}$ for $p=1$ with $\theta\in[0,2\pi)$, we have 
\[
\lim_{N\to\infty}
\E_N\Bigl[\mathcal{O}_{1,1}\Bigr|z_1=e^{i\theta}\Bigl( p+\frac{\zeta_1}{\sqrt{N\delta(1)}}\Bigr)\Bigr]
=
\Psi_{1,1}^{(\rho)}(\zeta_1+\overline{\zeta}_1),
\]
uniformly for $\zeta_1$ in a compact subset of $\C$.
Moreover, conditionally on $\{z_1=1+\zeta_1/\sqrt{N\delta(1)},z_2=1+\zeta_2/\sqrt{N\delta(1)}\}$, we have 
\begin{align*}
&\quad\lim_{N\to\infty}
\E_N\Bigl[\mathcal{O}_{1,2}\Bigr|z_1=1+\frac{\zeta_1}{\sqrt{N\delta(1)}},z_2=1+\frac{\zeta_2}{\sqrt{N\delta(1)}}\Bigr]
\\
&=\Psi_{1,2}^{(\rho)}
\left(\zeta_1+\overline{\zeta}_2,\zeta_1+\overline{\zeta}_1,\zeta_2+\overline{\zeta}_2,\zeta_2+\overline{\zeta}_1,\zeta_1-\zeta_2\right)
\end{align*}
uniformly for $\zeta_1,\zeta_2$ in compact subsets of $\C$.
\item[(IV)] \textbf{\textup{(Singular origin regime)}}
Conditionally on $\{z_1=\zeta_1/\sqrt{N\delta(p)}\}$ for $p\in\partial S$, we have 
\[
\lim_{N\to\infty}
\frac{1}{N}
\E_N\Bigl[\mathcal{O}_{1,1}\Bigr|z_1=\frac{\zeta_1}{\sqrt{N\delta(0)}}\Bigr]=
\Psi_{1,1}^{(L)}(\zeta_1),
\]
uniformly for $\zeta_1$ in a compact subset of $\C$.
Moreover, conditionally on $\{z_1=p+\zeta_1/\sqrt{N\delta(0)},z_2=\zeta_2/\sqrt{N\delta(0)}\}$, we have 
\[
\lim_{N\to\infty} 
\frac{1}{N}\E_N\Bigl[\mathcal{O}_{1,2}\Bigr| z_1=\zeta_1,z_2=\zeta_2\Bigr]
=\Psi_{1,2}^{(L)}
\left(\zeta_1\overline{\zeta}_2,\zeta_1\overline{\zeta}_1,\zeta_2\overline{\zeta}_2,\zeta_2\overline{\zeta}_1,\zeta_1-\zeta_2\right),
\]
uniformly for $\zeta_1,\zeta_2$ in a compact subset of $\C$.
\end{itemize}
\end{cor}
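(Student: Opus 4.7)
The plan is to read Corollary~\ref{Cor} as a direct consequence of Theorem~\ref{MainResult}, combined with the identities \eqref{On conditional expectation 1} and \eqref{Off conditional expectation 1} and the known scaling limits of the correlation kernel $K_N$ recalled at the end of Section~\ref{Section2}. The corollary is essentially a bookkeeping statement: once one has both the asymptotics of $D_{1,1}^{(N,k)}$, $D_{1,2}^{(N,k)}$ and of the correlation kernel $K_N$, the conditional expectations are ratios, and the $\Psi_{1,1}^{(\ast)}$, $\Psi_{1,2}^{(\ast)}$ functions introduced just before the statement are defined precisely so as to absorb these ratios.

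More concretely, for the diagonal overlap, \eqref{On conditional expectation 1} gives
\[
\E_N[\mathcal{O}_{1,1}\mid \lambda_1=z_1]
= \frac{D_{1,1}^{(N,1)}(z_1)}{\mathbf{R}_{N,1}(z_1)},
\]
so after multiplying by $N\delta_N(p)/(N\delta_N(p))$ and inserting the appropriate powers of $N$ dictated by the scaling in Theorem~\ref{MainResult},
\[
\frac{1}{N^q}\E_N[\mathcal{O}_{1,1}\mid \lambda_1=z_1]
= \frac{\mathfrak{D}_{1,1}^{(N,1)}(\zeta_1)/N^q}{\frac{1}{N\delta_N(p)}\mathbf{R}_{N,1}(z_1)}.
\]
The numerator converges by Theorem~\ref{MainResult} to the one-point version of $\mathfrak{D}_{1,1}^{(k,\ast)}$, which in each regime factorises into a scalar times the limiting $K_{1,1}^{(\ast)}(\zeta_1,\overline{\zeta_1},\zeta_1,\overline{\zeta_1}\mid \zeta_1,\overline{\zeta_1})$; the denominator converges to the limiting one-point density of $K_N$ recalled in Section~\ref{Section2}, namely $F(\zeta_1+\overline{\zeta_1})$ in the edge case, $L_{\rho}(\zeta_1+\overline{\zeta_1})$ in the weakly non-unitary case, $|\zeta_1|^{2L}E_{1,L+1}(|\zeta_1|^2) e^{-|\zeta_1|^2}$ at the singular origin, and simply $1$ in the bulk. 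Taking ratios and identifying them with $\Psi_{1,1}^{(\ast)}$ produces the stated limits for $\mathcal{O}_{1,1}$.

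For the off-diagonal case, \eqref{Off conditional expectation 1} reads
\[
\E_N[\mathcal{O}_{1,2}\mid \lambda_1=z_1,\lambda_2=z_2]
= \frac{D_{1,2}^{(N,2)}(z_1,z_2)}{Z_N^{(n,L)}(z_1,z_2)},
\]
and the denominator is a $2\times2$ determinant of weighted correlation kernels. Rescaling both sides by $(N\delta_N(p))^2$ and applying Theorem~\ref{MainResult}~(I)--(IV) in the numerator together with the corresponding limits of $K_N$ in the denominator, one finds that the double determinants on both sides cancel (the $k=2$ determinantal part in $\mathfrak{D}_{1,2}^{(2,\ast)}$ is a single entry), leaving precisely the factor $(\bar\zeta_1-\bar\zeta_2)^{-2}(\zeta_1-\zeta_2)^{-2}$ that appears in the denominator of $\Psi_{1,2}^{(\ast)}$, together with the Mittag-Leffler/error/Gaussian combinations in the numerator. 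In each regime this collapse is a direct match with the explicit formulas \eqref{ThmEdge2}, \eqref{ThmWeak4}, \eqref{ThmSingular3} and the definitions of $\Psi_{1,2}^{(\ast)}$.

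The only nontrivial step is keeping the scaling factors and the $\mathfrak{s}=\pm 1$ edge/inner-edge choice consistent between numerator and denominator, and verifying that the $\widehat{\omega}^{(n,L)}$ prefactors inside $Z_N^{(n,L)}$ and $D_{1,2}^{(N,2)}$ cancel after substitution $z_j = e^{i\theta}(p + \mathfrak{s}\zeta_j/\sqrt{N\delta_N(p)})$; this is routine once one uses the same Taylor expansions of $\widehat{\omega}^{(n,L)}$ that already appear in the proof of Theorem~\ref{MainResult}. The main potential pitfall, and thus the place requiring the most care, is the edge case, where the factor $\mathfrak{c}_{\mathfrak{s}}$ arises from the scaling of $\widehat{g}_{N-1}^{(n,L)}$ and $\widehat{\omega}^{(n,L)}$ at the boundary and must be tracked separately for the inner and outer edges. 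With these identifications the four parts of Corollary~\ref{Cor} follow without further analysis.
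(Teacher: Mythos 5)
The paper itself treats Corollary~\ref{Cor} as an immediate consequence of Theorem~\ref{MainResult} together with the exact finite-$N$ identities \eqref{On conditional expectation 1} and \eqref{Off conditional expectation 1}, offering no separate proof beyond the phrase ``we can readily find.'' Your proposal spells out exactly this reasoning: rescale both sides by $N\delta_N(p)$ raised to the appropriate power and an extra $N^{q}$, invoke the scaling limits of $D_{1,1}^{(N,1)}$, $D_{1,2}^{(N,2)}$ from Theorem~\ref{MainResult} in the numerator and the known limits of $K_N$ (hence $\mathbf{R}_{N,1}$ and $Z_N^{(n,L)}$) in the denominator, and observe that the $\Psi$ functions are defined precisely as these ratios. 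This is the same route as the paper, carried out with more explicit bookkeeping of the prefactors and the $(\bar\zeta_1-\bar\zeta_2)^{-2}(\zeta_1-\zeta_2)^{-2}$ and $\mathfrak{c}_{\mathfrak{s}}$ factors; no gap.
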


\begin{rem}[Universality]
From Theorem~\ref{MainResult} and Corollary~\ref{Cor}, we conclude the universality of the overlaps for the non-Gaussian model. In fact, the results for Ginibre unitary ensemble in \cite{ATTZ} can be recovered in the strongly non-unitary regime up to constants. Also, the results in the weakly non-unitary and the singular origin regimes are same as the results for induced Ginibre unitary ensemble shown in \cite{N23}. 
\end{rem}

\begin{rem}
The differences of the scaling in the strongly non-unitary for the bulk and edge scaling limits and the weakly non-unitary regimes are explained as follows. The bulk scaling limit in the strongly non-unitary regime implies that the conditional expectation of the diagonal overlap is affected by the outer and inner boundaries with $O(N)$. Hence, if we zoom in edge points with $O(N^{-1/2})$, we can detect the corresponding scaling limit. On the other hand, in the weakly non-unitary regime, the inner and outer boundaries are close with order $O(N^{-1})$. Since this scale is canceled out with the bulk scale, an additional scaling is not necessary. 
\end{rem}

\section{Finite $N$-kernel analysis: Proofs of Proposition~\ref{Prop_FiniteN} and Theorem~\ref{Thm_Simplification} } \label{Section4}
In this section, we prove Proposition~\ref{Prop_FiniteN} and Theorem~\ref{Thm_Simplification}. 
Our strategy to construct a family of the planar orthogonal polynomial associated with the weight function \eqref{Weight} follows the moment method as in \cite{ATTZ}. 
For induced Ginibre unitary ensemble, the same strategy was also done in \cite{N23}. 
\subsection{Moment method}
We regard complex variables $z$ and $\overline{z}$ as independent variables. 
Then, there exists a family of bi-orthogonal polynomials associated with the weight function \eqref{Weight} denoted by $\{P_k(\cdot|a,\overline{a}),Q_k(\cdot|a,\overline{a})\}_{k=0}^{\infty}$ on $\C$ such that for $a\in\C$, 
\[
\langle P_j(\cdot|a,\overline{a}),Q_k(\cdot|a,\overline{a})\rangle _{\omega}
= 
\int_{\C}\overline{P_j(z|a,\overline{a})}Q_k(z|a,\overline{a})\omega(z,\overline{z}|a,\overline{a})dA(z)
= 
h_j\delta_{j,k},
\]
where $h_j$ is the norming constant. 
From the elementary linear algebra, we have 
\[
\prod_{2\leq j<k\leq N}|z_j-z_k|^2 \prod_{j=2}^{N}\omega(z_j,\overline{z_j}|z_1,\overline{z_1})
= 
\prod_{j=0}^{N-2}h_j \times \underset{2\leq j,k\leq N}{\det} \Bigl(K_{1,1}^{(N-1)}(z_i,\overline{z_i},z_j,\overline{z_j}| z_1, \overline{z_1} ) \Bigr),
\]
where $K_{1,1}^{(N-1)}$ is an integral kernel defined by 
\[
K_{1,1}^{(N)}(z,\overline{z},w,\overline{w}|a,\overline{a})
= \sum_{k=0}^{N-1}\frac{\overline{P_k(z|a,\overline{a})}Q_k(w|a,\overline{a})}{h_k}
\omega(z,\overline{z}|a,\overline{a}).
\]
Here, it would be convenient to define the reduced polynomial kernel $\cK_{1,1}^{(N)}$ via 
\begin{align*}
K_{1,1}^{(N)}(z,\overline{z},w,\overline{w}|a,\overline{a})&=\cK_{1,1}^{(N)}(\overline{z},w|a,\overline{a})\omega(z,\overline{z}|a,\overline{a}),\\
\cK_{1,1}^{(N)}(\overline{z},w|a,\overline{a})&=\sum_{k=0}^{N-1}\frac{\overline{P_k(z|a,\overline{a})}Q_k(w|a,\overline{a})}{h_k}.
\end{align*}
Then, the $k$-th conditional expectation of the diagonal overlap is given by
\[
D_{1,1}^{(N,k)}(\boldsymbol{z}_{(k)})
=
\frac{N!}{Z_N}\frac{|z_1|^{2L}}{(1+|z_1|^2)^{n+L+1}}\prod_{j=0}^{N-2}h_j\times
\underset{2\leq i,j\leq k}{\det}\Bigl(K_{1,1}^{(N-1)}(z_i,\overline{z_i},z_j,\overline{z_j}|z_1,\overline{z_1})\Bigr).
\]
As we already mentioned, from now on, we focus on the diagonal overlap case. 
The step to get from \eqref{D11ver1} to \eqref{D12ver1} is done similar to \cite[p.13]{ATTZ}. 
Now, we shall prove Proposition~\ref{Prop_FiniteN}. 
\begin{proof}[Proof of Proposition~\ref{Prop_FiniteN}]
We define a moment matrix 
\[
M_{i,j}=\langle z^i,z^j \rangle_{\omega}=\int_{\C}\overline{z}^iz^j\omega(z,\overline{z}|a,\overline{a})dA(z),
\]
where the weight function is defined by \eqref{Weight}.
Since (recall \eqref{Potential})
\[
\int_{\C}|z|^{2k}e^{-NQ(z)}dA(z)=\frac{\Gamma(k+L+1)\Gamma(n-k)}{\Gamma(L+n+1)},
\]
we have 
\begin{equation}
\label{moment1}
M_{i,j}=\frac{\Gamma(n-i-1)\Gamma(i+L+1)}{\Gamma(n+L+1)}\mu_{i,j},
\end{equation}
where
\begin{equation}
\label{moment2}
\mu_{i,j}=(i+L+2+(n-i)|a|^2\delta_{i,j}-(i+L+1)a\delta_{i+1,j}-(n-i-1)\overline{a}\delta_{i,j+1}.
\end{equation}
By the LDU decomposition of $\mu=LDU$, where 
$D_{p,q}=d_p \delta_{p,q}$, $L_{p,q}=\delta_{p,q}+\ell_{p} \delta_{p,q+1}$, and $U_{p,q}=\delta_{p,q}+u_q\delta_{q,p+1}$ for $p,q\in\N\cup\{0\}$, we have 
\[
d_p=-d_{p-1} \ell_{p} u_p \mathbf{1}_{p\geq 1}+p+L+2+|a|^2(n-p),\quad 
u_{p+1}=-\frac{a(p+L+1)}{d_p},\quad 
\ell_{p+1}=-\frac{\overline{a}(n-p-2)}{d_p}.
\]
This implies that 
\[
d_p=-\frac{x(p+L)(n-p-1)}{d_{p-1}}+p+L+2+x(n-p),\quad p\geq 1,
\]
with $d_0=L+2+nx$ for $x=|a|^2$. 
We define a sequence $\{r_p\}_{p=0}^{\infty}$ by $d_p=\frac{r_{p+1}}{r_p}$. 
Then, we have  
\begin{equation}
\label{Rsequence}
r_{p+1}=((n-p)x+p+L+2)r_p-x(p+L)(n-p-1)r_{p-1}, \quad 
r_1=L+2+nx,\quad
r_0=1.
\end{equation}
By the induction argument, the unique solution of \eqref{Rsequence} is given by
\[
r_p=\frac{n!(L+p)!}{(L+n)!}g_{p}^{(n,L)}(x),
\]
where $g_p^{(n,L)}(x)$ is defined by \eqref{GNL1}
After multiplying $\mu$ by $\mathrm{diag}(\Gamma(n-i-1)\Gamma(i+L+1)/\Gamma(n+L+1))_{i=0,1,2,\dots}$, and with the same notation, 
 we find the LDE decomposition of $M=LDU$, 
 \begin{align*}
 L_{p,k}&=\delta_{p,k}-\overline{a}\frac{g_{k-1}^{(n,L)}(x)}{g_k^{(n,L)}(x)}\delta_{p,k+1}, \\
 D_{k,k}&=\frac{\Gamma(k+L+2)\Gamma(n-k-1)}{\Gamma(n+L+1)}\frac{g_{k+1}^{(n,L)}(x)}{g_{k}^{(n,L)}(x)}, \\
 U_{k,p}&=\delta_{k,p}-a\frac{g_{k-1}^{(n,L)}(x)}{g_k^{(n,L)}(x)}\delta_{p,k+1},
 \end{align*}
for $p,k\geq 0$. Here, similar to the same discussion in \cite[p.14]{ATTZ}, we have 
\begin{equation}
\label{LDU result}
h_k=D_{k,k},\quad 
P_k(z|a,\overline{a})=\sum_{m=0}^{k}\overline{L}^{-1}_{k,m}z^m,\quad
Q_k(z|a,\overline{a})=\sum_{m=0}^{k}U^{-1}_{m,k}z^m.
\end{equation}
Then, the reduced polynomial kernel \eqref{FiniteN2} can be written in terms of $L,D,U$:
\[
\cK_{1,1}^{(N)}(z,\overline{z}|a,\overline{a}) = \sum_{i,j=0}^{N-1}\overline{z}^iC_{i,j}^{(N-1)}z^j,\quad 
C_{i,j}^{(N)}=\sum_{k=0}^{N}U^{-1}_{i,k}\frac{1}{D_{k,k}}L^{-1}_{k,j}. 
\]
Note that 
\[
\frac{N!}{Z_N}\prod_{j=0}^{N-2}h_j
= \frac{g_{N-1}^{(n,L)}(x)}{g_0^{(n,L)}(x)}\frac{\Gamma(L+n+1)}{\Gamma(L+1)\Gamma(n)}
= n g_{N-1}^{(n,L)}(x),
\]
where we used $g_0^{(n,L)}(x)=\frac{\Gamma(L+N+1)}{\Gamma(L+1)\Gamma(n+1)}$. 
Notice also that 
\[
L^{-1}_{p,q}=
\begin{cases}
0 & q>p,\\
1 & q=p, \\
\overline{a}^{p-q}\frac{g_q^{(n,L)}(x)}{g_p^{(n,L)}(x)} & q<p,
\end{cases}
\quad
U^{-1}_{p,q}=
\begin{cases}
a^{q-p}\frac{g_p^{(n,L)}(x)}{g_q^{(n,L)}(x)} & q>p, \\
1 & q=p, \\
0 & q<p. 
\end{cases}
\]
Finally, we define 
\[
G_N^{(n,L)}(x|y,z)
= 
\sum_{j,k=0}^{N-1}g_j^{(n,L)}(x)y^jg_k^{(n,L)}(x)z^k\sum_{m=\max(j,s)}^{N-1}
\frac{\Gamma(n+L+1)}{\Gamma(m+L+2)\Gamma(n-m-1)} \frac{x^m}{g_{m+1}^{(n,L)}(x)g_m^{(n,L)}(x)},
\]
which completes the proof.
\end{proof}

\begin{rem}
The planar orthogonal polynomials associated with \eqref{Weight} for $a=0$ are monomials with a norming constant 
\[
h_k(0)=(L+k+2)\frac{\Gamma(n-k-1)\Gamma(L+k+1)}{\Gamma(n+L+1)}.
\]
The corresponding the finite $N$-kernel is given by
\begin{equation}
\cK_{1,1}^{(N)}(z,w|0)=\sum_{k=0}^{N-1}\frac{\Gamma(n+L+1)}{(L+k+2)\Gamma(n-k-1)\Gamma(L+k+1)}(z\overline{w})^k. 
\label{Origin1}
\end{equation}
\end{rem}

\begin{rem}
From the proof of Proposition~\ref{Prop_FiniteN}, we can readily find that 
the planar orthogonal polynomials associated with \eqref{Weight} are explicitly written as 
\begin{equation}
\label{ConditionalP}
P_k(z|a,\overline{a})=\sum_{j=0}^{k}a^{k-j}\frac{g_j^{(n,L)}(a\overline{a})}{g_k^{(n,L)}(a\overline{a})}z^j. 
\end{equation}
Notice also that as in \cite{SMY22}, we can confirm that 
$\{P_k(\cdot|a,\overline{a})\}_k$ satisfy the non-standard three term recurrence relationship
\begin{equation}
\label{ThreeTerm}
zP_k(z|a,\overline{a})
=
P_{k+1}(z|a,\overline{a})+b_kP_k(z|a,\overline{a})+z c_kP_{k-1}(z|a,\overline{a}),
\end{equation}
where 
\begin{equation}
b_k=-a\frac{g_{k}^{(n,L)}(a\overline{a})}{g_{k+1}^{(n,L)}(a\overline{a})},
\quad 
c_k=a\frac{g_{k-1}^{(n,L)}(a\overline{a})}{g_{k}^{(n,L)}(a\overline{a})}.
\end{equation}
Then, the finite $N$-kernel can be also written in terms of $\{P_k(\cdot|a,\overline{a})\}_k$
\begin{align}
\cK_{1,1}^{(N)}(z,w|a,\overline{a})
= & \sum_{k=0}^{N-1}\frac{P_k(z|a,\overline{a})\overline{P_k(w|a,\overline{a})}}{h_k}\\
= & \sum_{k=0}^{N-1}\frac{\Gamma(n+L+1)}{\Gamma(k+L+2)\Gamma(n-k-1)}
\frac{g_{k}^{(n,L)}(a\overline{a})}{g_{k+1}^{(n,L)}(a\overline{a})}
P_k(z|a,\overline{a})\overline{P_k(w|a,\overline{a})}.
\nonumber
\end{align}
Although we do not pursue here, it would be interesting to find a Christoffel-Darboux type identity as in \cite{SMY22} and to analyze the corresponding to differential equation.
In this case, we expect that the corresponding differential equation is the second order differential equation different form \cite{SMY22}. 
Indeed, we shall consider $a=0$. Then, we have 
\[
\cK_{1,1}^{(N)}(z,w|0)=\sum_{k=0}^{N-1}\frac{\Gamma(n+L+1)}{(L+k+2)\Gamma(n-k-1)\Gamma(L+k+1)}(z\overline{w})^k.
\]
We write 
\[
\widehat{\mathcal{K}}_{1,1}^{(N)}(z,w|0)=\frac{(z\overline{w})^{L+2}\cK_{1,1}^{(N)}(z,w|0)}{(1+z\overline{w})^{n+L-1}}.
\]
Then, the above satisfies the following differential equation:
\begin{align*}
& \Bigl\{ z(1+z\overline{w})\partial_z^2 +\bigl( z\overline{w}(n+L-1)-1\bigr) \partial_z- (n+L-1)\overline{w} \Bigr\}
\widehat{\cK}_{1,1}^{(N)}(z,w|0) 
\\
= & 
 \frac{(L+n) \overline{w}\Gamma(n+L+1)(z\overline{w})^{L+1}}{(1+z\overline{w})^{n+L}\Gamma(n)\Gamma(L)}
- \frac{(L+n) \overline{w}\Gamma(n+L+1)(z\overline{w})^{N+L+1}}{(1+z\overline{w})^{n+L}\Gamma(n-N)\Gamma(N+L)}. 
\end{align*}
This suggests that for the general case, we would have a second order differential equation depending on the parameter $a\in\C$.
\end{rem}

\subsection{Simplification step for the finite $N$-kernel}
In this subsection, we will find the simplified representation of \eqref{FiniteN1}, that is, we will prove Theorem~\ref{Thm_Simplification}. 
First, we can rewrite \eqref{FiniteN1} as 
\begin{equation}
\label{GN1}
G_N^{(n,L)}(x|y,z)=\sum_{s,t=0}^{N-1}
g_s^{(n,L)}(x) y^s g_t^{(n,L)}(x) z^t \Bigl( \Phi_{N-1}^{(n,L)}(x)-\Phi_{\max\{s,t\}-1}^{(n,L)}(x)\Bigr),
\end{equation}
where 
\[
\Phi_{\ell}^{(n,L)}(x)=\sum_{j=0}^{\ell}\frac{\Gamma(n+L+1)}{\Gamma(j+L+2)\Gamma(n-j-1)}
\frac{x^j}{g_{j+1}^{(n,L)}(x)g_{j}^{(n,L)}(x)}. 
\]
Let us denote 
\[
\Phi_{\ell,m}^{(n,L)}(x)=\sum_{j=m}^{\ell}\frac{\Gamma(n+L+1)}{\Gamma(j+L+2)\Gamma(n-j-1)} 
\frac{x^j}{g_{j+1}^{(n,L)}(x)g_{j}^{(n,L)}(x)}. 
\]
\begin{lem}\label{Lem_DPhi}
We have 
\[
\Phi_{q}^{(n,L)}(x)= \frac{\Gamma(n+1)\Gamma(L+1)}{\Gamma(n+L+1)}
\frac{(n-1)x-(L+1)}{x(x-L/n)}
+
\frac{-(n-q-2)x+L+q+2}{x(x-L/n)g_{q+1}^{(n,L)}(x)}. 
\]
\end{lem}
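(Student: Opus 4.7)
The plan is to exhibit the summand defining $\Phi_q^{(n,L)}$ as a telescoping difference, then sum. Concretely, setting
\[
c_j := \frac{\Gamma(n+L+1)}{\Gamma(j+L+2)\Gamma(n-j-1)}, \qquad a_j(x) := (L+j+1) - (n-j-1)x,
\]
the key identity I would aim for is the equivalent polynomial statement
\[
a_{j+1}(x)\, g_j^{(n,L)}(x) \;-\; a_j(x)\, g_{j+1}^{(n,L)}(x) \;=\; c_j\, x^{j+1}\bigl(x - L/n\bigr).
\]
Dividing by $g_j g_{j+1}\,x(x-L/n)$ this is exactly the telescoping form
$\dfrac{c_j x^j}{g_j g_{j+1}} = \dfrac{a_{j+1}}{x(x-L/n)g_{j+1}} - \dfrac{a_j}{x(x-L/n)g_{j}}$, so summing over $0\le j\le q$ will produce the claimed formula once $g_0^{(n,L)}=\binom{L+n}{L}=\Gamma(L+n+1)/[\Gamma(L+1)\Gamma(n+1)]$ is inserted for the boundary term.

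To prove the polynomial identity I would argue by induction on $j$. First, translate the recurrence \eqref{Rsequence} for the $r_p$'s into a three-term recurrence for $g_j^{(n,L)}$ via $r_p = \frac{n!(L+p)!}{(L+n)!}g_p^{(n,L)}(x)$, obtaining
\[
(L+j+1)\,g_{j+1}^{(n,L)}(x) \;=\; \bigl((n-j)x + L+j+2\bigr) g_j^{(n,L)}(x) \;-\; x(n-j-1)\,g_{j-1}^{(n,L)}(x).
\]
Setting $D_j := a_{j+1} g_j - a_j g_{j+1}$, I would use the recurrence to eliminate $g_{j+1}$ from $D_j$ and then substitute $g_{j-1} = (D_{j-1}+a_{j-1}g_j)/a_j$, yielding
\[
D_j \;=\; \frac{x(n-j-1)}{L+j+1}\,D_{j-1} \;+\; \frac{g_j}{L+j+1}\Bigl[(L+j+1)a_{j+1} - a_j\bigl((n-j)x+L+j+2\bigr) + x(n-j-1)\,a_{j-1}\Bigr].
\]
A direct expansion of the bracket (quadratic in $x$ with coefficients polynomial in $j,L,n$) shows it vanishes identically. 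Since $c_j/c_{j-1} = (n-j-1)/(L+j+1)$, the recurrence $D_j = \frac{x(n-j-1)}{L+j+1}D_{j-1}$ matches the desired $D_j = c_j x^{j+1}(x-L/n)$; checking the base case $D_0 = c_0 x(x-L/n)$ from the explicit expressions $g_0 = \binom{L+n}{L}$ and $g_1 = \frac{L+2+nx}{L+1}\binom{L+n}{L}$ completes the induction.

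The main obstacle is the algebraic identity that the bracket above vanishes; every other step is either a straightforward consequence of \eqref{Rsequence} or a telescoping sum, but this identity is the one place where no structural shortcut seems available and one must expand all four products in $(L,j,n,x)$ and check that each of the eight monomial groups cancels. After the telescoping sum is performed, the boundary contribution at $j=0$ is $-\,\dfrac{-(n-1)x+L+1}{x(x-L/n)g_0^{(n,L)}(x)}$, which using $g_0^{(n,L)} = \Gamma(L+n+1)/[\Gamma(L+1)\Gamma(n+1)]$ produces exactly $\dfrac{\Gamma(n+1)\Gamma(L+1)}{\Gamma(n+L+1)}\dfrac{(n-1)x-(L+1)}{x(x-L/n)}$, while the contribution at $j=q$ is $\dfrac{-(n-q-2)x+L+q+2}{x(x-L/n)g_{q+1}^{(n,L)}(x)}$, matching the statement of the lemma.
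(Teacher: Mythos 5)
Your proof is correct, and it takes a genuinely different route from the paper. The paper first establishes, by induction on $q$, a closed form for the general partial sum $\Phi_{m+q,m}^{(n,L)}$ whose braces contain two separate $k$-sums with Gamma-function ratios, and only then specializes to $m=0$ and rearranges those sums against the definition of $g_{q+1}^{(n,L)}$ to obtain the stated formula. You instead observe that the summand is a telescoping difference, reducing the whole lemma to the single polynomial identity $a_{j+1}g_j - a_j g_{j+1} = c_j x^{j+1}(x-L/n)$. That identity is what you call the ``main obstacle,'' but it is in fact cleaner than you suggest: after eliminating $g_{j+1}$ and $g_{j-1}$ via the three-term recurrence, the bracket
$(L+j+1)a_{j+1}-a_j\bigl((n-j)x+L+j+2\bigr)+x(n-j-1)a_{j-1}$
has vanishing constant and $x^2$ coefficients by inspection, and the $x$-coefficient is $-A(C+E)+D(B+F)$ with $A=L+j+1$, $D=n-j-1$, $C+E=2D$, $B+F=2A$, so it also vanishes structurally without a full monomial expansion. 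Your base case $D_0 = c_0\,x(x-L/n)$ and the boundary terms at $j=0$ and $j=q$ all check out. The telescoping route buys transparency (the lemma becomes a one-line sum once the difference form is known) at the cost of having to divine the right partial-fraction decomposition $a_j/g_j$; the paper's route is more mechanical but passes through a considerably messier intermediate formula.
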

\begin{proof}
By the induction argument, we can show that 
\begin{align*}
\Phi_{m+q,m}^{(n,L)}(x)&=\frac{\Gamma(n+L+1)}{\Gamma(L+q+m+2)\Gamma(n-m-1)}
\frac{x^m}{g_{m+q+1}^{(n,L)}(x)g_{m}^{(n,L)}(x)}
\\
&\times
\left\{
\sum_{k=0}^q\frac{(q+1-k)\Gamma(n-m)\Gamma(L+q+m+2)}{\Gamma(n-m-k)\Gamma(L+k+m+2)}x^k\right.\nonumber\\
&\left.
-(L+m+1)\sum_{k=0}^q\frac{(q-k)\Gamma(n-m-1)\Gamma(L+q+m+2)}{\Gamma(n-m-k-1)\Gamma(L+k+m+3)}x^k
\right\}.
\end{align*}
By taking $m=0$ and rearranging the summations, we have
\begin{align*}
\Phi_{q}^{(n,L)}(x)&=
\frac{\Gamma(n+1)\Gamma(L+1)}{\Gamma(n+L+1)g_{q+1}^{(n,L)}(x)}
\\
&\times
\Bigl\{
\frac{(n-1)x-(L+1)}{x}\sum_{k=0}^{q}\frac{(q+1-k)\Gamma(n+L+1)}{\Gamma(n-k)\Gamma(L+k+2)}x^k
+\frac{q+1}{x}\frac{\Gamma(n+L+1)}{\Gamma(n)\Gamma(L+1)}
\Bigr\}
\\
&=
\frac{\Gamma(n+1)\Gamma(L+1)}{\Gamma(n+L+1)g_{q+1}^{(n,L)}(x)}
\\
&\times
\Bigl\{
\frac{(n-1)x-(L+1)}{x}\frac{g_{q+1}^{(n,L)}(x)}{x-\frac{L}{n}}+\frac{\Gamma(n+L+1)}{\Gamma(n+1)\Gamma(L+1)}\frac{-(n-q-2)x+L+q+2}{x\left(x-\frac{L}{n}\right)}
\Bigr\},
\end{align*}
which completes the proof. 
\end{proof}
We are ready to show Theorem~\ref{Thm_Simplification}. 
\begin{proof}[Proof of Theorem~\ref{Thm_Simplification}]
We write 
\begin{equation}
\label{AlphaSPH}
\alpha_m^{(n,L)}(x,\omega)=\sum_{s=0}^{m}g_s^{(n,L)}(x)\omega^s.
\end{equation}
Then, using \eqref{GNL2}, we find that \eqref{AlphaSPH} can be expressed as 
\begin{align}
\label{C1}
\alpha_{m}^{(n,L)}(x,\omega)
&=
\frac{x-\frac{L}{n}}{x}\frac{1}{(1-\omega)^2}
\widehat{q}_{m}^{(n,L)}(x\omega|x)
-\frac{x-\frac{L}{n}}{x}\frac{(x\omega)^{m+1}\Gamma(L+n+1)}{(1-\omega)(1+x)\Gamma(L+m+1)\Gamma(n-m)}
\\
-&\frac{x-\frac{L}{n}}{x}\frac{\omega^{m+1}(L+m+1-(n-m-1)x)\widehat{q}_{m}^{(n,L)}(x|x)}{(1+x)(1-\omega)}
-\frac{x-\frac{L}{n}}{x}\frac{\omega^{m+1}\widehat{q}_{m}^{(n,L)}(x|x)}{(1-\omega)^2}. 
\nonumber
\end{align}
By differentiating \eqref{C1} with respect to $\omega$, we have 
\begin{align}
\begin{split}
\label{C2}
&\omega\partial_\omega\alpha_m^{(n,L)}(x,\omega)\\
&=
\left(m+1+\frac{2\omega}{1-\omega}\right)\alpha_m^{(n,L)}(x,\omega)
+\frac{x-\frac{L}{n}}{x}\frac{\omega^{m+2}(L+m+1-(n-m-1)x)\widehat{q}_{m}^{(n,L)}(x|x)}{(1-\omega)^2(1+x)}
\\
&
-\frac{x-\frac{L}{n}}{x}\frac{(m+1)\widehat{q}_{m}^{(n,L)}(x\omega|x)}{(1-\omega)^2}
+\frac{x-\frac{L}{n}}{x}\frac{(nx\omega-L)\widehat{q}_{m}^{(n,L)}(x\omega|x\omega)}{(1+x\omega)(1-\omega)^2}
\\
&
-\frac{x-\frac{L}{n}}{x}\frac{(x\omega)^{m+1}(1+x+x\omega)}{(1-\omega)(1+x)(1+x\omega)}\frac{\Gamma(L+n+1)}{\Gamma(L+m+1)\Gamma(n-m)}\\
&=\left(m+1+\frac{2\omega}{1-\omega}\right)\alpha_m^{(n,L)}(x,\omega)+R_{m}(x,\omega),
\end{split}
\end{align}
where 
\begin{align}
\begin{split}
\label{C3}
R_{m}(x,\omega)
&=
\frac{x-\frac{L}{n}}{x}\frac{\omega^{m+2}(L+m+1-(n-m-1)x)\widehat{q}_{m}^{(n,L)}(x|x)}{(1-\omega)^2(1+x)}
\\
&
+\frac{x-\frac{L}{n}}{x}\frac{(nx\omega-L)\widehat{q}_{m}^{(n,L)}(x\omega|x\omega)}{(1+x\omega)(1-\omega)^2}
-\frac{x-\frac{L}{n}}{x}\frac{m+1}{(1-\omega)^2}\widehat{q}_{m}^{(n,L)}(x\omega|x)
\\
&
-\frac{x-\frac{L}{n}}{x}\frac{(x\omega)^{m+1}(1+x+x\omega)}{(1-\omega)(1+x)(1+x\omega)}\frac{\Gamma(L+n+1)}{\Gamma(L+m+1)\Gamma(n-m)}.
\end{split}
\end{align}
To compute the summation for the index $\max\{s,t\}-1$, we need 
\begin{equation}
\label{C4}
\sum_{0\leq s<t\leq N-1}g^{(n,L)}_s(x)y^sz^t=\frac{z}{1-z}\alpha_{N-1}^{(n,L)}(x,yz)-\frac{z^N}{1-z}\alpha_{N-1}^{(n,L)}(x,y),
\end{equation}
\begin{align}
\begin{split}
\label{C5}
\sum_{0\leq s<t\leq N-1}g^{(n,L)}_s(x)ty^sz^t
&=\frac{z}{(1-z)^2}\alpha_{N-1}^{(n,L)}(x,yz)
+\frac{z^2}{1-z}\partial_z\alpha_{N-1}^{(n,L)}(x,yz)
\\
&
+\frac{(N-1)z^{N+1}-Nz^N}{(1-z)^2}\alpha_{N-1}^{(n,L)}(x,y).
\end{split}
\end{align}
These follow from straightforward calculations. 
Together with \eqref{C1}, \eqref{C2}, \eqref{C3}, \eqref{C4}, and \eqref{C5}, we can express \eqref{GN1} as 
\begin{align*}
G_{N}^{(n,L)}(x|y,z)
&=
\frac{-(n-N-1)x+L+N+1}{x\left(x-\frac{L}{n}\right)g_{N}^{(n,L)}(x)}\alpha_{N-1}^{(n,L)}(x,y)\alpha_{N-1}^{(n,L)}(x,z)
\\
&
-\frac{x+1}{x\left(x-\frac{L}{n}\right)}\frac{1-yz}{(1-y)(1-z)}R_{N-1}(x,y,z)
\\
&
-\frac{(1-yz)\Big\{(1+x)(1-yz)+(N+L-(n-N)x)(1-y)(1-z)\Bigr\}\alpha_{N-1}^{(n,L)}(x,yz)}{x\left(x-\frac{L}{n}\right)(1-y)^2(1-z)^2}
\\
&
+\frac{z^N((N+L-(n-N)x)(1-z)+1+x)\alpha_{N-1}^{(n,L)}(x,y)}{(1-z)^2x\left(x-\frac{L}{n}\right)}
\\
&
+\frac{y^N((N+L-(n-N)x)(1-y)+1+x)\alpha_{N-1}^{(n,L)}(x,z)}{(1-y)^2x\left(x-\frac{L}{n}\right)}. 
\end{align*}
After long and involved, but simple, computations we have 
\begin{equation}
\label{C7}
x^2G_{N}^{(n,L)}(x|y,z)
=
\frac{T_A^{(n,L)}(x,y,z)}{(1-y)^2(1-z)^2}+\frac{T_B^{(n,L)}(x,y,z)}{(1-y)(1-z)}
+\frac{T_C^{(n,L)}(x,y,z)}{(1-y)^2(1-z)}+\frac{T_C^{(n,L)}(x,z,y)}{(1-y)(1-z)^2},
\end{equation}
where we set 
\begin{align}
\begin{split}
\label{C8}
&\quad \widehat{g}_N^{(n,L)}(x)T_A^{(n,L)}(x,y,z)
\\
&=(1+x)\left(N+L+1+x\right)\Bigl\{
\widehat{q}_{N}^{(n,L)}(xy|x)
\widehat{q}_{N}^{(n,L)}(xz|x)
-\widehat{q}_{N}^{(n,L)}(xyz|x)\widehat{q}_{N}^{(n,L)}(x|x)
\Bigr\}
\\
&-(1+x)(n-N)x
\Bigl\{\widehat{q}_{N-1}^{(n,L)}(xy|x)
\widehat{q}_{N-1}^{(n,L)}(xz|x)
-\widehat{q}_{N-1}^{(n,L)}(xyz|x)\widehat{q}_{N-1}^{(n,L)}(x|x)
\Bigr\},
\end{split}
\end{align}
\begin{align}
\begin{split}
\label{C9}
&\quad\widehat{g}_N^{(n,L)}(x)T_B^{(n,L)}(x,y,z)
\\
&=
\frac{(n-N)\Gamma(L+n+1)^2}{\Gamma(L+N+1)\Gamma(L+N)\Gamma(n+1-N)^2}x(xy)^N(xz)^N
\\
&+
\frac{(n-N)(L+N-(n-N)x)\Gamma(L+n+1)}{\Gamma(L+N+1)\Gamma(n+1-N)}x(xyz)^N\widehat{q}_{N-1}^{(n,L)}(x|x)
\\
&+
\Bigl\{
\frac{(L+n)xq_{N-1}^{(n,L)}(xyz)}{1+xyz}-\frac{x\Gamma(L+n+1)}{(1+xyz)\Gamma(n+1)\Gamma(L)}
+\frac{x(xyz)^N\Gamma(L+n+1)}{(1+xyz)\Gamma(L+N)\Gamma(n+1-N)}
\Bigr\}
\\
&\times
\Bigl\{
(N+L+1+x)\widehat{q}_{N}^{(n,L)}(x|x)-(n-N)x\widehat{q}_{N-1}^{(n,L)}(x|x)
\Bigr\},
\end{split}
\end{align}
and
\begin{align}
\begin{split}
\label{C10}
&\quad\widehat{g}_N^{(n,L)}(x)T_C^{(n,L)}(x,y,z)
\\
&=
\frac{(1+x)x(xyz)^N\Gamma(L+n+1)}{\Gamma(L+N+1)\Gamma(n-N)}
\widehat{q}_{N}^{(n,L)}(x|x)
-\frac{(1+x)x(xz)^N\Gamma(L+n+1)}{\Gamma(L+N+1)\Gamma(n-N)}
\widehat{q}_{N}^{(n,L)}(xy|x). 
\end{split}
\end{align}
We write 
\begin{equation}
\label{C12}
W_{N}^{(n,L)}(x,y,z)
=
\widehat{q}_{N}^{(n,L)}(xy|x)\widehat{q}_{N}^{(n,L)}(xz|x)
-\Bigl( 1-\frac{x(1-y)(1-z)}{1+x}\Bigr)\widehat{q}_{N}^{(n,L)}(x|x)\widehat{q}_{N}^{(n,L)}(xyz|x). 
\end{equation}
Then, we have 
\begin{align*}
W_N^{(n,L)}(x,y,z)&=
W_{N+1}^{(n,L)}(x,y,z)
-\frac{\Gamma(L+n+1)(xy)^{N+1}\widehat{q}_{N+1}^{(n,L)}(xz|x)}{\Gamma(L+N+2)\Gamma(n-N)}
-\frac{\Gamma(L+n+1)(xz)^{N+1}\widehat{q}_{N+1}^{(n,L)}(xy|x)}{\Gamma(L+N+2)\Gamma(n-N)}
\\
&+
\left(1-\frac{x(1-y)(1-z)}{1+x}\right)
\frac{\Gamma(L+n+1)}{\Gamma(L+N+2)\Gamma(n-N)}(xyz)^{N+1}
\widehat{q}_{N+1}^{(n,L)}(x|x)
\\
&+
\left(1-\frac{x(1-y)(1-z)}{1+x} \right)
\frac{\Gamma(L+n+1)}{\Gamma(L+N+2)\Gamma(n-N)}x^{N+1}
\widehat{q}_{N+1}^{(n,L)}(xyz|x)
\\
&+
\frac{x(1-y)(1-z)}{1+x}\frac{\Gamma(L+n+1)^2(xy)^{N+1}(xz)^{N+1}}{\Gamma(L+N+2)^2\Gamma(n-N)^2}.
\end{align*}
Together with the above identity, \eqref{C7} with \eqref{C8}, \eqref{C9}, \eqref{C10}, and \eqref{C12}, 
\begin{align*}
&\quad x^2 \widehat{g}_N^{(n,L)}(x) G_N^{(n,L)}(x,y,z)
\\
&= 
\frac{(1+x)}{(1-y)^2(1-z)^2}
\Bigl((L+N+1)W_{N+1}^{(n,L)}(x,y,z)-x(n-N-1)W_N^{(n,L)}(x,y,z) \Bigr)
+
\frac{R_N^{(n,L)}(x,y,z)}{(1-y)(1-z)}, 
\end{align*}
where
\begin{align*}
&\quad R_N^{(n,L)}(x,y,z) 
\\
&=
\widehat{g}_N^{(n,L)}(x)r_N^{(n,L)}(x,y,z)
\\
&
-\frac{x^2 (n-N-1) \Gamma (L+n+1) (x y z)^N \widehat{q}_{N-1}^{(n,L)}(x|x)}{\Gamma (L+N+1) \Gamma (n-N)}+\frac{x^2 (n-N-1) \Gamma (L+n+1) (x y z)^N \widehat{q}_{N}^{(n,L)}(x|x)}{\Gamma (L+N+1) \Gamma (n-N+1)}
\\
&
-\frac{x \Gamma (L+n+1) (x y z)^{N+1} \widehat{q}_{N+1}^{(n,L)}(x|x)}{\Gamma (L+N+1) \Gamma (n-N)}+\frac{x (L+N+1) (n-N-1) \Gamma (L+n+1) (x y z)^N \widehat{q}_{N}^{(n,L)}(x|x)}{\Gamma (L+N+1) \Gamma (n-N+1)}, 
\end{align*}
and
\begin{align*}
r_N^{(n,L)}(x,y,z)
&=
x(n-N-1)\widehat{q}_{N-1}^{(n,L)}(xyz|x)
\\
&
+
\frac{x}{(1+xyz)}
\Bigl\{
(L+N)\widehat{q}_{N}^{(n,L)}(xyz|xyz)-xyz(n-N)\widehat{q}_{N-1}^{(n,L)}(xyz|xyz)
\\
&
-
\Bigl(
\frac{N(1+xyz)}{nxyz-L}+ \frac{(n-N)(1+xyz)}{nx-L}
\Bigr)
\frac{\Gamma(n+L+1)}{\Gamma(n+1)\Gamma(L)} 
\Bigr\}. 
\end{align*}
By taking $x\mapsto \lambda\overline{\lambda},y\mapsto\overline{z}/\overline{\lambda},z\mapsto w/\lambda$ and some computations, we complete the proof. 
\end{proof}

\section{Proof of main results}\label{Section5}
In this section, we conclude the proof of the main result in this note.
As already seen in \cite{ATTZ}, once we know the scaling limit of the joint averaged $k$-th conditional on-diagonal overlap, 
we can readily find the joint averaged $k$-th conditional off-diagonal overlap via Lemma~\ref{Decoupling}.
Our proof is strongly inspired by \cite{SFv3}. 
We first collect the asymptotic behavior of \eqref{HQNL}. 
\begin{lem}\label{Lem_Asym}
Let 
\[
\delta_N(p)=\frac{n+L+1}{N}\frac{1}{(1+|p|^2)^2},\quad\text{$p\in\mathrm{clo}(S)$}. 
\]
\begin{description}
\item[(I) Strong non-unitarity regime]
Let $L=aN$ and $n=(b+1)N$ with fixed $a\geq 0$ and $b>0$.
For $k=0,1,2$, we have 
\begin{align}
\begin{split}
&\quad\widehat{q}_{N+k-1}^{(n,L)}(\overline{z}w|\lambda\overline{\lambda})
\widehat{\omega}^{(n,L)}(\overline{z},w)\\
&=
\begin{cases}
\frac{e^{-\frac{1}{2}(|\zeta|^2+|\eta|^2)+\overline{\zeta}\eta}}{1+|p|^2}\left(1+O(e^{-\epsilon N})\right)&
 \text{(bulk)},\\ 
\frac{e^{-\frac{1}{2}(|\zeta|^2+|\eta|^2)+\overline{\zeta}\eta}}{1+p^2}
\left(F(\overline{\zeta}+\eta)+\frac{C_{\mathrm{out}}(\zeta,\eta)}{\sqrt{N}}+\sqrt{\frac{a+b+1}{2\pi(a+1)bN}}ke^{-\frac{1}{2}(\overline{\zeta}+\eta)^2}
+O\left(\frac{1}{N}\right))
\right)
 & \text{(outer)},\\
\frac{e^{-\frac{1}{2}(|\zeta|^2+|\eta|^2)+\overline{\zeta}\eta}}{1+p^2}
\left(F(\overline{\zeta}+\eta)-\frac{1}{\chi+\overline{\chi}}\frac{e^{-\frac{1}{2}(\overline{\zeta}+\eta)^2}}{\sqrt{2\pi}}
+\frac{C_{\mathrm{out}}(\zeta,\eta)}{\sqrt{N}}+O\left(\frac{1}{N}\right))
\right)& \text{(inner)},
\end{cases}
\end{split}
\end{align}
as $N\to\infty$, uniformly for $\zeta,\eta,\chi$ in compact subsets of $\C$. 
Here, $F$ is defined by \eqref{EdgeF1}. 
\item[(II) Weakly non-unitarity regime]
Let $L=\frac{N^2}{\rho^2}-N$ and $n=\frac{N^2}{\rho^2}$ with fixed $\rho> 0$. For $z=e^{i\theta}\left(1+\frac{\zeta}{\sqrt{N\delta_N(1)}}\right),w=e^{i\theta}\left(1+\frac{\eta}{\sqrt{N\delta_N(1)}}\right)$ with $\theta\in[0,2\pi)$, we have that for $k=0,1,2$,
\begin{align}
\begin{split}
&\quad
\widehat{q}_{N-1+k}^{(n,L)}(\overline{z}w|\lambda\overline{\lambda})
\widehat{\omega}^{(n,L)}(\overline{z},w)
\\
&=
\frac{e^{-\frac{1}{2}(|\zeta|^2+|\eta|^2)+\overline{\zeta}\eta}}{2}
\Bigl( 
\mathcal{J}_{\rho}(\overline{\zeta},\eta|\overline{\chi},\chi)
+
k
\frac{\rho}{N}
\frac{e^{-\frac{1}{2}\bigl(\overline{\zeta}+\eta-\frac{\rho}{\sqrt{2}}\bigr)^2}}{\sqrt{\pi}}
+
\frac{C_{\mathrm{w}}(\zeta,\eta)}{N}+ O(N^{-2})
\Bigr),
\end{split}
\end{align}
as $N\to\infty$,
where some constant $C_{\mathrm{w}}(\zeta,\eta)$ depends on $\zeta,\eta,\rho$, and
\begin{equation}
\label{LemJ}
\mathcal{J}_{\rho}(\overline{\zeta},\eta|\overline{\chi},\chi)=
L_{\rho}(\overline{\zeta}+\eta)+\frac{e^{-\frac{1}{2}\bigl(\overline{\zeta}+\eta+\frac{\rho}{\sqrt{2}} \bigr)^2}}{\sqrt{2\pi}(\chi+\overline{\chi}+\frac{\rho}{\sqrt{2}})}.
\end{equation}
Here, the convergence is uniform for $\zeta,\eta,\chi$ in compact subsets of $\C$, and $L_{\rho}(z)$ is defined by \eqref{WeakL1}.
\item[(III) At singular points regime]
Let $L\geq 0$ be fixed and $n=(b+1)N$ with fixed $b>0$. For $z=\frac{\zeta}{\sqrt{N\delta_N(0)}},w=\frac{\eta}{\sqrt{N\delta_N(0)}}$ with $\zeta,\eta$ in a compact subset of $\C$, we have that for $k=0,1,2$,
\begin{equation}
\widehat{q}_{N+k-1}^{(n,L)}(\overline{z}w|\lambda\overline{\lambda})
\widehat{\omega}_N^{(n,L)}(\overline{z},w)
=
\frac{1}{\chi\overline{\chi}-L}(\overline{\zeta}\eta)^L\mathcal{E}_{1,L}\left(\overline{\zeta}\eta|\overline{\chi}\chi\right)e^{-\frac{1}{2}(|\zeta|^2+|\eta|^2)}\left(1+o(1)\right),
\end{equation}
uniformly for $\zeta,\eta,\chi$ in a compact subset of $\C$ and we omitted the cocycle factor. 
Here, $E_{a,b}(z)$ is defined by \eqref{ThmSingular1}, and $\mathcal{E}_{1,c}(z|x)$ is defined by \eqref{ThmSingular2}. 
\end{description}
\end{lem}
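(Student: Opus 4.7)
The three regimes all reduce to the same basic template: write
\[
\widehat q_{N+k-1}^{(n,L)}(\overline z w\,|\,\lambda\overline\lambda)\,\widehat\omega^{(n,L)}(\overline z,w)
=\sum_{j=0}^{N+k-1}\binom{n+L}{j+L}(\overline z w)^j\,\widehat\omega^{(n,L)}(\overline z,w)
+\frac{\widehat\omega^{(n,L)}(\overline z,w)}{n\lambda\overline\lambda-L}\,\frac{\Gamma(L+n+1)}{\Gamma(n+1)\Gamma(L)},
\]
apply Stirling to the binomial coefficient, and then treat the resulting Riemann sum by Laplace's method. The prefactor $\widehat\omega^{(n,L)}$ absorbs the exponential growth $(1+|z|^2)^{(n+L+1)/2}$ generated by $(1+x)^{n+L}$, so after normalisation each term is a Gaussian in $j/N$ up to order $1/\sqrt{N}$. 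Setting $z=e^{i\theta}(p+\mathfrak{s}\zeta/\sqrt{N\delta_N(p)})$ and likewise for $w$, one finds that the Taylor expansion of the phase $\log[(\overline z w)^j/((1+|z|^2)(1+|w|^2))^{(n+L+1)/2}]$ produces the universal Gaussian factor $e^{-(|\zeta|^2+|\eta|^2)/2+\overline\zeta\eta}$ together with the rescaled summation variable that I shall denote $s=(j-j_\ast(|p|^2))/\sqrt{N\sigma^2}$, where $j_\ast$ is the saddle and $\sigma^2$ the Hessian.

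In the strongly non-unitary regime with $L=aN$, $n=(b+1)N$, the saddle $j_\ast/N$ lies in the interior of $[0,1]$ precisely when $|p|^2\in(a/(b+1),(a+1)/b)$; the full Gaussian tail is therefore captured by the partial sum, the correction term $1/(n\lambda\overline\lambda-L)$ is exponentially small, and one recovers the bulk claim with error $O(e^{-\epsilon N})$. At the outer (respectively inner) edge the saddle sits at $j_\ast=N$ (resp.\ $j_\ast=0$), so only the left (resp.\ right) half of the Gaussian contributes, producing $F(\overline\zeta+\eta)=\tfrac12\operatorname{erfc}((\overline\zeta+\eta)/\sqrt2)$. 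The index shift $N\mapsto N+k-1$ changes the truncation point by exactly $k$ lattice units, and comparing the truncated sum with its integral approximation via the Euler--Maclaurin boundary term gives the explicit $k$-linear $O(1/\sqrt N)$ correction $\sqrt{(a+b+1)/(2\pi(a+1)bN)}\,k\,e^{-(\overline\zeta+\eta)^2/2}$ at the outer edge. At the inner edge the boundary term $(n\lambda\overline\lambda-L)^{-1}$ is of size $1/\sqrt{N(\chi+\overline\chi)}$ and supplies the additional $-(\chi+\overline\chi)^{-1}e^{-(\overline\zeta+\eta)^2/2}/\sqrt{2\pi}$ term that distinguishes inner from outer.

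In the weakly non-unitary regime the effective saddle width $1/\sqrt{N\sigma^2}$ is of the same order as the window $[-\rho/\sqrt2,\rho/\sqrt2]$ over which the summation index ranges after rescaling. Both endpoints contribute simultaneously, and the Riemann sum converges to $L_\rho(\overline\zeta+\eta)=\frac{1}{\sqrt{2\pi}}\int_{-\rho/\sqrt2}^{\rho/\sqrt2}e^{-(\overline\zeta+\eta-\xi)^2/2}d\xi$; the boundary-correction term in $\widehat q$, now of order $1$ rather than $1/\sqrt N$, produces the additional $(\chi+\overline\chi+\rho/\sqrt2)^{-1}e^{-(\overline\zeta+\eta+\rho/\sqrt2)^2/2}/\sqrt{2\pi}$ summand that builds $\mathcal{J}_\rho$. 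The $k$-linear $O(N^{-1})$ correction again comes from the shift in the truncation point, now seen at the scale of the finer $N^{-1}$ window. In the singular origin regime the geometry is different: $L$ is fixed, so with $\overline z w=\overline\zeta\eta/(N\delta_N(0))$ the summand obeys $\binom{n+L}{k+L}(\overline z w)^k\sim\Gamma(L+1)(\overline\zeta\eta)^k/\Gamma(k+L+1)$ uniformly in $k\le N$, by Stirling applied to $\binom{n+L}{k+L}\sim n^k\Gamma(L+1)/\Gamma(k+L+1)$. The sum extends to infinity in the limit and produces $\Gamma(L+1)E_{1,L+1}(\overline\zeta\eta)$, while the boundary correction $(n\lambda\overline\lambda-L)^{-1}\cdot\Gamma(L+n+1)/(\Gamma(n+1)\Gamma(L))$ yields the $1/\Gamma(L)$ constant in $\mathcal{E}_{1,L}$. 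The factor $(\overline\zeta\eta)^L e^{-(|\zeta|^2+|\eta|^2)/2}$ is then contributed by $\widehat\omega^{(n,L)}$ itself.

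The main obstacle is not the leading order but the bookkeeping of the subleading terms: extracting $C_{\mathrm{out}}(\zeta,\eta)/\sqrt N$ and the explicit $k$-linear coefficient at the edge requires keeping the $1/\sqrt N$ correction of Stirling's formula, the $N^{-1/2}$ quadratic term of the phase about the saddle, and the Euler--Maclaurin boundary contribution simultaneously, then verifying that they combine to the stated closed form uniformly in $\zeta,\eta,\chi$ on compacts. The same bookkeeping in the weakly non-unitary regime is one order finer (errors at $N^{-2}$), but the underlying integrals are still elementary Gaussian, so no new analytic input is required beyond uniform convergence of Riemann sums on compacts of $\zeta,\eta,\chi$.
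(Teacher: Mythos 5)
Your overall strategy --- factor out $\widehat\omega^{(n,L)}$, apply Stirling to the binomial coefficients $\binom{n+L}{j+L}$, and read off a Gaussian Riemann sum with saddle in the interior (bulk), at an endpoint (edge), or smeared over a window of the same width as the Gaussian (weakly non-unitary), plus a Poisson-type limit when $L$ is fixed --- is the same in mathematical content as the paper's argument. The paper packages the Gaussian/Poisson step differently: following \cite{SFv3}, it sets $\mathfrak{p}=\bar z w/(1+\bar z w)$, identifies $\mathfrak{q}_{N-1}^{(n,L)}(\mathfrak{p})=\P(L\le X\le N+L-1)$ with $X\sim B(n+L,\mathfrak{p})$, and invokes the normal (resp.\ Poisson) approximation of the binomial with its Edgeworth correction, which is precisely the Stirling$+$Euler--Maclaurin bookkeeping you sketch. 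The shift $N\mapsto N+k-1$ changing the upper truncation by $k$ lattice units, the interior/boundary saddle dichotomy at the three regimes, and the role of the extra $(n\lambda\bar\lambda-L)^{-1}$ term from the definition of $\widehat q$ are all correctly located in your proposal.

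There is, however, one genuine gap. Both the probabilistic interpretation and a naive Laplace estimate require $\bar z w$ and $\lambda\bar\lambda$ to be nonnegative reals so that the summands are positive and the error can be controlled by the peak. For general $\zeta,\eta,\chi$ these quantities are complex, $(\bar z w)^j$ oscillates, and the saddle-point analysis as written is not directly available. The paper handles this explicitly: it first proves the asymptotics for $z,w,\lambda\in\R$ (and integer $n,L$), and then extends to complex arguments and real parameters by Vitali's theorem, using analyticity of both sides in $\zeta,\eta$ together with uniform boundedness on compacts. Your proposal never addresses this real-to-complex step; as stated it implicitly runs the Riemann-sum/Laplace computation for complex $\zeta,\eta$, which would require a separate steepest-descent contour argument with uniform error control, or, more simply, the Vitali extension the paper uses. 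That step should be added; everything else is sound modulo the bookkeeping you already flag.
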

\begin{rem}
In Lemma~\ref{Lem_Asym}, we have omitted a co-cycle factor for the simplicity of the notation. 
Indeed, it does not affect the results in this note since we can readily find that a co-cycle factor is canceled out or put together in a leading term from the form of \eqref{PartHN}, \eqref{PartQN}, and \eqref{PartFN}. 
\end{rem}
\begin{proof}
Our proof is based on \cite[proof of Lemma 3.5.]{SFv3}. 
Therefore, we firstly assume that $z,w\in\R$, and we also assume that the parameters $L,n$ are integers. 
Then, following the same strategy by \cite[proof of Lemma 3.5.]{SFv3}, 
we prove Lemma~\ref{Lem_Asym} by the probabilistic argument. 
And after that, we extend the validity to the complex variables $z,w\in\C$ by Vitali's Lemma. 
For the detailed discussion of the extension to complex variables $z,w\in\C$, we refer to \cite[proof of Lemma 3.5.]{SFv3}. 
Also, for the notational convenience, we still use the complex conjugate notation. 
Moreover, we may omit a a co-cycle factor in each resulting asymptotic behavior to lighten the notation. 
Now, we shall prove the claim in (I). 
Through proof, we also assume that $p\in\R$ from the perspective of Vitali's argument. 
Let 
\[
\mathfrak{p}=\frac{z\overline{w}}{1+z\overline{w}},\quad \text{for $z,w\in\R$}.
\] 
Note that 
\begin{align*}
\begin{split}
q_{N-1}^{(n,L)}(z \overline{w})
\widehat{\omega}_N^{(n,L)}(z,\overline{w})
&=
\mathfrak{p}^{-L}(1-\mathfrak{p})^{-n}\sum_{k=0}^{N-1}\binom{n+L}{k+L}\mathfrak{p}^{k+L}(1-\mathfrak{p})^{n-k}e^{-\frac{N}{2}(Q(z)+Q(w))}
\\
&=
\mathfrak{p}^{-L}(1-\mathfrak{p})^{-n}\mathfrak{q}_{N-1}^{(n,L)}(\mathfrak{p})\widehat{\omega}_N^{(n,L)}(z,\overline{w}).
\end{split}
\end{align*}
Let $X\overset{d}{\sim}B(n+L,\mathfrak{p})$ be the binomial distribution. Then, we can express 
\begin{align*}
\mathfrak{q}_{N-1}^{(n,L)}(\mathfrak{p})
&=
\P\left(L\leq X\leq N+L-1\right)\\
&=
\P\left(\frac{L-(n+L)\mathfrak{p}}{\sqrt{(n+L)\mathfrak{p}(1-\mathfrak{p})}}\leq \frac{X-(n+L)\mathfrak{p}}{\sqrt{(n+L)\mathfrak{p}(1-\mathfrak{p})}}\leq \frac{N+L-1-(n+L)\mathfrak{p}}{\sqrt{(n+L)\mathfrak{p}(1-\mathfrak{p})}}\right).
\end{align*} 
\par
\textbf{(I) Strongly non-unitary regime}:
under the setting of (I), by Taylor expansion, we have 
\begin{align*}
\frac{N+L-1-(n+L)\mathfrak{p}}{\sqrt{(n+L)\mathfrak{p}(1-\mathfrak{p})}}
&=
\frac{a+1-bp^2}{\sqrt{a+b+1}p}\sqrt{N}-\frac{(1+p^2)(a+1+bp^2)}{2(a+b+1)p^2}(\zeta+\overline{\eta})+O\left(\frac{1}{\sqrt{N}}\right),
\\
\frac{L-(n+L)\mathfrak{p}}{\sqrt{(n+L)\mathfrak{p}(1-\mathfrak{p})}}
&=
\frac{a-(b+1)p^2}{\sqrt{a+b+1}p}\sqrt{N}-\frac{(1+p^2)(a+(1+b)p^2)}{2(a+b+1)p^2}(\zeta+\overline{\eta})+O\left(\frac{1}{\sqrt{N}}\right).
\end{align*}
Then, for $p\in\mathrm{int}S_{(\mathbf{r})}$, by the Gaussian approximation of the binomial distribution, we have
\[
\P\left(L\leq X\leq N+L-1\right)=\P\left(-\infty\leq \mathcal{N}\leq \infty\right)+O\left(e^{-cN}\right)=1+O\left(e^{-cN}\right),
\]
where $\mathcal{N}$ denotes the standard Gaussian distribution and $c$ is a positive constant.  
On the other hand, for the outer edge case, we have
\[
\P\left(L\leq X\leq N+L-1\right)
=\P\left(-\infty\leq \mathcal{N}\leq -(\zeta+\overline{\eta})\right)
+\frac{C_{\mathrm{out}}(\zeta,\overline{\eta})}{\sqrt{N}}
+O\left(\frac{1}{N}\right). \label{eq:outerv}
\]
Similarly, for the inner edge case, we have
\[
\P\left(L\leq X\leq N+L-1\right)
=\P\left(-\infty\leq \mathcal{N}\leq -(\zeta+\overline{\eta})\right)
+\frac{C_{\mathrm{in}}(\zeta,\overline{\eta})}{\sqrt{N}}
+O\left(\frac{1}{N}\right),
\]
where
\[
\P\left(-\infty\leq \mathcal{N}\leq -(\zeta+\overline{\eta})\right)=\frac{1}{2}\mathrm{erfc}\left(\frac{\zeta+\overline{\eta}}{\sqrt{2}}\right)=F(\zeta+\overline{\eta}).
\]
Here, $C_{\mathrm{out}}(\zeta,\overline{\eta})$ and $C_{\mathrm{in}}(\zeta,\overline{\eta})$ only depend on $\zeta,\overline{\eta},a,b$.
By Taylor expansion, we have 
\begin{align*}
\begin{split}
\widehat{\omega}^{(n,L)}(z,\overline{w})
&=
\left(1+O(N^{-1/2})\right)
\frac{p^{2aN}}{(1+p^2)^{(a+b+1)N+1}}
\exp\left(\pm\frac{a-(b+1)p^2}{2p\sqrt{a+b+1}}(\zeta+\overline{\zeta}+\eta+\overline{\eta})\sqrt{N}\right)
\\
&\quad\times
\exp\left(\frac{a-(b+1)p^2}{2(a+b+1)}\frac{1+p^2}{p^2}(|\zeta|^2+|\eta|^2)
\right)
\\
&\quad\times
\exp\left(-\frac{a(1+p^2)+(a-(b+1)p^2)p^2}{4(a+b+1)p^2}\left((\zeta+\overline{\zeta})^2+(\eta+\overline{\eta})^2\right)
\right),
\end{split}
\end{align*}
and 
\begin{align*}
\begin{split}
&\quad \mathfrak{p}^{-L}(1-\mathfrak{p})^{-n}
\\
&=
\left(1+O(N^{-1/2})\right)
p^{-2aN}(1+p^2)^{(a+b+1)N}
\exp\left(\pm\frac{(-a+(b+1)p^2)}{p\sqrt{a+b+1}}(\zeta+\overline{\eta})\sqrt{N}\right)
\\
&\times
\exp
\left(
-\frac{(a-(b+1)p^2)(1+p^2)}{(a+b+1)p^2}\zeta\overline{\eta}(\zeta+\overline{\eta})
+
\frac{(1+p^2)a+(a-(b+1)p^2)p^2}{2(a+b+1)p^2}
(\zeta+\overline{\eta})^2
\right).
\end{split}
\end{align*}
Here, we adopt the sign $+$ if the outer edge, and we adopt the sign $-$ if the inner edge.
Together with these asymptotic behaviors, we have 
\begin{equation}
\mathfrak{p}^{-L}(1-\mathfrak{p})^{-n}\widehat{\omega}^{(n,L)}(z,\overline{w})
=
c(\zeta,\eta)\frac{1}{1+p^2}e^{-\frac{1}{2}(|\zeta|^2+|\eta|^2)+\zeta\overline{\eta}}\left(1+o(1)\right),
\label{LemAsym1}
\end{equation}
where $c(\zeta,\eta)$ is a a cocycle factor, which does not affect the value of determinant. 
Although we can explicitly write it, we do not need its explicit form here. 
We need the asymptotic expansion of the remainder term in the summation. 
Observe that by the Stirling formula, 
\begin{align*}
\binom{n+L}{N+L}
&=
\frac{\Gamma(n+L+1)}{\Gamma(N+L+1)\Gamma(n-N+1)}
=
\sqrt{\frac{a+b+1}{2\pi(a+1)bN}}
\left(\frac{(a+b+1)^{a+b+1}}{(a+1)^{a+1}b^b}\right)^N
\left(1+O(N^{-1})\right),\\
\binom{n+L}{N+1+L}
&=
\frac{\Gamma(n+L+1)}{\Gamma(N+L+2)\Gamma(n-N)}
=
\sqrt{\frac{a+b+1}{2\pi(a+1)bN}}
\left(\frac{(a+b+1)^{a+b+1}}{(a+1)^{a+1}b^b}\right)^N
\frac{b}{a+1}
\left(1+O(N^{-1})\right).
\end{align*}
Also, observe that 
\begin{align*}
\mathfrak{p}^{N+L}(1-\mathfrak{p})^{n-N}
&=
\begin{cases}
\left(\frac{(a+1)^{a+1}b^{b}}{(a+b+1)^{a+b+1}}\right)^Ne^{-\frac{1}{2}(\zeta+\overline{\eta})^2}\left(1+O(N^{-1/2})\right) & \text{(outer)},\\
\left(\frac{a^{a+1}(b+1)^{b}}{(a+b+1)^{a+b+1}}\right)^N\left(1+O(N^{-1})\right) & \text{(inner)},
\end{cases}
\\
\mathfrak{p}^{N+L+1}(1-\mathfrak{p})^{n-N-1}
&=
\begin{cases}
\left(\frac{(a+1)^{a+1}b^{b}}{(a+b+1)^{a+b+1}}\right)^N\frac{a+1}{b}e^{-\frac{1}{2}(\zeta+\overline{\eta})^2}\left(1+O(N^{-1/2})\right)  & \text{(outer)},\\
\left(\frac{a^{a+1}(b+1)^{b}}{(a+b+1)^{a+b+1}}\right)^N\frac{a}{b+1}\left(1+O(N^{-1})\right) & \text{(inner)}.
\end{cases}
\end{align*}
Hence, we obtain 
\begin{align*}
\binom{n+L}{N+L}\mathfrak{p}^{N+L}(1-\mathfrak{p})^{n-N}
&=
\begin{cases}
\sqrt{\frac{a+b+1}{2\pi(a+1)bN}}e^{-\frac{1}{2}(\zeta+\overline{\eta})^2}\left(1+O(N^{-1/2})\right) & \text{(outer)},\\
O(c^{-N}) & \text{(inner)},
\end{cases}
\\
\binom{n+L}{N+1+L}\mathfrak{p}^{N+L+1}(1-\mathfrak{p})^{n-N-1}
&=
\begin{cases}
\sqrt{\frac{a+b+1}{2\pi(a+1)bN}}e^{-\frac{1}{2}(\zeta+\overline{\eta})^2}\left(1+O(N^{-1/2})\right) & \text{(outer)},\\
O(c^{-N}) & \text{(inner)},
\end{cases}
\end{align*}
for some constant $c>1$. 
Furthermore, by Stirling formula, we have
\begin{align}
\label{LemAsym2}
\begin{split}
\frac{\frac{\Gamma(n+L+1)}{\Gamma(n+1)\Gamma(L)}}{nx-L}\widehat{\omega}^{(n,L)}(z,w)
=
\begin{cases}
O\left(e^{-\epsilon N}\right) & \text{(bulk)},\\
O\left(e^{-\epsilon N}\right) & \text{(outer)},\\
-
\frac{c(\zeta,\eta)}{1+p^2}\frac{e^{-\frac{1}{2}(\zeta+\overline{\eta})^2}e^{-\frac{1}{2}(|\zeta|^2+|\eta|^2)+\zeta\overline{\eta}}}{\sqrt{2\pi}(\chi+\overline{\chi})}
\left(1+O\left(\frac{1}{\sqrt{N}}\right)\right)
& \text{(inner)}.
\end{cases}
\end{split}
\end{align}
Here, $\epsilon$ is a positive constant independent of $N$. 
As a consequence, by combining these asymptotic behaviours, we obtain that for $k=0,1,2$,
\begin{align*}
\begin{split}
&\quad\widehat{q}_{N+k-1}^{(n,L)}(z\wbar|\lambda\overline{\lambda})
\widehat{\omega}^{(n,L)}(z,w)\\
&=
\begin{cases}
\frac{e^{-\frac{1}{2}(|\zeta|^2+|\eta|^2)+\zeta\overline{\eta}}}{1+p^2}\left(1+O(e^{-\epsilon N})\right)&
 \text{(bulk)},\\ 
\frac{e^{-\frac{1}{2}(|\zeta|^2+|\eta|^2)+\zeta\overline{\eta}}}{1+p^2}
\left(F(\zeta+\overline{\eta})+\frac{C_{\mathrm{out}}(\zeta,\eta)}{\sqrt{N}}+\sqrt{\frac{a+b+1}{2\pi(a+1)bN}}ke^{-\frac{1}{2}(\zeta+\overline{\eta})^2}
+O\left(\frac{1}{N}\right))
\right)
 & \text{(outer)},\\
\frac{e^{-\frac{1}{2}(|\zeta|^2+|\eta|^2)+\zeta\overline{\eta}}}{1+p^2}
\left(F(\zeta+\overline{\eta})-\frac{1}{\chi+\overline{\chi}}\frac{e^{-\frac{1}{2}(\zeta+\overline{\eta})^2}}{\sqrt{2\pi}}
+\frac{C_{\mathrm{out}}(\zeta,\eta)}{\sqrt{N}}+O\left(\frac{1}{N}\right))
\right)& \text{(inner)},
\end{cases}
\end{split}
\end{align*}
where we omitted the cocycle factor for the simplicity of the displays, and the convergence is uniform for $\zeta,\eta,\chi $ in compact subsets of $\C$. 
\par
\textbf{(II) Weakly non-unitary regime}:
under the setting of (II), by Taylor expansion, we have
\begin{align*}
\frac{N+L-1-(n+L)\mathfrak{p}}{\sqrt{(n+L)\mathfrak{p}(1-\mathfrak{p})}}
&=
-\Bigl( \zeta+\overline{\eta}-\frac{\rho}{\sqrt{2}}\Bigr)+ O(N^{-1}),
\\
\frac{L-(n+L)\mathfrak{p}}{\sqrt{(n+L)\mathfrak{p}(1-\mathfrak{p})}}
&=
-\Bigl( \zeta+\overline{\eta}+\frac{\rho}{\sqrt{2}}\Bigr)+ O(N^{-1}).
\end{align*}
Then, by Gaussian approximation, we have 
\begin{align*}
\begin{split}
\mathfrak{q}_{N-1}^{(n,L)}(\mathfrak{p})
&=
\P\bigl(L\leq X\leq N+L-1 \bigr) \\
&=
\P\Bigl(-\bigl( \eta+\overline{\zeta}+\frac{\rho}{\sqrt{2}}\bigr)\leq \mathcal{N}\leq -\bigl( \eta+\overline{\zeta}-\frac{\rho}{\sqrt{2}}\bigr) \Bigr) + O(N^{-1}),\quad \text{as $N\to\infty$}.
\end{split}
\end{align*}
By Taylor expansion, we have 
\begin{align}
\begin{split}
\label{LemAsym3}
\widehat{\omega}^{(n,L)}(\overline{z},w)
&=
c_N(\zeta,\eta)
2^{N-\frac{2N^2}{\rho^2}}
\frac{1}{2}e^{-\frac{1}{2}(|\zeta|^2+|\eta|^2)+\overline{\zeta}\eta}
e^{-\frac{1}{2}(\overline{\zeta}+\eta)^2-\frac{\sqrt{2}}{2}\rho(\overline{\zeta}+\eta)}
(1+O(N^{-1})), 
\end{split}
\end{align}
where $c_N(\zeta,\eta)=e^{-\sqrt{2}Ni(\im(\zeta)-\im(\eta))/\rho+i(\im\zeta^2-\im\eta^2)/2}$ is a co-cycle factor, which does not affect the value of the determinant. 
This has already been said. 
Hence, we omit the co-cycle factor in sequel. 
Note too that 
\[
\mathfrak{p}^{-L}(1-\mathfrak{p})^{-n}
=
2^{\frac{2N^2}{\rho^2}-N}e^{\frac{\sqrt{2}}{2}\rho(\overline{\zeta}+\eta)+\frac{1}{2}(\overline{\zeta}+\eta)^2}
\bigl(1+O(N^{-1})\bigr),
\]
\[
\mathfrak{p}^{N+L}(1-\mathfrak{p})^{n-N}
=
2^{-\frac{2N^2}{\rho^2}+N}e^{\frac{\sqrt{2}}{2}\rho(\overline{\zeta}+\eta)-\frac{1}{2}(\overline{\zeta}+\eta)^2}
\bigl(1+O(N^{-1})\bigr),
\]
and 
\[
\mathfrak{p}^{N+L+1}(1-\mathfrak{p})^{n-N-1}
=
2^{-\frac{2N^2}{\rho^2}+N}e^{\frac{\sqrt{2}}{2}\rho(\overline{\zeta}+\eta)-\frac{1}{2}(\overline{\zeta}+\eta)^2}
\bigl(1+O(N^{-1})\bigr),
\]
which give that for $k=1,2$, 
\begin{align*}
&\quad 
\widehat{\omega}^{(n,L)}(\overline{z},w)
\mathfrak{p}^{-L}(1-\mathfrak{p})^{-n}
\mathfrak{p}^{N+L+(k-1)}(1-\mathfrak{p})^{n-N-(k-1)}
\\
&=
2^{N-\frac{2N^2}{\rho^2}}
\frac{1}{2}e^{-\frac{1}{2}(|\zeta|^2+|\eta|^2)+\overline{\zeta}\eta}
e^{-\frac{1}{2}(\overline{\zeta}+\eta)^2+\frac{\sqrt{2}}{2}\rho(\overline{\zeta}+\eta)}
\bigl(1+O(N^{-1})\bigr).
\end{align*}
Also, by Stirling formula, we have 
\begin{align*}
\binom{n+L}{N+L}&=
2^{\frac{2N^2}{\rho^2}-N}e^{-\frac{\rho^2}{4}}
\frac{\rho}{\sqrt{\pi} N}
\Bigl( 1+\frac{2\rho^2-\rho^4}{8N}+O(N^{-2}) \Bigr),
\\
\binom{n+L}{N+L+1}&=
2^{\frac{2N^2}{\rho^2}-N}e^{-\frac{\rho^2}{4}}
\frac{\rho}{\sqrt{\pi} N}
\Bigl( 1-\frac{6\rho^2+\rho^4}{8N}+O(N^{-2}) \Bigr). 
\end{align*}
Therefore, we have that for $k=1,2$, 
\begin{align*}
&\quad
\mathfrak{p}^{-L}(1-\mathfrak{p})^{-n}
\widehat{\omega}^{(n,L)}(\overline{z},w)
\sum_{j=1}^k\binom{n+L}{N+L+j-1}
\mathfrak{p}^{N+L+j-1}(1-\mathfrak{p})^{n-N-(j-1)}
\\
&=
k
\frac{\rho}{2\sqrt{\pi}N}
e^{-\frac{1}{2}(|\zeta|^2+|\eta|^2)+\overline{\zeta}\eta}e^{-\frac{1}{2}\bigl(\overline{\zeta}+\eta-\frac{\rho}{\sqrt{2}}\bigr)^2}
\bigl(1+O(N^{-1})\bigr).
\end{align*}
Here, note that 
\[
\frac{1}{nx-L}\frac{\Gamma(L+n+1)}{\Gamma(n+1)\Gamma(L)}
\widehat{\omega}^{(n,L)}(\overline{z},w)
=
\frac{e^{-\frac{1}{2}(|\zeta|^2+|\eta|^2)+\overline{\zeta}\eta}}{2\sqrt{2\pi}(\chi+\overline{\chi}+\frac{\rho}{\sqrt{2}})}
e^{-\frac{1}{2}\bigl(\overline{\zeta}+\eta+\frac{\rho}{\sqrt{2}} \bigr)^2}\bigl( 1+ O(N^{-1}) \bigr). 
\]
Together with asymptotic expansions so far, we obtain 
\begin{align*}
&\quad
\widehat{q}_{N-1+k}^{(n,L)}(\overline{z}w|\lambda\overline{\lambda})
\widehat{\omega}^{(n,L)}(\overline{z},w)
\\
&=
\frac{e^{-\frac{1}{2}(|\zeta|^2+|\eta|^2)+\overline{\zeta}\eta}}{2}
\Bigl( 
\mathcal{J}_{\rho}(\overline{\zeta},\eta|\overline{\chi},\chi)
+
k
\frac{\rho}{N}
\frac{e^{-\frac{1}{2}\bigl(\overline{\zeta}+\eta-\frac{\rho}{\sqrt{2}}\bigr)^2}}{\sqrt{\pi}}
+
\frac{C_{\mathrm{w}}(\zeta,\eta)}{N}+ O(N^{-2})
\Bigr),
\end{align*}
as $N\to\infty$,
where some constant $C_{\mathrm{w}}(\zeta,\eta)$ depends on $\zeta,\eta,\rho$. 
Here, the convergence is uniform for $\zeta,\eta,\chi$ in compact subsets of $\C$.
\par
\textbf{(III) Singular origin regime}:
under the setting of (III), by Taylor expansion,
\[
z\wbar=\frac{\zeta\overline{\eta}}{1+b}\frac{1}{N}+O\left(\frac{1}{N^2}\right)
\]
and by Poisson approximation, we have
\[
\P(L\leq X\leq N+k+L-1)=P(L,\zeta\overline{\eta})\left(1+o(1)\right),\quad\text{as $N\to\infty$}, 
\]
where 
\[
P(c,z)=\frac{1}{\Gamma(c)}\int_0^z t^{c-1}e^{-t}dt=e^{-z}z^cE_{1,1+c}(z)\quad (c>0).
\]
On the other hand, since
\[
\mathfrak{p}^{-L}\left(1-\mathfrak{p}\right)^{-n}
\widehat{\omega}_N^{(n,L)}(z,w)
=
c(\zeta,\eta)e^{-\frac{1}{2}(|\zeta|^2+|\eta|^2)+\zeta\overline{\eta}},
\]
where $c(\zeta,\eta)$ is a co-cycle factor, and hence, by simple computations, we have 
\[
\widehat{q}_{N+k-1}^{(n,L)}(z\wbar)
\widehat{\omega}_N^{(n,L)}(z,w)
=
\frac{1}{\chi\overline{\chi}-L}(\zeta\overline{\eta})^L\mathcal{E}_{1,L}\left(\zeta\overline{\eta}|\overline{\chi}\chi\right)e^{-\frac{1}{2}(|\zeta|^2+|\eta|^2)}\left(1+o(1)\right),
\]
uniformly for $\zeta,\eta,\chi$ in a compact subset of $\C$ and we omitted the cocycle factor. 
\end{proof}
\begin{rem}
Our proof followed the probabilistic argument by \cite{SFv3}. 
Our claim can be also proved by using the asymptotics for the incomplete beta function in \cite{TEM79}. 
\end{rem}
With help of Lemma~\ref{Lem_Asym}, we can complete the proof of Theorem~\ref{MainResult}.
Now, we shall finish the proof for each case. 
\subsection{Proof of Theorem~\ref{MainResult} of the strongly non-unitary regime}
In this subsection, we assume that $L=aN$ and $n=(b+1)N$ with $a,b\geq0$. 
\begin{proof}[Proof of Theorem~\ref{MainResult} for the bulk case in strongly non-unitary regime]
Through the proof of Theorem~\ref{MainResult} for the bulk case in strongly non-unitary regime, 
let 
\[
z=p+\frac{\zeta}{\sqrt{N\delta_N(p)}},\quad
w=p+\frac{\eta}{\sqrt{N\delta_N(p)}},\quad
\lambda=p+\frac{\chi}{\sqrt{N\delta_N(p)}}.
\]
First, note that 
\[
N\delta_N(p)=N\frac{a+b+1}{(1+|p|^2)^2}(1+O(N^{-1})). 
\]
Second, note that 
\[
\varpi^{(n,L)}(z,\overline{z},w,\overline{w}|\lambda,\overline{\lambda})
=
\frac{(1+|p|^2)^2}{(a+b+1)N}
\varpi(\zeta,\overline{\zeta},\eta,\overline{\eta}|\chi,\overline{\chi}),
(1+O(N^{-1}))
\]
and by \eqref{LemAsym1}, we have 
\[
C_{K}(z,w)=e^{\frac{1}{2}(|\zeta|^2+|\eta|^2)-\overline{\zeta}\eta}(1+O(N^{-1})), 
\]
up to a co-cycle factor. 
Notice also that 
\[
\widehat{g}_N^{(n,L)}(\lambda\overline{\lambda})\widehat{\omega}^{(n,L)}(\lambda,\overline{\lambda})
=
N\frac{a+1-b|p|^2}{1+|p|^2}\bigl( 1 + O(N^{-1}) \bigr).
\]
Then, we have 
\[
\lim_{N\to\infty}
\frac{1}{N}\frac{1}{N\delta_N(p)}
\frac{n\bigl(z\overline{z}-\frac{L}{n} \bigr)}{z\overline{z}(1+z\overline{z})}
g_{N-1}^{(n,L)}(z\overline{z})\widehat{\omega}^{(n,L)}(\overline{z},z)
=
\frac{b(b+1)}{a+b+1}\frac{\Bigl( |p|^2-\frac{a}{b+1} \Bigr)\Bigl( \frac{a+1}{b}-|p|^2 \Bigr)}{|p|^2}.
\]
Since 
\begin{align*}
&\quad
(L+N+1)Q_{N+1}^{(n,L)}(\overline{z},w,\lambda)-\lambda\overline{\lambda}(n-N-1)Q_{N+1}^{(n,L)}(\overline{z},w,\lambda)
\\
&=
N\frac{a+1-b|p|^2}{(1+|p|^2)^2}e^{-\frac{1}{2}(|\zeta|^2+|\eta|^2)+\overline{\zeta}\eta-(\overline{\zeta}-\overline{\eta})(\eta-\chi)}
\bigl( 
1-e^{(\overline{\zeta}-\overline{\chi})(\eta-\chi)}
\bigr)\bigl( 1+ O(N^{-1}) \bigr), 
\end{align*}
we have 
\begin{align*}
C_K(z,w)\mathfrak{H}_N^{(n,L)}(z,w,\lambda)
&=
(N\delta_N(p))^2 \frac{1-e^{(\overline{\zeta}-\overline{\chi})(\eta-\chi)}}{(\overline{\zeta}-\overline{\chi})^2(\eta-\chi)^2}
e^{-(\overline{\zeta}-\overline{\chi})(\eta-\chi)}
\bigl( 1+ O(N^{-1}) \bigr).
\end{align*}
On the other hand, similarly, we have 
\[
C_K(z,w)\mathfrak{F}_N^{(n,L)}(z,w,\lambda)
=
\frac{(N\delta_N(p))^2}{(\overline{\zeta}-\overline{\chi})(\eta-\chi)}\bigl( 1+ O(N^{-1}) \bigr).
\]
Hence, we have 
\begin{align*}
K_{1,1}^{(N)}(z,\overline{z},w,\overline{w}|\lambda,\overline{\lambda})
&=
N\delta_N(p) \frac{1+(\overline{\zeta}-\overline{\chi})(\eta-\chi)e^{-(\overline{\zeta}-\overline{\chi})(\eta-\chi)}-e^{-(\overline{\zeta}-\overline{\chi})(\eta-\chi)}}{(\overline{\zeta}-\overline{\chi})^2(\eta-\chi)^2}
\\
&\times
\varpi(\zeta,\overline{\zeta},\eta,\overline{\eta}|\chi,\overline{\chi})
e^{-(\overline{\zeta}-\overline{\chi})(\eta-\chi)}\bigl( 1+ O(N^{-1}) \bigr),
\end{align*}
which gives 
\[
\lim_{N\to\infty}\frac{1}{N\delta_N(p)}
K_{1,1}^{(N)}(z,\overline{z},w,\overline{w}|\lambda,\overline{\lambda})
=
K_{1,1}^{(\mathbf{b})}(z,\overline{z},w,\overline{w}|\lambda,\overline{\lambda}),
\]
uniformly for $\zeta,\eta,\chi$ in compact subsets of $\C$. 
This completes the proof of \eqref{D11BULK}.
By decoupling Lemma~\ref{Decoupling}, it suffices to consider the front factor for \eqref{D12BULK}.
However, since the exponential factor in \eqref{BulkWeight} becomes $e^{(\overline{\zeta_1}-\overline{\zeta_2})(\zeta_1-\zeta_2)}$, from the proof of the diagonal case and by \eqref{D12ver1}, we have 
\begin{align*}
&
\quad -\lim_{N\to\infty} \frac{1}{N}\frac{1}{(N\delta_N(p))^2}
\frac{n(z\overline{w}-\frac{L}{n})}{z\overline{w}(1+z\overline{w})}
\widehat{g}_{N-1}^{(n,L)}(z\overline{w})\widehat{\omega}^{(n,L)}(\overline{w},z)
\mathcal{K}^{(N-1)}(\overline{z},w|z,\overline{w})\widehat{\omega}^{(n,L)}(\overline{z},w).
\\
&
=
-
\frac{\frac{b(b+1)}{a+b+1}\Bigl( |p|^2-\frac{a}{b+1} \Bigr)\Bigl( \frac{a+1}{b}-|p|^2 \Bigr)}{|p|^2}
\mathcal{K}^{(\mathbf{b})}(\overline{\zeta},\eta|\zeta,\overline{\eta}).
\end{align*}
This completes the proof of \eqref{D12BULK}
\end{proof}

\begin{proof}[Proof of Theorem~\ref{MainResult} for the edge case in strongly non-unitary regime]
Similar to the proof of Theorem~\ref{MainResult} for the bulk case in strongly non-unitary regime, 
for $\theta\in[0,2\pi)$, let 
\[
z=e^{i\theta}\Bigl(p+\mathfrak{s}\frac{\zeta}{\sqrt{N\delta_N(p)}}\Bigr),\quad
w=e^{i\theta}\Bigl(p+\mathfrak{s}\frac{\eta}{\sqrt{N\delta_N(p)}}\Bigr),\quad
\lambda=e^{i\theta}\Bigl(p+\mathfrak{s}\frac{\chi}{\sqrt{N\delta_N(p)}}\Bigr).
\]
Here, if we consider the outer edge case, then $p=\sqrt{(a+1)/b}$, and if we consider the inner edge case, then $p=\sqrt{a/(b+1)}$. 
Also, if we consider the outer edge case, then $\mathfrak{s}=1$, and if we consider the inner edge case, then $\mathfrak{s}=-1$. 
First, note that 
\[
\varpi^{(n,L)}(z,\overline{z},w,\overline{w}|\lambda,\overline{\lambda})
=
\frac{(1+|p|^2)^2}{(a+b+1)N}
\varpi(\zeta,\overline{\zeta},\eta,\overline{\eta}|\chi,\overline{\chi})
(1+O(N^{-1}))
\]
and 
\[
\widehat{g}_N^{(n,L)}(\lambda\overline{\lambda})
\widehat{\omega}^{(n,L)}(\overline{\lambda},\lambda)
=
\begin{cases}
\frac{\sqrt{N}}{\sqrt{2\pi}}\sqrt{\frac{(a+1)b}{a+b+1}}\mathcal{F}(\chi+\overline{\chi})(1+O(N^{-1/2})),  & \text{if $\mathfrak{s}=1$},\\
-\frac{N}{\sqrt{2\pi}(\chi+\overline{\chi})}\mathcal{F}(\chi+\overline{\chi})(1+O(N^{-1/2})),  & \text{if $\mathfrak{s}=-1$},
\end{cases}
\]
where $\mathcal{F}(x)$ is given by \eqref{EdgeF2}. 
Then, we have 
\[
\lim_{N\to\infty}
\frac{1}{\sqrt{N}}\frac{1}{N\delta_N(p)}
\frac{n\bigl(z\overline{z}-\frac{L}{n} \bigr)}{z\overline{z}(1+z\overline{z})}
g_{N-1}^{(n,L)}(z\overline{z})\widehat{\omega}^{(n,L)}(\overline{z},z)
=
\begin{cases}
\sqrt{\frac{a+b+1}{2\pi(a+1)b}}\mathcal{F}(\zeta+\overline{\zeta})(1+O(N^{-1/2})),  & \text{if $\mathfrak{s}=1$},\\
\sqrt{\frac{a+b+1}{2\pi a(b+1)}}\mathcal{F}(\zeta+\overline{\zeta})(1+O(N^{-1/2}))  & \text{if $\mathfrak{s}=-1$}. 
\end{cases}
\]
\par
\textbf{(1) outer edge case}: First, we consider the outer edge case. 
By Lemma~\ref{Lem_Asym}, we have 
\begin{align*}
&\quad
(N+L+1)Q_{N+1}^{(n,L)}(z,w,\lambda)-\lambda\overline{\lambda}(n-N-1)Q_{N}^{(n,L)}(z,w,\lambda)
\\
&=
\frac{\sqrt{N}\sqrt{a+1}b^{3/2}}{\sqrt{2\pi}(a+b+1)^{3/2}}
H_1(\zeta,\eta,\chi)
e^{-\frac{1}{2}(|\zeta|^2+|\eta|^2)+\overline{\zeta}\eta}
(1+O(N^{-1/2})),
\end{align*}
where we set 
\begin{align*}
H_1(\zeta,\eta,\chi)
&=
\sqrt{2\pi}(\chi+\overline{\chi})F(\overline{\zeta}+\eta)F(\overline{\chi}+\chi)
-F(\overline{\zeta}+\eta)e^{-\frac{1}{2}(\chi+\overline{\chi})^2}-F(\overline{\chi}+\chi)e^{-\frac{1}{2}(\overline{\zeta}+\eta)^2}
\\
&
+e^{-(\overline{\zeta}-\overline{\chi})(\eta-\chi)}
\Bigl( 
F(\overline{\zeta}+\chi)e^{-\frac{1}{2}(\overline{\chi}+\eta)^2}+F(\overline{\chi}+\eta)e^{-\frac{1}{2}(\overline{\zeta}+\chi)^2}
-\sqrt{2\pi}(\chi+\overline{\chi})F(\overline{\zeta}+\chi)F(\overline{\chi}+\eta)
\Bigr).
\end{align*} 
Then, we have 
\[
C_{K}(z,w)\mathfrak{H}_N^{(n,L)}(z,w,\lambda)
\varpi^{(n,L)}(z,\overline{z},w,\overline{w}|\lambda,\overline{\lambda})
=
N\delta_N(p) 
\frac{H_1(\zeta,\eta,\chi)
\varpi(\zeta,\overline{\zeta},\eta,\overline{\eta}|\chi,\overline{\chi})}{(\overline{\zeta}-\overline{\chi})^2(\eta-\chi)^2\mathcal{F}(\chi+\overline{\chi})}
(1+O(N^{-1/2})). 
\]
Similarly, we have 
\[
C_{K}(z,w)\mathfrak{I}_N^{(n,L)}(z,w,\lambda)
\varpi^{(n,L)}(z,\overline{z},w,\overline{w}|\lambda,\overline{\lambda})
=
N\delta_N(p)\frac{F(\overline{\zeta}+\eta)\varpi(\zeta,\overline{\zeta},\eta,\overline{\eta}|\chi,\overline{\chi})}{(\overline{\zeta}-\overline{\chi})(\eta-\chi)}
(1+O(N^{-1/2})). 
\]
For \eqref{Part2N} and \eqref{Part3N}, it is straightforward to see that 
\[
C_{K}(z,w)\mathfrak{II}_N^{(n,L)}(z,w,\lambda)
\varpi^{(n,L)}(z,\overline{z},w,\overline{w}|\lambda,\overline{\lambda})
=o(N\delta_N(p)),
\]
and 
\[
C_{K}(z,w)\mathfrak{III}_N^{(n,L)}(z,w,\lambda)
\varpi^{(n,L)}(z,\overline{z},w,\overline{w}|\lambda,\overline{\lambda})
=o(N\delta_N(p)), 
\]
as $N\to\infty$. 
Here, we recall that $\partial_xF(a+x)=e^{-\frac{1}{2}(a+x)^2}/\sqrt{2\pi}$, and 
\begin{align*}
&\quad e^{-\frac{1}{2}a^2}\frac{d}{dx}
\Bigl[
e^{\frac{(a+x)^2}{2}}
\Bigl(e^{-f}F(b+x)F(c+x)-F(d+x)F(a+x)+fF(d)F(a+x) \Bigr)
\Bigr]\Bigr|_{x=0}
\\
&=
\frac{1}{\sqrt{2\pi}}
\Bigl[
\sqrt{2\pi}a\Bigl(e^{-f}F(b)F(c)-F(d)F(a)+fF(d)F(a) \Bigr)
\\
&
-
\Bigl( 
e^{-f}e^{-\frac{1}{2}b^2}F(c)+e^{-f}e^{-\frac{1}{2}c^2}F(b)
-e^{-\frac{1}{2}d^2}F(a)-e^{-\frac{1}{2}a^2}F(d)
+fe^{-\frac{1}{2}a^2}F(d)
\Bigr)
\Bigr]. 
\end{align*}
Further we note that 
\begin{align*}
&\quad
H_1(\zeta,\eta+\chi)+(\overline{\zeta}-\overline{\chi})(\eta-\chi)F(\overline{\zeta}+\eta)\mathcal{F}(\overline{\chi}+\chi)
\\
&=
\sqrt{2\pi}(\chi+\overline{\chi})F(\overline{\zeta}+\eta)F(\overline{\chi}+\chi)
-e^{-(\overline{\zeta}-\overline{\chi})(\eta-\chi)}\sqrt{2\pi}(\chi+\overline{\chi})F(\overline{\zeta}+\chi)F(\overline{\chi}+\eta)
\\
&
-\sqrt{2\pi}(\overline{\zeta}-\overline{\chi})(\eta-\chi)(\chi+\overline{\chi})F(\overline{\zeta}+\eta)F(\overline{\chi}+\chi)
-F(\overline{\zeta}+\eta)e^{-\frac{1}{2}(\chi+\overline{\chi})^2}-F(\overline{\chi}+\chi)e^{-\frac{1}{2}(\overline{\zeta}+\eta)^2}
\\
&
+e^{-(\overline{\zeta}-\overline{\chi})(\eta-\chi)}
F(\overline{\zeta}+\chi)e^{-\frac{1}{2}(\overline{\chi}+\eta)^2}
+e^{-(\overline{\zeta}-\overline{\chi})(\eta-\chi)}F(\overline{\chi}+\eta)e^{-\frac{1}{2}(\overline{\zeta}+\chi)^2}
+(\overline{\zeta}-\overline{\chi})(\eta-\chi)e^{-\frac{1}{2}(\chi+\overline{\chi})^2}F(\overline{\zeta}+\eta)
\\
&=
-\sqrt{2\pi}(\chi+\overline{\chi})
\bigl(
e^{-(\overline{\zeta}-\overline{\chi})(\eta-\chi)}F(\overline{\zeta}+\chi)F(\overline{\chi}+\eta)
-F(\overline{\zeta}+\eta)F(\overline{\chi}+\chi) 
+(\overline{\zeta}-\overline{\chi})(\eta-\chi)F(\overline{\zeta}+\eta)F(\overline{\chi}+\chi)
 \bigr)\\
 &
 +\Bigl( e^{-(\overline{\zeta}-\overline{\chi})(\eta-\chi)}F(\overline{\zeta}+\chi)e^{-\frac{1}{2}(\overline{\chi}+\eta)^2}
 +e^{-(\overline{\zeta}-\overline{\chi})(\eta-\chi)}F(\overline{\chi}+\eta)e^{-\frac{1}{2}(\overline{\zeta}+\chi)^2}
 \\
 &
 -F(\overline{\zeta}+\eta)e^{-\frac{1}{2}(\chi+\overline{\chi})^2}
 -F(\overline{\chi}+\chi)e^{-\frac{1}{2}(\overline{\zeta}+\eta)^2}
+(\overline{\zeta}-\overline{\chi})(\eta-\chi)e^{-\frac{1}{2}(\chi+\overline{\chi})^2}F(\overline{\zeta}+\eta)
\Bigr). 
\end{align*} 
Therefore, we have 
\begin{align*}
&\quad
H_1(\zeta,\eta+\chi)+(\overline{\zeta}-\overline{\chi})(\eta-\chi)F(\overline{\zeta}+\eta)\mathcal{F}(\overline{\chi}+\chi)
\\
&=
-\sqrt{2\pi}
e^{-\frac{1}{2}a^2}\frac{d}{dx}
\Bigl[
e^{\frac{(a+x)^2}{2}}
\Bigl(e^{-f}F(b+x)F(c+x)-F(d+x)F(a+x)+fF(d)F(a+x) \Bigr)
\Bigr]\Bigr|_{x=0}.
\end{align*}
where $a=\overline{\chi}+\chi,b=\overline{\zeta}+\chi,c=\overline{\chi}+\eta,d=\overline{\zeta}+\eta,f=(\overline{\zeta}-\overline{\chi})(\eta-\chi)$. 
This completes the proof for the outer edge case. 
\par
\textbf{(2) inner edge case}: Next, we consider the outer edge case. 
By Lemma~\ref{Lem_Asym}, we have 
\begin{align*}
&\quad
(N+L+1)Q_{N+1}^{(n,L)}(z,w,\lambda)-\lambda\overline{\lambda}(n-N-1)Q_{N}^{(n,L)}(z,w,\lambda)
\\
&=
N\frac{b+1}{a+b+1}\frac{e^{-\frac{1}{2}(|\zeta|^2+|\eta|^2)+\overline{\zeta}\eta}}{\sqrt{2\pi}(\overline{\chi}+\chi)}
H_2(\zeta,\eta,\chi)\bigl( 1+ O(N^{-1/2}) \bigr),
\end{align*}
where we set 
\begin{align*}
H_2(\zeta,\eta,\chi)
&=
\sqrt{2\pi}(\overline{\chi}+\chi)
\bigl(e^{-(\overline{\zeta}-\overline{\chi})(\eta-\chi)}F(\overline{\zeta}+\chi)F(\overline{\chi}+\eta) - F(\overline{\chi}+\chi)F(\overline{\zeta}+\eta) \bigr)
\\
&
+e^{-\frac{1}{2}(\overline{\chi}+\chi)^2}F(\overline{\zeta}+\eta)
+e^{-\frac{1}{2}(\overline{\zeta}+\eta)^2}F(\overline{\chi}+\chi)
\\
&
-e^{-(\overline{\zeta}-\overline{\chi})(\eta-\chi)}F(\overline{\zeta}+\chi)e^{-\frac{1}{2}(\overline{\chi}+\eta)^2}
-e^{-(\overline{\zeta}-\overline{\chi})(\eta-\chi)}F(\overline{\chi}+\eta)e^{-\frac{1}{2}(\overline{\zeta}+\chi)^2}. 
\end{align*}
Then, we have 
\begin{align*}
&\quad 
C_K(z,w)\mathfrak{H}_N^{(n,L)}(z,w,\lambda)
\varpi^{(n,L)}(z,\overline{z},w,\overline{w}|\lambda,\overline{\lambda})
=
-N\delta_N(p) \frac{H_2(\zeta,\eta,\chi)\varpi(\zeta,\overline{\zeta},\eta,\overline{\eta}|\chi,\overline{\chi})}{(\overline{\zeta}-\overline{\chi})^2(\eta-\chi)^2\mathcal{F}(\overline{\chi}+\chi)}
\bigl( 1+ O(N^{-1/2})\bigr). 
\end{align*}
Similarly, we have 
\[
C_{K}(z,w)\mathfrak{I}_N^{(n,L)}(z,w,\lambda)
\varpi^{(n,L)}(z,\overline{z},w,\overline{w}|\lambda,\overline{\lambda})
=
N\delta_N(p)\frac{F(\overline{\zeta}+\eta)\varpi(\zeta,\overline{\zeta},\eta,\overline{\eta}|\chi,\overline{\chi})}{(\overline{\zeta}-\overline{\chi})(\eta-\chi)}
\bigl( 1+ O(N^{-1/2})\bigr). 
\]
As in the outer edge case, for \eqref{Part2N} and \eqref{Part3N}, it is straightforward to see that as $N\to\infty$, 
\[
C_{K}(z,w)\mathfrak{II}_N^{(n,L)}(z,w,\lambda)
\varpi^{(n,L)}(z,\overline{z},w,\overline{w}|\lambda,\overline{\lambda})
=o(N\delta_N(p)),
\]
and 
\[
C_{K}(z,w)\mathfrak{III}_N^{(n,L)}(z,w,\lambda)
\varpi^{(n,L)}(z,\overline{z},w,\overline{w}|\lambda,\overline{\lambda})
=o(N\delta_N(p)). 
\]
Here, we note that 
\begin{align*}
&\quad-H_2(\zeta,\eta,\chi)+(\overline{\zeta}-\overline{\chi})(\eta-\chi)F(\overline{\zeta}+\eta)\mathcal{F}(\overline{\chi}+\chi)
\\
&=
-\sqrt{2\pi}(\overline{\chi}+\chi)
\Bigl(
e^{-(\overline{\zeta}-\overline{\chi})(\eta-\chi)}F(\overline{\zeta}+\chi)F(\overline{\chi}+\eta)
 -F(\overline{\chi}+\chi)F(\overline{\zeta}+\eta) 
+(\overline{\zeta}-\overline{\chi})(\eta-\chi)F(\overline{\chi}+\chi)F(\overline{\zeta}+\eta)
 \Bigr)
 \\
&
+e^{-(\overline{\zeta}-\overline{\chi})(\eta-\chi)}F(\overline{\zeta}+\chi)e^{-\frac{1}{2}(\overline{\chi}+\eta)^2}
+e^{-(\overline{\zeta}-\overline{\chi})(\eta-\chi)}F(\overline{\chi}+\eta)e^{-\frac{1}{2}(\overline{\zeta}+\chi)^2}
\\
&
-F(\overline{\zeta}+\eta)e^{-\frac{1}{2}(\overline{\chi}+\chi)^2}
-F(\overline{\chi}+\chi)e^{-\frac{1}{2}(\overline{\zeta}+\eta)^2}
+(\overline{\zeta}-\overline{\chi})(\eta-\chi)F(\overline{\zeta}+\eta)e^{-\frac{1}{2}(\chi+\overline{\chi})^2}
\\
&=
-\sqrt{2\pi}
e^{-\frac{1}{2}a^2}\frac{d}{dx}
\Bigl[
e^{\frac{(a+x)^2}{2}}
\Bigl(e^{-f}F(b+x)F(c+x)-F(d+x)F(a+x)+fF(d)F(a+x) \Bigr)
\Bigr]\Bigr|_{x=0}.
\end{align*}
where $a=\overline{\chi}+\chi,b=\overline{\zeta}+\chi,c=\overline{\chi}+\eta,d=\overline{\zeta}+\eta,f=(\overline{\zeta}-\overline{\chi})(\eta-\chi)$. 
This completes the proof of \eqref{D11BULK} and \eqref{ThmEdge1}.
Similarly, it suffices to consider the front factor for \eqref{D12EDGE}.
From the proof of the diagonal case and by \eqref{D12ver1}, 
\begin{align*}
&
\quad -\lim_{N\to\infty} \frac{1}{\sqrt{N}}\frac{1}{(N\delta_N(p))^2}
\frac{n(z\overline{w}-\frac{L}{n})}{z\overline{w}(1+z\overline{w})}
\widehat{g}_{N-1}^{(n,L)}(z\overline{w})\widehat{\omega}^{(n,L)}(\overline{w},z)
\mathcal{K}^{(N-1)}(\overline{z},w|z,\overline{w})\widehat{\omega}^{(n,L)}(\overline{z},w).
\\
&
=
-
\mathfrak{c}_{\mathfrak{s}}e^{-|\zeta-\eta|^2}\mathcal{F}(\zeta+\overline{\eta})
\frac{H(\overline{\eta}+\zeta,\overline{\zeta}+\zeta,\overline{\eta}+\eta,\overline{\eta}+\zeta,-(\overline{\zeta}-\overline{\eta})(\zeta-\eta))}{(\overline{\zeta}-\overline{\eta})^2(\zeta-\eta)^2}.
\end{align*}
This completes the proof of \eqref{D12EDGE}. 
\end{proof}

\subsection{Proof of Theorem~\ref{MainResult} in the weakly non-unitary regime}
In this section, we assume that $L=N^2/\rho^2-N$ and $n=N^2/\rho^2$. 
\begin{proof}[Proof of Theorem~\ref{MainResult} in the weakly non-unitary regime]
Through the proof of Theorem~\ref{MainResult} in the weakly non-unitary regime, for $\theta\in[0,2\pi)$, we set 
\[
z=e^{i\theta}\Bigl(1+\frac{\zeta}{\sqrt{N\delta_N(1)}} \Bigr),\quad 
w=e^{i\theta}\Bigl(1+\frac{\eta}{\sqrt{N\delta_N(1)}} \Bigr),\quad
\lambda=e^{i\theta}\Bigl(1+\frac{\chi}{\sqrt{N\delta_N(1)}} \Bigr).
\]
Note that 
\[
N\delta_N(1)=\frac{N^2}{2\rho^2}\Bigl( 1 -\frac{\rho^2}{2N} + \frac{\rho^2}{2N^2}\Bigr).
\]
Notice also that 
\[
\varpi^{(n,L)}(z,\overline{z},w,\overline{w}|\lambda,\overline{\lambda})
=
\frac{1}{N\delta_N(1)}
\varpi(\zeta,\overline{\zeta},\eta,\overline{\eta}|\chi,\overline{\chi})
\bigl( 1 + O(N^{-1}) \bigr), 
\]
and 
\[
\widehat{g}_N^{(n,L)}(\lambda\overline{\lambda})
\widehat{\omega}^{(n,L)}(\overline{\lambda},\lambda)
=
\frac{N}{\sqrt{2}\rho} \frac{1}{\chi+\overline{\chi}+\frac{\rho}{\sqrt{2}}}
\mathcal{L}_{\rho}(\chi+\overline{\chi})
\bigl( 1 + O(N^{-1}) \bigr).
\]
Then, we have 
\[
\lim_{N\to\infty}\frac{1}{N\delta_N(1)}
\frac{n(\lambda\overline{\lambda}-L/n)}{\lambda\overline{\lambda}(1+\lambda\overline{\lambda})}
\widehat{g}_N^{(n,L)}(\lambda\overline{\lambda})\widehat{\omega}^{(n,L)}(\lambda,\overline{\lambda})
=
\mathcal{L}_{\rho}(\chi+\overline{\chi}). 
\]
Since 
\begin{align*}
&\quad Q_{N+1}^{(n,L)}(z,w,\lambda)-Q_N(z,w,\lambda)
\\
&=
\frac{\rho}{N}
\frac{e^{-\frac{1}{2}(|\zeta|^2+|\eta|^2)+\overline{\zeta}\eta}}{2^2\sqrt{\pi}}
e^{-(\overline{\zeta}-\overline{\chi})(\eta-\chi)}
\\
&\quad 
\times
\Bigl\{
\mathcal{J}_{\rho}(\overline{\zeta},\chi|\overline{\chi},\chi)e^{-\frac{1}{2}(\overline{\chi}+\eta-\frac{\rho}{\sqrt{2}})^2}
+
\mathcal{J}_{\rho}(\overline{\chi},\eta|\overline{\chi},\chi)e^{-\frac{1}{2}(\overline{\zeta}+\chi-\frac{\rho}{\sqrt{2}})^2}
\\
&
\quad \quad
-e^{(\overline{\zeta}-\overline{\chi})(\eta-\chi)}\mathcal{J}_{\rho}(\overline{\zeta},\eta|\overline{\eta},\chi)e^{-\frac{1}{2}(\overline{\chi}+\chi-\frac{\rho}{\sqrt{2}})^2}
-e^{(\overline{\zeta}-\overline{\chi})(\eta-\chi)}\mathcal{J}_{\rho}(\overline{\chi},\chi|\overline{\chi},\chi)e^{-\frac{1}{2}(\overline{\zeta}+\eta-\frac{\rho}{\sqrt{2}})^2}
\Bigr\}
\bigl( 1 + O(N^{-1}) \bigr), 
\end{align*}
we have 
\begin{align*}
&\quad
(N+L+1)Q_{N+1}^{(n,L)}(z,w,\lambda)-\lambda\overline{\lambda}(n-N-1)Q_{N}^{(n,L)}(z,w,\lambda)
\\
&=
\frac{\sqrt{2}N}{\rho}
\frac{e^{-\frac{1}{2}(|\zeta|^2+|\eta|^2)+\overline{\zeta}\eta}}{2^2} 
\frac{e^{-(\overline{\zeta}-\overline{\chi})(\eta-\chi)}}{\chi+\overline{\chi}+\frac{\rho}{\sqrt{2}}}
H_{3}(\zeta,\eta,\chi)\bigl( 1 + O(N^{-1}) \bigr), 
\end{align*}
where
\begin{align*}
H_3(\zeta,\eta,\chi)
&=
\mathcal{B}_{\rho}(\overline{\chi}+\chi,\overline{\zeta}+\chi,\overline{\chi}+\eta)
+
\mathcal{B}_{\rho}(\overline{\chi}+\chi,\overline{\chi}+\eta,\overline{\zeta}+\chi)
\\
&
-e^{(\overline{\zeta}-\overline{\chi})(\eta-\chi)}\mathcal{B}_{\rho}(\overline{\chi}+\chi,\overline{\zeta}+\eta,\overline{\chi}+\chi)
-e^{(\overline{\zeta}-\overline{\chi})(\eta-\chi)}\mathcal{B}_{\rho}(\overline{\chi}+\chi,\overline{\chi}+\chi,\overline{\zeta}+\eta)
\\
&
+\Bigl(\overline{\chi}+\chi+\frac{\rho}{\sqrt{2}}\Bigr)\Bigl(\overline{\chi}+\chi-\frac{\rho}{\sqrt{2}}\Bigr)
\Bigl( 
e^{(\overline{\zeta}-\overline{\chi})(\eta-\chi)}L_{\rho}(\overline{\zeta}+\eta)L_{\rho}(\overline{\chi}+\chi)
-
L_{\rho}(\overline{\zeta}+\chi)L_{\rho}(\overline{\chi}+\eta)
\Bigr)
\\
&
+
\mathcal{C}_{\rho}(\overline{\zeta}+\chi,\overline{\chi}+\eta)
-e^{(\overline{\zeta}-\overline{\chi})(\eta-\chi)}
\mathcal{C}_{\rho}(\overline{\chi}+\chi,\overline{\zeta}+\eta).
\end{align*}
Then, we have 
\begin{align*}
C_K(z,w)\mathfrak{H}_N^{(n,L)}(z,w,\lambda)
\varpi^{(n,L)}(z,\overline{z},w,\overline{w}|\lambda,\overline{\lambda})
=
N\delta_N(1)\frac{H_3(\zeta,\eta,\chi)\varpi(\zeta,\overline{\zeta},\eta,\overline{\eta}|\chi,\overline{\chi})}{(\overline{\zeta}-\overline{\chi})^2(\eta-\chi)^2\mathcal{L}_{\rho}(\overline{\chi}+\chi)}e^{-(\overline{\zeta}-\overline{\chi})(\eta-\chi)}
\bigl( 1 + O(N^{-1}) \bigr).
\end{align*}
On the other hand, similarly, we have 
\[
C_K(z,w)\mathfrak{I}_N^{(n,L)}(\overline{z},w|\overline{\lambda},\lambda)
\varpi^{(n,L)}(z,\overline{z},w,\overline{w}|\lambda,\overline{\lambda})
=
N\delta_N(1)
\frac{L_{\rho}(\overline{\zeta}+\eta)\varpi(\zeta,\overline{\zeta},\eta,\overline{\eta}|\chi,\overline{\chi})}{(\overline{\zeta}-\overline{\chi})(\eta-\chi)}
\bigl( 1 + O(N^{-1}) \bigr).
\]
For \eqref{Part2N} and \eqref{Part3N}, it is straightforward to see that 
\[C_K(z,w)\mathfrak{II}_N^{(n,L)}(\overline{z},w|\overline{\lambda},\lambda)
\varpi^{(n,L)}(z,\overline{z},w,\overline{w}|\lambda,\overline{\lambda})
=
o(N\delta_N(1)), 
\]
and 
\[
C_K(z,w)\mathfrak{III}_N^{(n,L)}(\overline{z},w|\overline{\lambda},\lambda)
\varpi^{(n,L)}(z,\overline{z},w,\overline{w}|\lambda,\overline{\lambda})
=
o(N\delta_N(1)). 
\]
Here, we note that 
\begin{align*}
&\quad 
H_3(\zeta,\eta,\chi)
+(\overline{\zeta}-\overline{\chi})(\eta-\chi)L_{\rho}(\overline{\zeta}+\eta)
e^{(\overline{\zeta}-\overline{\chi})(\eta-\chi)}
\mathcal{L}_{\rho}(\overline{\chi}+\chi)
\\
&=
\mathcal{A}_{\rho}(\overline{\chi}+\chi,\overline{\zeta}+\chi,\overline{\chi}+\eta,\overline{\zeta}+\eta,(\overline{\zeta}-\overline{\chi})(\eta-\chi))
+
\mathcal{B}_{\rho}(\overline{\chi}+\chi,\overline{\zeta}+\chi,\overline{\chi}+\eta)
+
\mathcal{B}_{\rho}(\overline{\chi}+\chi,\overline{\chi}+\eta,\overline{\zeta}+\chi)
\\
&
+
(\overline{\zeta}-\overline{\chi})(\eta-\chi)e^{(\overline{\zeta}-\overline{\chi})(\eta-\chi)}
\mathcal{B}_{\rho}(\overline{\chi}+\chi,\overline{\zeta}+\eta,\overline{\chi}+\chi)
-e^{(\overline{\zeta}-\overline{\chi})(\eta-\chi)}\mathcal{B}_{\rho}(\overline{\chi}+\chi,\overline{\zeta}+\eta,\overline{\chi}+\chi)
\\
&
-e^{(\overline{\zeta}-\overline{\chi})(\eta-\chi)}\mathcal{B}_{\rho}(\overline{\chi}+\chi,\overline{\chi}+\chi,\overline{\zeta}+\eta)
+
\mathcal{C}_{\rho}(\overline{\zeta}+\chi,\overline{\chi}+\eta)
-e^{(\overline{\zeta}-\overline{\chi})(\eta-\chi)}
\mathcal{C}_{\rho}(\overline{\chi}+\chi,\overline{\zeta}+\eta).
\end{align*}
Therefore, we finally obtain 
\begin{align*}
&\quad
\lim_{N\to\infty}\frac{1}{N\delta_N(1)}
K_{1,1}^{(N)}(z,\overline{z},w,\wbar|\lambda,\overline{\lambda})
\\
&=
\frac{\mathcal{H}_{\rho}
(\overline{\chi}+\chi,\overline{\zeta}+\chi,\overline{\chi}+\eta,\overline{\zeta}+\eta,(\overline{\zeta}-\overline{\chi})(\eta-\chi))
}{(\overline{\zeta}-\overline{\chi})^2(\eta-\chi)^2\mathcal{L}_{\rho}(\chi+\overline{\chi})}
\varpi(\zeta,\overline{\zeta},\eta,\overline{\eta}|\chi,\overline{\chi})
e^{-(\overline{\zeta}-\overline{\chi})(\eta-\chi)},
\end{align*}
uniformly for $\zeta,\eta,\chi$ in compact subsets of $\C$. 
This completes the proof of \eqref{D11WEAK} and \eqref{ThmWeak3}. 
Similarly, it suffices to consider the front factor for \eqref{D12WEAK}.
From the proof of the diagonal case and by \eqref{D12ver1}, we have
\begin{align*}
&
\quad - \lim_{N\to\infty}\frac{1}{(N\delta_N(1))^2}
\frac{n(z\overline{w}-\frac{L}{n})}{z\overline{w}(1+z\overline{w})}
\widehat{g}_{N-1}^{(n,L)}(z\overline{w})\widehat{\omega}^{(n,L)}(\overline{w},z)
\mathcal{K}^{(N-1)}(\overline{z},w|z,\overline{w})\widehat{\omega}^{(n,L)}(\overline{z},w)
\\
&
=
-
\frac{\mathcal{H}_{\rho}(\overline{\eta}+\zeta,\overline{\zeta}+\zeta,\overline{\eta}+\eta,\overline{\eta}+\zeta,-(\overline{\zeta}-\overline{\eta})(\zeta-\eta))}{(\overline{\zeta}-\overline{\eta})^2(\zeta-\eta)^2}.
\end{align*}
This completes the proof of \eqref{D12WEAK}.
\end{proof}

\subsection{Proof of Theorem~\ref{MainResult} at the singular origin regime}
\begin{proof}[Proof of Theorem~\ref{MainResult} at the singular origin regime ]
For $\zeta,\eta,\chi\i$ of compact subsets of $\C$, we write 
\[
z=z(\zeta)=\frac{\zeta}{\sqrt{N\delta_N(0)}},\quad 
w=w(\eta)=\frac{\eta}{\sqrt{N\delta_N(0)}},\quad
\lambda=\lambda(\chi)=\frac{\zeta}{\sqrt{N\delta_N(0)}}.
\]
For instance, we have
\[
z\wbar=\frac{\zeta\overline{\eta}}{N(b+1)}\bigl ( 1+O(N^{-1}) \bigr). 
\]
Then, by Lemma~\ref{Lem_Asym}, we have
\begin{equation}
\widehat{g}_N^{(n,L)}(\lambda\overline{\lambda})\widehat{\omega}^{(n,L)}(\lambda,\overline{\lambda})
=
N\frac{(\chi\overline{\chi})^L\mathcal{E}_{1,L}(\chi\overline{\chi})e^{-|\chi|^2}}{\chi\overline{\chi}-L}
\bigl ( 1+O(N^{-1}) \bigr),
\end{equation}
which gives 
\[
\frac{n(\lambda\overline{\lambda}-L/n)}{\lambda\overline{\lambda}(1+\lambda\overline{\lambda})}
\widehat{g}_N^{(n,L)}(\lambda\overline{\lambda})\widehat{\omega}^{(n,L)}(\lambda,\overline{\lambda})
=
N^2(b+1)|\chi|^{2L-2}\mathcal{E}_{1,L}(\chi\overline{\chi})e^{-|\chi|^2}
\bigl ( 1+O(N^{-1}) \bigr).
\]
Now, similar to the other case, we compute the asymptotic behavior for each term. 
For $\mathfrak{H}_N^{(n,L)}(\overline{z},w|\overline{\lambda},\lambda)$, we have
\begin{align}
\begin{split}
\mathfrak{H}_N^{(n,L)}(\overline{z},w|\overline{\lambda},\lambda)
&=
(N(b+1))^2(\overline{\zeta}\eta)^Le^{-\frac{1}{2}(|\zeta|^2+|\eta|^2)}
\\
&\times
\frac{\mathcal{E}_{1,L}(\overline{\zeta}\chi|\overline{\chi}\chi)\mathcal{E}_{1,L}(\overline{\chi}\eta|\overline{\chi}\chi)-\mathcal{E}_{1,L}(\overline{\zeta}\eta|\overline{\chi}\chi)\mathcal{E}_{1,L}(\overline{\chi}\chi|\overline{\chi}\chi)}{(\overline{\zeta}-\overline{\chi})^2(\eta-\chi)^2(\overline{\chi}\chi-L)\mathcal{E}_{1,L}(\chi\overline{\chi}|\overline{\chi}\chi)}
\bigl ( 1+O(N^{-1}) \bigr).
\end{split}
\end{align}
For $\mathfrak{F}_N^{(n,L)}(\overline{z},w|\overline{\lambda},\lambda)$, 
since 
\begin{align*}
\mathfrak{I}_N^{(n,L)}(\overline{z},w|\overline{\lambda},\lambda)
&=
(N(b+1))^2\frac{(\overline{\zeta}\eta)^Le^{-\frac{1}{2}(|\zeta|^2+|\eta|^2)}}{(\overline{\zeta}-\overline{\chi})(\eta-\chi)}
\Bigl(
\frac{\mathcal{E}_{1,L}(\overline{\zeta}\eta|\overline{\chi}\chi)}{\overline{\chi}\chi-L}
-
\frac{1}{\Gamma(L)} 
\frac{1}{\chi\overline{\chi}-L}
\Bigr)
\bigl ( 1+O(N^{-1}) \bigr).
\end{align*}
For \eqref{Part2N} and \eqref{Part3N}, similarly, it is easy to see that 
\begin{align*}
&C_K(z,w)
\mathfrak{II}_N^{(n,L)}(\overline{z},w|\overline{\lambda},\lambda)
\varpi^{(n,L)}(z,\overline{z},w,\overline{w}|\lambda,\overline{\lambda})
=
o(1),
\\
&C_K(z,w)
\mathfrak{III}_N^{(n,L)}(\overline{z},w|\overline{\lambda},\lambda)
\varpi^{(n,L)}(z,\overline{z},w,\overline{w}|\lambda,\overline{\lambda})=
o(1), 
\end{align*}
and 
\[
\varpi^{(n,L)}(\overline{z},w|\lambda,\overline{\lambda})
=
\frac{\varpi(\zeta,\overline{\zeta},\eta,\overline{\eta}|\chi,\overline{\chi})}{(b+1)N}\bigl( 1 + O(N^{-1}) \bigr).
\]
Since
\begin{align*}
&\quad \mathfrak{H}_N^{(n,L)}(\overline{z},w|\overline{\lambda},\lambda)
+
\mathfrak{F}_N^{(n,L)}(\overline{z},w|\overline{\lambda},\lambda)
\\
&=
(N(b+1))^2
\frac{\mathcal{S}_L(\overline{\zeta}\chi,\overline{\chi}\eta,\overline{\zeta}\eta,\overline{\chi}\chi,(\overline{\zeta}-\overline{\chi})(\eta-\chi))}{(\overline{\zeta}-\overline{\chi})^2(\eta-\chi)^2\mathcal{E}_{1,L}(\overline{\chi}\chi|\overline{\chi}\chi)}
(\overline{\zeta}\eta)^{L}e^{-\frac{1}{2}(|\zeta|^2+|\eta|^2)}
(1+O(N^{-1})),
\end{align*}
we obtain 
\begin{align*}
&\quad
\lim_{N\to\infty} \frac{1}{N\delta_N(0)}K_{1,1}^{(N)}(z,\overline{z},w,\wbar|\lambda,\overline{\lambda})
\\
&=
\frac{\mathcal{S}_L(\overline{\zeta}\chi,\overline{\chi}\eta,\overline{\zeta}\eta,\overline{\chi}\chi,(\overline{\zeta}-\overline{\chi})(\eta-\chi))}{(\overline{\zeta}-\overline{\chi})^2(\eta-\chi)^2\mathcal{E}_{1,L}(\overline{\chi}\chi|\overline{\chi}\chi)}
\varpi(\zeta,\overline{\zeta},\eta,\overline{\eta}|\chi,\overline{\chi})
(\overline{\zeta}\eta)^{L}e^{-\frac{1}{2}(|\zeta|^2+|\eta|^2)},
\end{align*}
uniformly for $\zeta,\eta,\chi$ in compact subsets of $\C$. 
This completes the proof of \eqref{D11SINGULAR}.
Similarly, it suffices to consider the front factor for \eqref{D12SINGULAR}.
From the proof of the diagonal case and by \eqref{D12ver1}, we have 
\begin{align*}
&
\quad - \lim_{N\to\infty}\frac{1}{N}\frac{1}{(N\delta_N(0))^2}
\frac{n(z\overline{w}-\frac{L}{n})}{z\overline{w}(1+z\overline{w})}
\widehat{g}_{N-1}^{(n,L)}(z\overline{w})\widehat{\omega}^{(n,L)}(\overline{w},z)
\mathcal{K}^{(N-1)}(\overline{z},w|z,\overline{w})\widehat{\omega}^{(n,L)}(\overline{z},w)
\\
&=
-
\frac{\mathcal{E}_{1,L}(\zeta\overline{\eta})|\zeta|^{2L}|\eta|^{2L}}{\zeta\overline{\eta}}e^{-|\zeta|^2-|\eta|^2}
\cK_{1,1}^{(\mathbf{s})}(\overline{\zeta},\eta|\zeta,\overline{\eta}).
\end{align*}
This completes the proof. 
\end{proof}

\section{Concluding remarks}\label{Section6}
In this note, we studied the $k$-th conditional expectation of the diagonal and off-diagonal overlaps for induced spherical unitary ensemble. 
Although this model is non-Gaussian model with the origin point insertion, 
the universality for the overlap in the strongly non-unitary regime was confirmed. 
Also, in \cite{N23}, the weakly non-unitary and the singular origin regime for induced Ginibre unitary ensemble were investigated. We have confirmed that the scaling limits for their regimes also fall into the universality classes. 
Our approach is based on \cite{ATTZ,N23}, and we would like to emphasize that the method developed in \cite{ATTZ} is robust for non-Hermitian random matrix models with a radially symmetric potential such as truncated unitary ensemble. 
Lastly, we conclude this note by making lists for future directions of the overlaps. 
\begin{enumerate}
\item The most important model in random matrices with non-radially symmetric potential is elliptic Ginibre unitary ensemble defined by 
\[
G_N^{(\tau)}=\sqrt{\frac{1+\tau}{2}}H_N^{(1)}+i\sqrt{\frac{1-\tau}{2}}H_N^{(2)},\quad\text{for $\tau\in[-1,1]$},
\]
where $H_N^{(1)}, H_N^{(2)}$ are identically, independent Gaussian unitary ensembles. 
The joint probability distribution function for the eigenvalues of elliptic Ginibre unitary ensemble is given by 
\[
d\P_N^{(\tau)}(\boldsymbol{\zeta}_{(N)})
=
\frac{1}{Z_N^{(\tau)}}\prod_{1\leq i<j\leq N}|\zeta_i-\zeta_j|^2
\prod_{j=1}^{N}e^{-\frac{N}{1-\tau^2}(|\zeta_j|^2-\tau\re \zeta_j^2)}dA(\zeta_j),
\]
where $Z_N^{(\tau)}$ is the partition function. Similar to the discussion in \cite{AFK20,BD21,D21v1}, it is straightforward to see that conditionally on $\{z_1=a\in\C\}$, we have 
\[
\E_N\bigl[\mathcal{O}_{1,1}|z_1=a \bigr]=\prod_{j=2}^N\Bigl(1+\frac{1-\tau^2}{N}\frac{1}{|z_j-a|^2} \Bigr). 
\]
The conditional expectation of the off-diagonal overlap can be similarly computed. Then, in order to study the similar determinantal structure of the overlaps for elliptic Ginibre unitary ensemble as shown in this note, we need to construct a family of planar orthogonal polynomials associated with the following weight function: 
\[
\omega_\tau(z,\overline{z}|a,\overline{a})=\Bigl(\frac{1-\tau^2}{N}+|z-a|^2 \Bigr)
e^{-\frac{N}{1-\tau^2}(|z|^2-\tau\re z^2)}.
\]
However, the moment matrix associated with the above weight function as in \eqref{moment1} is not a tridiagonal moment matrix, and hence, it would be difficulty to work on LDE decomposition. 
Possibly, we need to choose another nice basis such as Hermite polynomial in advance. 
We emphasize that the diagonal overlap was already considered in \cite{CFW24,CW24,FT21} by the supersymmetric method. 
\item 
In this note, we only studied the spherical unitary ensemble. It is natural to study the Pfaffian structure of the overlaps for induced spherical orthogonal/symplectic ensemble, cf. \cite{AFK20,ABN24}. 
\begin{comment}
\item The overlap plays an essential role to analyze a dynamical model for non-Hermitian random matrices with Brownian motion entries. Indeed, in \cite{BD21,EKY21,Y20}, the dynamical extension of Ginibre unitary ensemble, that is to say, the eigenvalues and the overlap processes for the non-Hermitian random matrix with complex Brownian motion entries were investigated. It would be interesting to derive the eigenvalues and overlap processes for the spherical unitary ensemble $S(t)=G_1^{-1}(t)G_2(t)$ with two independent non-Hermitian random matrices with complex Brownian motions $G_1(t),G_2(t)$. 
\end{comment}
\end{enumerate}

\section*{Acknowledgment}
I am deeply thankful to Gernot Akemann and Sung-Soo Byun for insightful discussions in an early stage. 
This work was started during my stay in Korea Institute For Advanced Study (KIAS) and Bielefeld University. 
I am deeply grateful to these institutions for their warm hospitality. 
This work was supported by WISE program (JSPS) at Kyushu University, JSPS KAKENHI Grant Number (B) 18H01124 and 23H01077, and the Deutsche Forschungsgemeinschaft (DFG) grant SFB 1283/2 2021--317210226.
I am very grateful to two anonymous referees for a number of comments and suggestions which have improved the presentation.

%%%%%%%%%%bibliography%%%%%%%%%%%%%%%%%%%%%%%%%%%%%%%%%%%
\bibliographystyle{abbrv}
 
%%%%%%%%%%%%%%%%%%%%%%%%%%%%%%%%%%%%%%%%%%%%%%%%%%%%%%%%%%%%%%	
\end{document}